\documentclass[conference]{IEEEtran}
\usepackage{amsmath,amssymb,amsthm}
\usepackage{xcolor}
\usepackage{comment}
\usepackage{xspace}
\usepackage{enumitem}
\usepackage[ruled,linesnumbered,noend]{algorithm2e}
\usepackage{subfigure}
\usepackage{multirow}
\usepackage{booktabs}
\usepackage{url}
\usepackage{makecell}
\usepackage{adjustbox}
\usepackage{dsfont}
\usepackage{bbm}
\usepackage{colortbl}

\usepackage{mathtools}
\usepackage[hidelinks]{hyperref}
\usepackage{xurl}

\definecolor{Gray}{gray}{0.9}

\newcommand{\naive}{\textt{D-MoCHy}\xspace}
\newcommand{\ata}{\textt{A2A}\xspace}
\newcommand{\family}{\textt{CODA}\xspace}
\newcommand{\adv}{\textt{CODA-A}\xspace}
\newcommand{\exact}{\textt{CODA-E}\xspace}
\newcommand{\mochyadv}{\textt{MoCHy-A+}\xspace}
\newcommand{\mochyex}{\textt{MoCHy-E}\xspace}

\newcommand\red[1]{\textcolor{red}{#1}}
\newcommand\orange[1]{\textcolor{orange}{#1}}
\newcommand\blue[1]{\textcolor{blue}{#1}}

\newcommand{\rome}[1]{\uppercase\expandafter{\romannumeral #1\relax}}

\newcommand{\smallsection}[1]{\noindent\underline{\smash{\textbf{#1:}}}}

\newcommand{\textt}[1]{\scalebox{1.0}[1.0]{\texttt{#1}}}

\newtheorem{property}{Property}

\newtheorem{lemma}{Lemma}

\newtheorem{problem}{Problem}

\newtheorem{proposition}{\textbf{Proposition}}

\SetKwComment{Comment}{$\triangleright$\ }{}

\SetCommentSty{mycommfont}
\SetAlFnt{\small}

\setlength{\textfloatsep}{0.12cm}
\setlength{\dbltextfloatsep}{0.12cm}
\setlength{\abovecaptionskip}{0.12cm}
\setlength{\skip\footins}{0.12cm}

\newcommand{\bus}[1]{\textbf{\underline{\smash{#1}}}}
\newcommand{\ours}{DHGs\xspace}
\newcommand{\our}{DHG\xspace}
\newcommand{\OURS}{\bus{CO}unting of \bus{D}irected Hypergr\bus{A}phlets\xspace}

\makeatletter

\makeatother

\IEEEoverridecommandlockouts
    
\begin{document}

\title{Four-set Hypergraphlets for Characterization of Directed Hypergraphs}

\author{\IEEEauthorblockN{
Heechan Moon*\textsuperscript{1}, 
Hyunju Kim*\textsuperscript{1},
Sunwoo Kim\textsuperscript{1}, and
Kijung Shin\textsuperscript{1, 2}}
\IEEEauthorblockA{\textsuperscript{1}Kim Jaechul Graduate School of AI, KAIST, 
\textsuperscript{2}School of Electrical Engineering, KAIST \\ \{heechan9801, hyunju.kim, kswoo97, kijungs\}@kaist.ac.kr}}

\maketitle
\def\thefootnote{*}\footnotetext{These authors contributed equally to this work.}\def\thefootnote{\arabic{footnote}}

\begin{abstract}
    A directed hypergraph, which consists of nodes and hyperarcs, is a higher-order data structure that naturally models directional group interactions (e.g., chemical reactions of molecules).
Although there have been extensive studies on local structures of (directed) graphs in the real world, those of directed hypergraphs remain unexplored.  
In this work, we focus on measurements, findings, and applications related to local structures of directed hypergraphs, and they together contribute to a systematic understanding of various real-world systems interconnected by directed group interactions.

Our first contribution is to define $91$ \textit{directed hypergraphlets} (\ours), which disjointly categorize directed connections and overlaps among four node sets that compose two incident hyperarcs.
Our second contribution is to develop exact and approximate algorithms for counting the occurrences of each \our.
Our last contribution is to characterize $11$ real-world directed hypergraphs and individual hyperarcs in them using the occurrences of \ours, which reveals clear domain-based local structural patterns.
Our experiments demonstrate that our \our-based characterization gives up to $12\%$ and $33\%$ better performances on hypergraph clustering and hyperarc prediction, respectively, than baseline characterization methods.
Moreover, we show that \adv, which is our proposed approximate algorithm, is up to $32\times$ faster %
than its competitors with similar characterization quality.
    
\end{abstract}

\section{Introduction}
\label{sec:intro}

Many real-world systems are composed of group relations among three or more objects (e.g., co-authorships, group discussions, and market baskets \cite{benson2018simplicial}).
Many such group relations are also \textit{directional}.
One representative example is a chemical reaction, which can be interpreted as a directional relation between (a) a group of reactants and (b) a group of products. Other examples include email communications \cite{kim2022reciprocity}, paper citations~\cite{Tang:08KDD}, online Q/As \cite{stackexchange}, and bitcoin transactions~\cite{wu2021detecting}.

Such directional group relations are %
modeled as a \textit{directed hypergraph} (DH), which consists of nodes and hyperarcs. A \textit{hyperarc} is a directed hyperedge consisting of 
two node sets (head and tail sets).
See Figures~\ref{fig:example:data} and \ref{fig:example:DH} for an example chemical reaction %
dataset and a DH that models it.

Several studies have focused on real-world DHs, examining their structural properties, including connectivity~\cite{moyano2016strong} and reciprocity~\cite{kim2022reciprocity}. In addition, machine learning on DHs, such as classification~\cite{tran2020directed},  prediction~\cite{luo2022directed,gracious2023neural,yadati2020nhp}, generation \cite{kim2022reciprocity}, and question answering \cite{yadati2021knowledge}, has been explored.

In this work, we focus on local structures and contribute to a systematic understanding of real-world DHs by answering the following questions: (Q1) how can we characterize or measure the local structures of DHs? (Q2) what are the ``ingredients'' of real-world DHs and how can they be useful? (Q3) how can we rapidly characterize the local structures of a large-scale DH without even seeing the entire DH?

While local structures have been extensively studied for (directed) graphs~\cite{milo2004superfamilies,milenkovic2008uncovering,sarajlic2016graphlet} and recently for undirected hypergraphs~\cite{lee2020hypergraph,lotito2022higher,lee2021thyme+}, they remain still unexplored for DHs.
Especially, using the occurrences of \textit{graphlets} (i.e., induced subgraph isomorphism classes) or their extensions have been successful, with numerous applications, including graph classification \cite{milo2004superfamilies}, clustering \cite{benson2016higher}, and link prediction \cite{abuoda2020link,feng2020link}.
For undirected hypergraphs, Lee et al.~\cite{lee2020hypergraph} defined \textit{h-motifs} (a.k.a. hypergraphlets) based on the overlapping pattern of three node sets. However, in DHs, four sets (and also directions) are involved even in the smallest possible substructure (i.e., two hyperarcs), as shown in Figure~\ref{fig:example:DH}.

\begin{figure}[t!]
     \centering
     \subfigure[ \label{fig:example:data}Example dataset]{%
         \centering
         \includegraphics[width=0.3\linewidth]{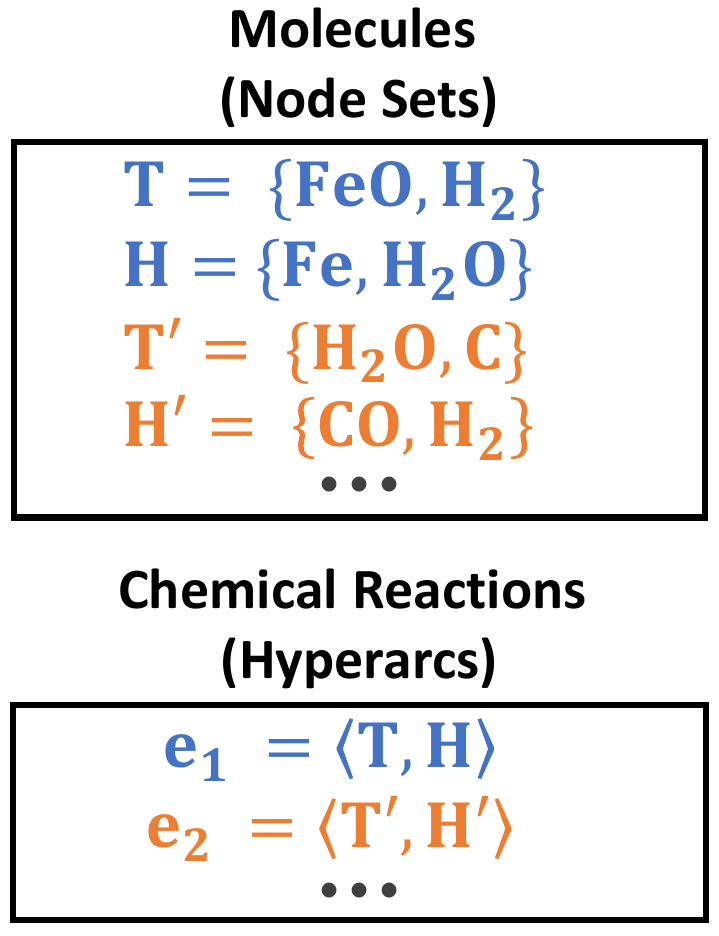}
     }
     \subfigure[ \label{fig:example:DH}Hypergraph]{%
         \centering
         \includegraphics[width=0.3\linewidth]{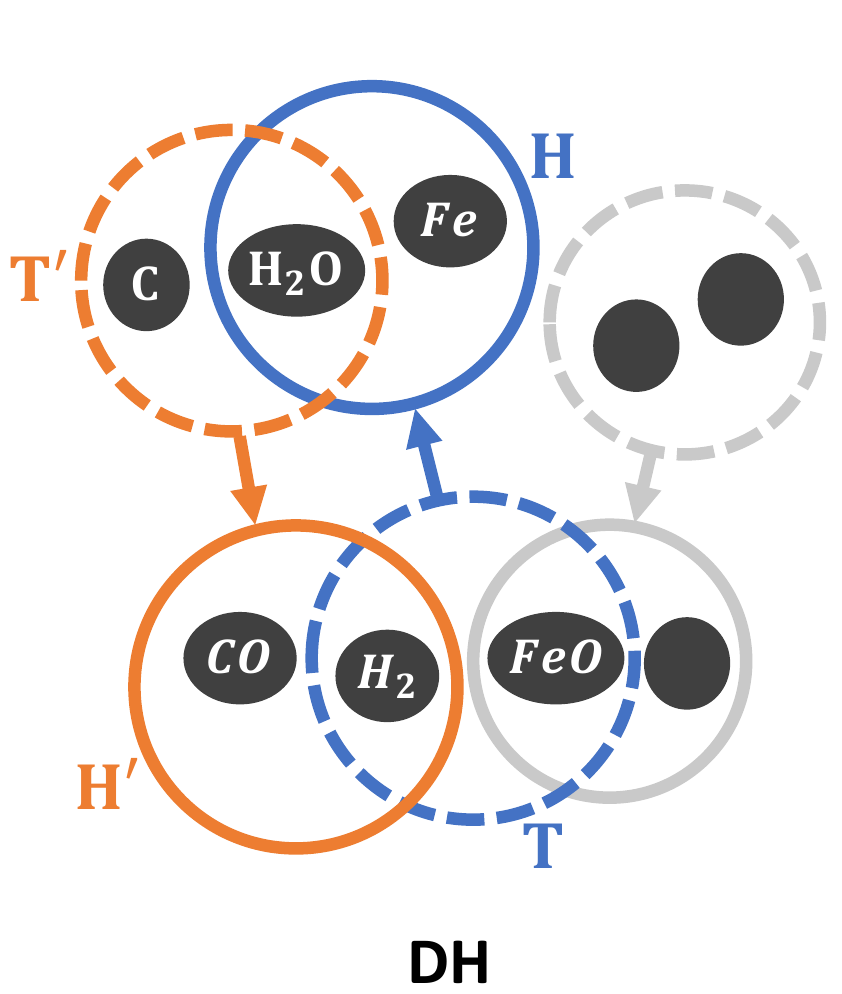}
     }
    \subfigure[\label{fig:example:DHG} DHG-74]{%
         \centering
         \includegraphics[width=0.3\linewidth]{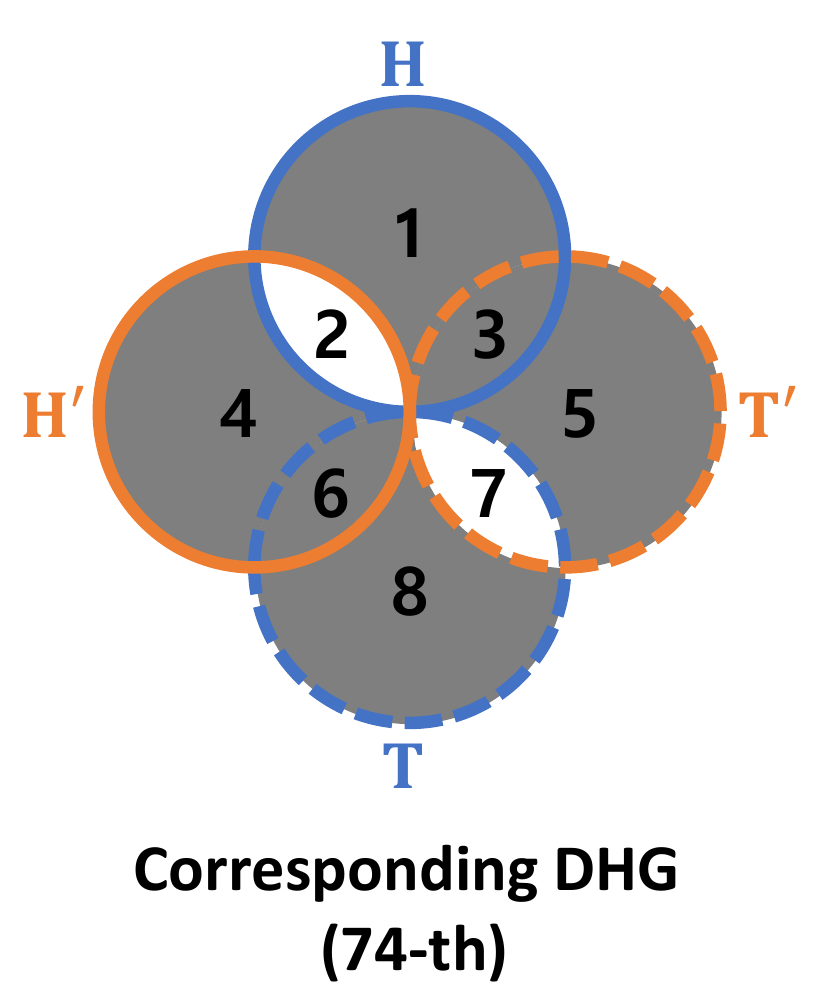}
 } \\
 \vspace{-1mm}
\caption{\label{fig:example} (a) An example chemical reaction dataset with four sets of molecules.
(b) A directed hypergraph that represents the dataset, (c) the directed hypergraphlet (\our) that corresponds to the \our instance depicted with \textbf{\orange{orange}} and \textbf{\blue{blue}} colors in (b).
Note that each of the eight regions in (c) is shaded if and only if at least one node exists in the corresponding region of the \our instance in (b).
For example, the region $4$ (i.e.,  $H'\setminus H \setminus T$) in (c) is shaded, and the corresponding region in (b) is not empty (spec., $\{\mathbf{CO}\}$).}
\end{figure}

Related to (Q1), we define $91$ \textit{directed hypergraphlets} (\ours). They disjointly categorize directed connections and overlaps among four node sets (i.e., two head sets and two tail sets) composing two incident hyperarcs. 
Specifically, they are defined by the patterns based on the emptiness of eight subsets of nodes, resulting from the overlaps of four sets containing directional information.
See Figures~\ref{fig:example:DH} and \ref{fig:example:DHG}
for an example \our and its instance.

Regarding (Q2), we characterize $11$ real-world %
DHs and individual hyperarcs in them using the occurrences of \ours, which reveals that DHs from the same domain share similar local structures.
We also show that using the outputs of our \our-based characterization method as input features results in up to $12\%$ and $33\%$ better performances on hypergraph clustering and hyperarc prediction, resp., than baseline methods, including h-motifs.

Our last contribution, related to (Q3), is the development of  \family (\OURS), a family of fast algorithms for counting the occurrences of each \our.
We propose an exact algorithm \exact and a sampling-based approximate algorithm \adv, proving their accuracy and complexity.
Notably, by prioritizing minority \ours without requiring preprocessing, \adv achieves similar characterization quality up to $32\times$ faster than its best competitors.

In short, our contributions are four-fold:
\begin{itemize}[leftmargin=*]
    \item \textbf{New Concepts.} To the best of our knowledge, we are the first to extend the notion of graphlets to directed hypergraphs (DHs), which lead to 91 directed hypergraphlets (\ours).
    \item \textbf{Discoveries.} Using \ours, we examine and  compare the local structures of $11$ real-world DHs. 
    \item \textbf{Applications.} We demonstrate successful applications of \ours in hypergraph clustering and hyperarc prediction. Especially, we numerically support the superiority of our \our-based characterization over other methods.
    \item \textbf{Fast Algorithms.} We develop fast and theoretically-sound algorithms for counting \our instances.
    We show that our approximate algorithm \adv significantly outperforms baseline approaches in terms of speed, space, and accuracy.
\end{itemize}
For \textbf{reproducibility}, we provide the code and datasets at~\cite{appendix}.

In Section~\ref{sec:related}, we review preliminaries and related studies. In Section~\ref{sec:concepts}, we introduce directed hypergraphlets with their applications.
In Section~\ref{sec:method}, we propose counting algorithms with theoretical analysis. In Section~\ref{sec:experiments}, we present experimental results. In Section~\ref{sec:conclusion}, we conclude our work.

\section{Preliminaries and Related Work}
\label{sec:related}
In this section, we introduce some preliminaries and related studies.
We list the frequently-used symbols in Table \ref{tab:notation}.
We let $[n]:=\{1,\cdots,n\}$ be the set of integers from $1$ to $n$.

\subsection{Preliminaries} 
\smallsection{Basic concepts} A \textit{directed hypergraph} $G=(V,E)$ is an ordered pair of a node set $V = \{v_{1}, \cdots , v_{\lvert V\vert }\}$ and a hyperarc set $E = \{e_{1}, \cdots e_{\lvert E \vert}\}$.
Each hyperarc $e_i = \langle T_i, H_i\rangle$ is defined as an ordered pair of a tail set $T_i \subseteq V$ and a head set $H_i \subseteq V$.
In this work, we assume that $G$ is free of duplicate hyperarcs and
self-loops (i.e., $T_i\cap H_i=\emptyset$, $\forall e_i\in E$), and the rationale for assuming self-loop-free hypergraphs is discussed in Section \ref{sec:definition}.
We generally omit the subscript $i$ unless needed to avoid ambiguity.
For each hyperarc $e=\langle T, H\rangle$, we use $\bar{e}:=T \cup H$ to denote the set of nodes contained in $T$ or $H$. %
We let $E_v:=\{e\in E : v\in \bar{e}\}$ be the set of hyperarcs containing $v$, and let the \textit{degree} $d_v:=|E_v|$ of $v$ be the size of the set.
Two distinct hyperarcs $e$ and $e'$ \textit{incident} if $\bar{e}\cap \bar{e}' \neq \emptyset$.
For each hyperarc $e\in E$, 
$N_e:=\{e' \in E\setminus\{e\}: \bar{e}\cap \bar{e}' \neq \emptyset \}$ denotes the set of hyperarcs incident to  $e$.

\begin{figure*}[t!]
    \vspace{-3mm}
    \centering
    \includegraphics[width=\linewidth]{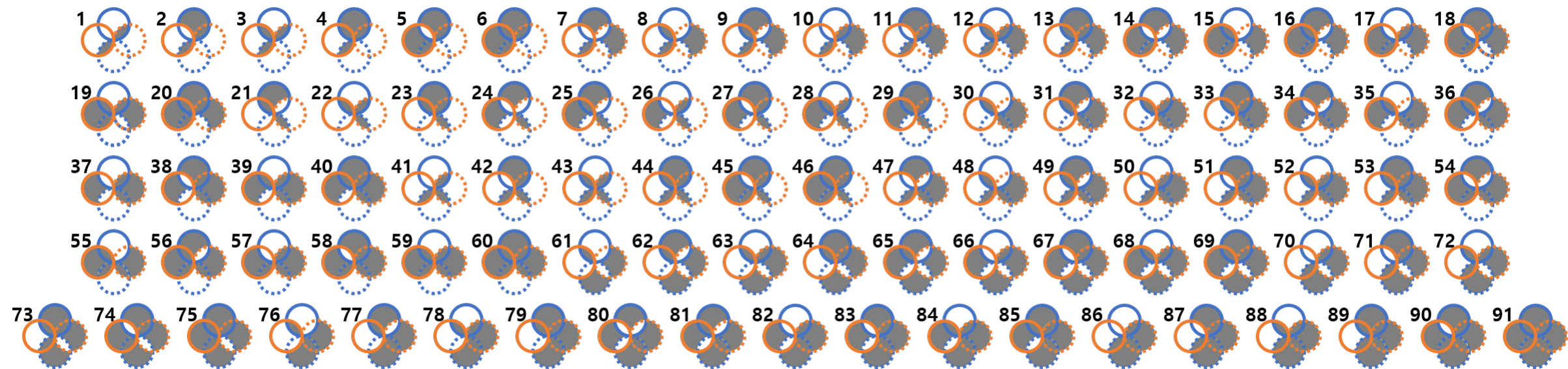}
     \\ \vspace{-2mm}
     \caption{
     91 directed hypergraphlets (DHGs). Each of the two hyperarcs (one is in blue and the other is in orange) is composed
of a tail set (dotted) and a head set (real-line). Each area is shaded if and only if there is at least one element (i.e., node) in it.
     }
     \label{fig:DHG}
\end{figure*}

\begin{table}[t]
    \centering
    \caption{\label{tab:notation} Frequently-used notations}
    \vspace{-2mm}
    \begin{adjustbox}{max width=\linewidth}
    \begin{tabular}{c|l}
        \toprule
        \textbf{Symbol} & \textbf{Definition} \\
        \midrule
        $[n]$ & $\{1,2,\cdots, n\}$ where $n\in \mathbb{N}$\\
         $\lvert A\vert$ & cardinality of a set $A$ i.e., number of elements in $A$ \\
        \midrule
        $G = (V,E)$ & directed hypergraph with nodes $V$ and hyperarcs $E$ \\ 
        $v$ & node in $G$ \\
        $e = \langle T, H\rangle $ & hyperarc with a tail set $T$  and a head set $H$ in $G$ \\
        $\bar{e}$ & set of nodes contained in $e$, i.e., $T\cup H$\\
        $E_v$ & hyperarcs that contain a node $v$ , i.e.,
        $\{e\in E:  v\in \bar{e}\}$ \\
        $d_v$ & degree of a node $v$, i.e., $|E_v|$\\
        $N_e$ & hyperarcs incident to a hyperarc $e$, i.e., $\{e'\in E\setminus\{e\}:  \bar{e}\cap \bar{e}'\neq \emptyset\}$ \\
        \midrule
        $\Omega$ & set of incident hyperarc pairs, i.e., line graph \\
        $f: \Omega\rightarrow [m]$ & function that outputs the \our index corresponding to $(e, e')\in \Omega$\\
        $\Omega_i$ & subset of $\Omega$ such that $f(e, e')=i$\\
        $C[i]$ & count of the instances of \our-$i$\\
        $m$ & number of \ours, i.e., $m=91$ \\
    \bottomrule
    \end{tabular}
    \end{adjustbox}
\end{table}

\subsection{Related work}

\smallsection{Applications and patterns of directed hypergraphs (DHs)}
DHs have been used in various domains to model chemical reactions \cite{ajemni2014modeling}, metabolisms \cite{pearcy2014hypergraph}, bitcoin transactions \cite{ranshous2017exchange}, citation networks \cite{yadati2021graph}, road networks \cite{luo2022directed}, and email communications \cite{kim2022reciprocity}, etc.
Several works explore machine learning on DHs, including  classification \cite{tran2020directed},  prediction \cite{luo2022directed,gracious2023neural,yadati2020nhp},  generation \cite{kim2022reciprocity}, and question answering \cite{yadati2021knowledge}.

The wide applicability of DHs has encouraged the study of the structural properties of real-world DHs, i.e., DHs that model real-world complex systems.
Kim et al. \cite{kim2022reciprocity} examined how reciprocal hyperarcs in real-world DHs are. To this end, they defined a principled measure of reciprocity of DHs, and using it, they revealed that hyperarcs in real-world DHs tend to be more reciprocal than those in random DHs. They also showed that the reciprocal structures could be reproduced by a simple preferential-attachment-like mechanism. %
Moyano et al. \cite{moyano2016strong} 
analyzed the strongly connected components (SCCs)\footnote{
An SCC in a directed hypergraph is a maximal subhypergraph where there exists a directed path between any two hyperedges in it.} of OWL ontologies after modeling them as DHs, and to this end, they proposed fast algorithms for finding SCCs in DHs.
In addition, a number of studies have focused on the structural properties of undirected hypergraphs \cite{benson2018simplicial,comrie2021hypergraph,benson2018sequences,lee2021hyperedges,do2020structural}. 

\smallsection{Graphlet analysis of (directed) graphs}
There exists a long line of research on local structures of (directed) graphs using \textit{graphlets} \cite{milo2004superfamilies,milenkovic2008uncovering,sarajlic2016graphlet}, which are induced isomorphism classes of subgraphs with a fixed number of nodes.
As extended to DHs in Section \ref{sec:method:characterize:hypergraph}, they (a) count the instances of each graphlet in a given graph and (b) measure its significance by comparing it with the count in randomized graphs.

The significances of all graphlets, which we call the characteristic profile (CP), summarize the local structures of the graph. 
CPs enable us to directly compare (e.g., measure the similarity between) the local structures of two graphs of different sizes.
In addition to characterization, graphlets have been proven helpful in various applications, including clustering \cite{benson2016higher}, link prediction \cite{abuoda2020link,feng2020link}, and malware detection \cite{gao2018android}.

Counting the instance of each graphlet is computationally challenging, and thus a large number of exact and approximate counting algorithms have been developed \cite{hovcevar2014combinatorial, ahmed2015efficient,pinar2017escape,bressan2017counting,bressan2018motif,bressan2019motivo}. 
It should be noticed that the computational challenge that these algorithms address comes from the increase of the size of each graphlet, and this is different from the challenge (see Section \ref{sec:weight}) in counting the instances of \ours, where the size (i.e., the number of hyperarcs) is fixed, and thus their ideas are not directly applicable to our problem.

\smallsection{Graphlet analysis of undirected hypergraphs}
Lotito et al. \cite{lotito2022higher} directly extended the notion of graphlets to undirected hypergraphs. They defined \textit{higher-order motifs}, which are isomorphic classes of connected sub-hypergraphs composed of $n$ (spec., $3$ or $4$) nodes. However, this notion is not appropriate when the size of a hyperedge is unbounded or at least can be greater than $n$.
Discarding the hyperedges of size bigger than $n$ or generating new hyperedges through the intersection with a set of $n$ nodes leads to distortion and loss of information that original hypergraphs contain.

In order to overcome such an issue, Lee et al. \cite{lee2020hypergraph} defined \textit{h-motifs} (a.k.a., hypergraph motifs) based on the emptiness of seven regions occurred by three overlapping hyperedges (i.e., node sets), independently of the sizes of hyperedges.
They also provided approximate counting algorithms of defined motifs, based on which the baseline counting algorithm in Section \ref{sec:approx} is designed.
The notion of h-motifs was extended to 3h-motifs \cite{lee2023hypergraph} and TH-motifs \cite{lee2021thyme+} by considering the cardinality (not just emptiness) of the aforementioned regions and the temporal order of hyperedges, respectively.
Below, we provide a more detailed explanation of the conceptual and algorithmic distinctions between our work and the studies conducted in \cite{lee2020hypergraph}, \cite{lee2021thyme+}, and \cite{lee2023hypergraph}.

\begin{itemize}[leftmargin=*]
    \item \textbf{Conceptual differences}: All h-motifs, TH-motifs, and 3h-motifs do not consider the directionality of group interactions, which is an important property of many real-world complex systems. The direct extension of h-motifs to directed hypergraphs (DHs) is not straightforward since h-motif cannot distinguish the role of hyperedges (i.e., sets considered in h-motifs), whereas each hyperarc of DHs consists of two node sets with different semantics (i.e., head and tail). Thus, even if h-motifs are extended to four hyperedges, what it describes is not equivalent to four sets derived from two hyperarcs. Loss of directional information in a naive extension of h-motifs to DHs leads to weak characterization power, as shown in Sections \ref{exp:domain} and \ref{sec:hypergraph-hyperarc-prediction}.
    
    \item \textbf{Algorithmic differences}: Our baseline method, \naive, extends \mochyadv, the most advanced algorithm in \cite{lee2020hypergraph}, to the considered problem, as both methods involve uniformly sampling a pair of incident hyperedges based on a line graph (defined in Section~\ref{sec:definition}).
    Our proposed algorithm, \adv, is different from \mochyadv in that it does not require the computation of a line graph, which is time-consuming and space-intensive.
    Moreover, to reliably estimate the counts of the instances of minority \ours, %
    \adv employs non-uniform sampling.
    Due to these differences, \adv significantly outperforms \naive, as shown in Section \ref{sec:exp:q3}.
    The algorithmic contributions of \cite{lee2021thyme+} and \cite{lee2023hypergraph} are about incorporating temporal information (e.g., sampling of time intervals)
    and dealing with subset cardinality. Thus, they are orthogonal to ours for static DHs.

\end{itemize}

\section{Proposed Concepts \& Applications}
\label{sec:concepts}

In this section, we define the notion of \textit{directed hypergraphlets} (\ours) and discuss how they can be used to characterize (local structures of) directed hypergraphs (DHs) and hyperarcs.

\subsection{Definition of Directed Hypergraphlets}
\label{sec:definition}

\smallsection{Desirable properties}
We aim to define the notion of graphlets (i.e., classes of equivalent subhypergraph) for directed hypergraphs %
to satisfy the following  properties:
\begin{property}[Expressiveness]
Both group and directional information should be captured, without any distortion or loss of information in DHs. Especially, each hyperarc in DHs should not be divided or impaired, regardless of its size.
\end{property}

Note that (a) using directed graphlets \cite{sarajlic2016graphlet} after reducing a DH into a directed graph and (b) using h-motifs
after ignoring directional information in DHs do not satisfy this property.

\smallsection{Definition}
We define directed hypergraphlets (\ours) so that they describe the patterns occurring between any two incident hyperarcs $e=\langle T, H\rangle$ and $e'=\langle T', H' \rangle$, or equivalently among the four node sets (i.e., $T$, $H$, $T'$, and $H'$) that compose the hyperarcs.
Specifically, 
in addition to whether each node set is a head set or a tail set (i.e., directional information) and which hyperarc(s) each node set belongs to (i.e., group information), we check the emptiness of each of the following eight subsets of the node sets: %
(1) $H \setminus H' \setminus T'$, (2) $H \cap H'$, (3) $H \cap T'$, (4) $H' \setminus H \setminus T$, (5) $T' \setminus H \setminus T$, (6) $H' \cap T$, (7) $T \cap T'$, and (8) $T \setminus H' \setminus T'$. 
It should be noticed that the eight subsets correspond to the eight areas in the Venn Diagram for four sets (see Figure \ref{fig:example:DHG} for an example) where, since we assume DHs that are free of self-loops, $T\cap H= \emptyset$ and $T'\cap H'= \emptyset$ always hold.
Out of $6144$ possible cases (the combination of $2^8$ patterns and $4!$ permutations of four sets),
91 distinct \ours in Figure \ref{fig:DHG} remain after excluding (a) duplicate ($H=H'$ and $T=T'$), (b) non-incident ($\bar{e}\cap \bar{e}'=\emptyset$), and (c) symmetric ones.
We call each $i$-th \our in  Figure \ref{fig:DHG} \our-$i$.

\smallsection{An example \our and its instance}
For the \our instance in Figure \ref{fig:example:DH}, the eight areas based on which \ours are defined are:
(1) $H \setminus H' \setminus T'=\{Fe\}$, (2) $H \cap H'=\emptyset$, (3) $H \cap T'=\{H_2O\}$, (4) $H' \setminus H \setminus T=\{CO\}$, (5) $T' \setminus H \setminus T=\{C\}$, (6) $H' \cap T=\{H_2\}$, (7) $T \cap T=\emptyset$, and (8) $T \setminus H' \setminus T'=\{FeO\}$. Thus, it corresponds to DHG-74 in Figures \ref{fig:example:DHG} and \ref{fig:DHG}.

\smallsection{Related concepts}
We call a pair of incident hyperarcs an \textit{instance} of (any of) \ours.
Then, we use $\Omega:=\{(e,e') \in {E \choose  2}: \bar{e} \cap \bar{e}' \neq \emptyset\}$ to denote the set of unordered pairs of incident hyperarc pairs (i.e., the set of instances of \ours), which is also known as the \textit{line graph}.
We use $m=91$ to denote the number of \ours, and we
let $f: \Omega\rightarrow [m]$ be the function that outputs the index of \our which each instance belongs to.
Lastly, $\forall i \in [m]$, we let $\Omega_i=\{(e, e')\in \Omega: f(e, e')=i\}$ be the instances of \our-$i$. %

\smallsection{Rationale for assuming self-loop-free DHs} %
When designing \ours, we assume that the head and tail sets of each hyperarc are disjoint and disregard self-loops, where the two sets intersect. This is based on the following reasons:
\begin{itemize}[leftmargin=*]
    \item \textbf{Interpretability:}  %
    Allowing self-loops in \ours results in an excessive number of possible \ours, with a total of \textbf{16,381}. This large number would make it challenging to interpret them intuitively, while the interpretability of graphlets plays a key role in their usefulness since complex. %
    \item \textbf{Simplicity:} It is a common practice in directed graph analysis to remove self-loops as they typically constitute a small proportion of edges and complicate the analysis \cite{milo2004superfamilies,milenkovic2008uncovering,sarajlic2016graphlet}. In the real-world datasets we consider, self-loops comprised an average of $18.7\%$ of the total hyperarcs.
    \item \textbf{Effectiveness:} Even without considering self-loops, \ours provide an effective characterization of directed hypergraphs, as validated experimentally through hypergraph clustering, hyperarc prediction, and temporal analysis in Sections \ref{exp:domain}, \ref{sec:exp_apply}, and \ref{sec:exp_discover}. 
\end{itemize}

\subsection{Application 1. Hypergraph Characterization}
\label{sec:method:characterize:hypergraph}

In this subsection, we describe how \ours can be used to characterize a directed hypergraph (DH), i.e., to represent (the local structures of) a DH as a fixed-size vector.
Note that, except for the definition of \ours, the below procedure has been employed for various types of graphs \cite{milo2004superfamilies,lee2020hypergraph}.

\smallsection{Randomized directed hypergraph}
For randomization, we use the configuration model \cite{chodrow2020configuration} 
that involves repeatedly shuffling the nodes in two randomly-chosen hyperedges.
To extend this model to DH, we apply shuffling separately to head sets and then tail sets.
It results in a randomized DH $G'$ that preserves the distributions of hyperarc sizes and node degrees in $G$.
Refer to Algorithm \ref{algo:configuration} in Appendix \ref{app:random} for details.

\smallsection{Significance of \ours}
The next step is to measure the significance of each \our in $G$ by comparing the instance count of each \our in $G$ with that in the $G'$.
Specifically, for each $i$, the \textit{significance} of \our-$i$ is defined as:
\begin{equation} \label{equ:significance}
    \mu^G_i := \frac{|\Omega^{G}_{i}|-|\Omega^{G'}_{i}|}{|\Omega^{G}_{i}|+|\Omega^{G'}_{i}|+\epsilon},
\end{equation}
where $|\Omega^{G}_{i}|$ and $|\Omega^{G'}_{i}|$ are the counts of the instances of \our-$i$ in $G$ and $G'$, respectively. 
We set $\epsilon$ to 1 throughout this paper, which is the hyperparameter for cases where $|\Omega^{G}_{i}|+|\Omega^{G'}_{i}|=0$.

\smallsection{Characteristic profile (CP)}
Lastly, we normalize the significance of all \ours to obtain the \textit{CP}, the $m$(=$91$)-dimensional vector where each $i$-th element is defined as:
\begin{equation} \label{equ:CP}
    CP_i^G := \frac{\mu_i^G}{\sqrt{\sum_{i \in [m]} {(\mu_i^G)}^2}}.
\end{equation}
The CP of a DH $G$ summarizes the local structural patterns of $G$ defined by \ours. It should be noticed that the L2-norm of CP is always $1$ regardless of the size of $G$, and thus  using CPs, we can directly compare %
the local structures of two DHs of different sizes.
We empirically demonstrate the effectiveness of this characterization method in Section \ref{exp:domain}.

\subsection{Application 2. Node \& Hyperarc Characterization}

\ours can also be used to characterize (i.e., represent as a fixed-size vector) individual nodes and hyperarcs. Then, the outputs can naturally be used as input features for machine learning tasks, such as node classification and hyperarc prediction, as we show the effectiveness in Section \ref{sec:exp_apply}. For characterizing individual nodes and hyperarcs, we use simply absolute counts since the output vectors are typically compared within a DH. Specifically, each hyperearc $e\in E$ in a DH $G$ is characterized as the $m$(=$91$)-dimensional vector where each $i$-th element is defined as $|\{(e',e'')\in \Omega_{i} : e=e' \text{ or } e=e'' \}|$.
Similarly, each node $v\in V$ in a DH $G$ is characterized as the $m$(=$91$)-dimensional vector where each $i$-th element is defined as $|\{(e',e'')\in \Omega_{i} : v\in \bar{e}'\cup \bar{e}'' \}|$.

\section{Proposed Counting Algorithms}
\label{sec:method}

We present \family (\OURS), a family of exact and approximate counting algorithms for \ours, which aim to address the following problem:
\begin{problem}~\label{sec:prob}
    For a given directed hypergraph (DH) $G$ and for every $i\in [m]$, to compute (or accurately estimate) the count of the occurrences of \our-$i$, (i.e., $|\Omega_i|$). %
\end{problem}
Specifically, we first suggest an exact algorithm and then two approximate algorithms. 
Mathematical analysis of their unbiasedness, variance, and complexity is also provided.

\subsection{Exact Algorithm. \exact}\label{sec:exact}  

\begin{algorithm}[t!]
    \small
    \caption{\exact: Exact Counting of \ours}\label{algo:exact}
    \SetKwInput{KwInput}{Input}
    \SetKwInput{KwOutput}{Output}
    \KwInput{(1) a directed hypergraph: $G=(V, E)$}
    \KwOutput{$C[i]$ for every $i\in [m]$}
    $\Omega\leftarrow \emptyset $\\
    $C[i] \leftarrow 0, \forall i \in [m]$ \label{algo:scan:init} 
    \\
    \ForEach{$e_j\in E$}{ \label{algo:exact:start}
        \ForEach{$v\in e_j$} { 
            \ForEach{$e_k$ such that $v\in e_k$ where $j<k$}{
                \If{$(e_j, e_k)\not\in \Omega$}{                        
                    $\Omega \leftarrow \Omega\cup \{(e_j, e_k)\}$ \label{algo:exact:end}\\
                    $C[f(e_j, e_k)]\leftarrow C[f(e_j, e_k)]+1$ \label{algo:exact:iso}\\
                }
            } 
        }
    }
    \Return{$C$} 
\end{algorithm}

\smallsection{Description}
Pseudocode of \exact for exactly counting the instances of \our-$i$ for every $i\in [m]$ in a given DH $G=(V,E)$ is given in Algorithm \ref{algo:exact}.
For each hyperarc $e_j\in E$, it scans every node $v\in e_j$ and finds every $e_k$ where $v\in e_k$ and $j<k$ so that every element in $\Omega$ is checked \textit{once}. %

\smallsection{Theoretical analysis} 
We \textbf{assume throughout the paper} that the input DH $G$ is stored in the adjacent list format using hash tables, and the number of entries in each hash table is maintained, without incurring additional time complexity.

\begin{lemma}[Time complexity of computing $f(e, e')$]\label{lem:exact_complexity}
    Given a pair of incident hyperarcs, $(e, e')$, the time complexity of computing $f(e, e')$ is $O(min(|\bar{e}|, |\bar{e}'|))$.
\end{lemma}
\begin{proof}
     Assume $|\bar{e}|=min(|\bar{e}|, |\bar{e}'|)$, without loss of generality.
    $f(e, e')=f(\langle T, H\rangle, \langle T', H'\rangle)$ can be computed from checking the emptiness of the following eight sets: (1) $H \setminus H' \setminus T'$, (2) $H \cap H'$, (3) $H \cap T'$, (4) $H' \setminus H \setminus T$, (5) $T' \setminus H \setminus T$, (6) $T \cap H'$, (7) $T \cap T'$, (8) $T \setminus H' \setminus T'$.
    Since the cardinalities of $H, T, H', T'$ are obtained in $O(1)$ time by assumption, we compute $H \cap H'$, $H \cap T'$, $T \cap H'$ and $T \cap T'$ in $O(|\bar{e}|)$ time by checking for each node in $H, T$ whether it also exists in $H'$, or $T'$.
    From the cardinalities of intersection sets, we obtain those of the four other sets in $O(1)$ time as follows:
    \begin{enumerate}
        \item $|H \setminus H' \setminus T'|=|H|-|H \cap H'|-|H \cap T'|$
        \item $|H' \setminus H \setminus T|=|H'|-|H \cap H'|-|T \cap H'|$
        \item $|T' \setminus H \setminus T|=|T'|-|H \cap T'|-|T \cap T'|$
        \item $|T \setminus H' \setminus T'|=|T|-|T \cap H'|-|T \cap T'|$
    \end{enumerate}
            \noindent %
            Hence, the time complexity of computing $f(e, e')$ is $O(|\bar{e}|)=O(\min(|\bar{e}|, |\bar{e}'|))$.
\end{proof}

\begin{proposition}[Time and space complexity of \exact] \label{thm:complexity_exact}
    The time complexity of Algorithm \ref{algo:exact} is $O(\sum_{e\in E} |N_e|\cdot |\bar{e}|)$. Its space complexity is $O(\sum_{e\in E}|\bar{e}|+|\Omega|)$.
\end{proposition}
\begin{proof}  Each $(e, e')\in \Omega$ is accessed $O(|\bar{e}\cap \bar{e}'|)$ times, and $f(e, e')$ is calculated exactly once taking $O(\min(|\bar{e}|, |\bar{e}'|))$ time (Lemma \ref{lem:exact_complexity}). Since $|\Omega|=O(\sum_{e\in E} |N_e|)$, the total complexity is $O(\sum_{(e, e')\in \Omega} \min(|\bar{e}|, |\bar{e}'|))\in O(\sum_{e\in E} |N_e|\cdot |\bar{e}|)$. Regarding space complexity, storing the input DH $G$ and $\Omega$ requires $O(\sum_{e\in E}|\bar{e}|)$ and $O(|\Omega|)$ space, respectively.
\end{proof}

\subsection{Approximate Counting 1. \naive}\label{sec:approx}

\smallsection{Description} 
As the first attempt, we consider \naive, described in Algorithm \ref{algo:naive}. 
\naive can be considered as a direct extension of \mochyadv, the most advanced algorithm presented in \cite{lee2020hypergraph} in that both methods involve uniformly sampling a pair of incident hyperedges using precomputed $\Omega$ (i.e., line graph).
It enumerates all pairs of incident hyperarcs to construct $\Omega$ and then repeats choosing one \ours instance uniformly at random $n$ times, where $n$ is given. Since each $(e, e')\in \Omega$ is chosen with probability $p(e, e')=\frac{1}{|\Omega|}$, an amount of $\frac{1}{n\cdot p(e, e')}$ is added to the (estimated) count of the instances of $f(e, e')$ on each independent trial for unbiasedness, as described below.

\begin{algorithm}[t!]
    \small
    \caption{\naive: Approx. Counting of \ours }\label{algo:naive}
    \SetKwInput{KwInput}{Input}
    \SetKwInput{KwOutput}{Output}
    \KwInput{(1) a directed hypergraph: $G=(V, E)$ \\
            \quad\quad\quad(2) \# of samples $n=q\cdot |E|$ for a given ratio $q$\\} %
    \KwOutput{$C[i]$ for every $i\in [m]$}
     $C[i] \leftarrow 0, \forall i \in [m]$ \\ %
     $\Omega\leftarrow $ the sample space constructed by Line \ref{algo:exact:start}-\ref{algo:exact:end} of Alg. \ref{algo:exact} \label{algo:naive:omega}\\
    \For{$1:n$}{
        Choose $(e, e')\in \Omega$ uniformly at random\\
        $C[f(e, e')]\leftarrow C[f(e, e')]+\frac{|\Omega|}{n}$\label{algo:naive:unb}\\
    }
     \Return{$C$} 
\end{algorithm}

\smallsection{Theoretical analysis} 
    Let $X^i = X_{1}^i$, $X_2^i$, $\dots$ , $X_n^i$ be independent and identically distributed random variables from $\Omega$ to $[m]$ defined as $X^i(e, e')=\frac{1}{n\cdot p(e, e')}\mathbbm{1}[f(e, e')=i]$ where $\mathbbm{1}$ is an indicator function. 
    We can interpret Line \ref{algo:naive:unb} of Algorithm \ref{algo:naive} (equivalently, Line \ref{algo:adv:unb} of Algorithm \ref{algo:adv}) of each $j$-th trial as the result of $X^i_j$.
    Then the output of Algorithm \ref{algo:naive} becomes $C[i]=\sum_{j=1}^n \sum_{(e, e')\in \Omega} X^{i}_j=n\cdot \sum_{(e, e')\in \Omega} X^i$. 
    Using these random variables, we present the unbiasedness, variance, time/space complexity, and sample concentration bound of \naive, below.
    
\begin{proposition}[Unbiasedness of \naive]\label{prop:naive:unb}
   Algorithm \ref{algo:naive} is unbiased, i.e., $\mathbb{E}[C[i]]=|\Omega_i|$, $\forall i\in [m]$. 
\end{proposition}
\begin{proof}
    For each $i\in [m]$, 
    {\small 
    \begin{align*}
        \mathbb{E}[C[i]]&= n\cdot \sum_{(e, e')\in \Omega} \mathbb{E}\left[X^{i}\right]  \because \text{linearity of expectation}
        \\&=n\cdot\sum_{(e, e')\in \Omega} \frac{\mathbbm{1}[f(e, e')=i]}{n\cdot p(e, e')}\cdot p(e, e') 
        \\&=n\cdot \frac{1}{n} \cdot|\{(e,e')\in \Omega: f(e, e')=i\}|=|\Omega_i|. \qedhere
    \end{align*}}
\end{proof}

\begin{proposition}[Variance of \naive]\label{prop:naive:var}
   For each $i\in [m]$, the variance of $C[i]$ obtained by Algorithm \ref{algo:naive} is \begin{align*}Var[C[i]]=\sum_{(e, e')\in \Omega_i}\frac{1}{n}\left(\frac{1}{p(e, e')}-1\right)=\frac{|\Omega_i|(|\Omega|-1)}{n}.\end{align*} 
\end{proposition} 
\begin{proof}
    Since random variables are i.i.d and only one hyperarc pair is considered to increment the count of the corresponding \our on each trial, 
    for all $i\in [m]$,
    {\small 
    \begin{align*}
        &Var[C[i]] =
        n\cdot \sum_{\mathclap{(e, e')\in \Omega} \ } Var[X^i] =n\cdot \sum_{\mathclap{(e, e')\in \Omega} \ } \left(\mathbb{E}[(X^i)^2]-(\mathbb{E}[X^i])^2\right)\\
        & = n\cdot \sum_{\mathclap{(e, e')\in \Omega} \ } \left(\frac{\mathbbm{1}[f(e, e')=i]}{n\cdot p(e, e')}\right)^2 p(e, e')- 
        n\cdot  \sum_{\mathclap{(e, e')\in \Omega} \ }\left(\frac{\mathbbm{1}[f(e, e')=i]}{n}\right)^2 \\&=\frac{|\Omega_i|(|\Omega|-1)}{n}. \qedhere
    \end{align*}}
\end{proof}

\begin{proposition}[Time and space complexity of \naive]\label{prop:naive:comp}
    The time complexity of Algorithm \ref{algo:naive} is $O(\sum_{(e, e')\in \Omega} |\bar{e}\cap \bar{e}'|+n\cdot \max_{e\in E}|\bar{e}|)$. Its space complexity is $O(\sum_{e\in E}|\bar{e}|+|\Omega|)$.
\end{proposition}
\begin{proof}
    Storing the DH $G$ requires $O(\sum_{e\in E}|\bar{e}|)$ space. Constructing $\Omega$ (Line \ref{algo:naive:omega}) takes $O(\sum_{(e, e')\in \Omega}|\bar{e}\cap \bar{e}'|)$ time and $O(|\Omega|)$ space. Sampling itself takes $O(1)$ time and space, but calculating $f(e, e')$ takes $O(\max_{(e, e')\in \Omega}\min(|\bar{e}|, |\bar{e}'|))\in O(\max_{e\in E} |\bar{e}|)$ time (Lemma \ref{lem:exact_complexity}) for each sample $(e, e')$.
\end{proof}

\begin{lemma}[Hoeffding's inequality \cite{hoeffding1994probability}]
\label{lem:hoeff}
Let $X_1, X_2, \dots, X_n$ be independent random variables with $a_j\leq X_j\leq b_j$ for all $j\in [n]$. Consider the sum of random variables $X=X_1+\dots+X_n$. Then for any $t>0$, we have 
\[
    \Pr[|X-\mu|\geq t]\leq 2\exp\left(-\frac{2t^2}{\sum_{j=1}^n (b_j-a_j)^2}\right).
\]
\end{lemma}

\begin{proposition}[Sample concentration bound of \naive]\label{prop:naive:con}
   For any $\epsilon, \delta>0$, if $n\geq \frac{1}{2\epsilon^2}\left(\frac{|\Omega|}{|\Omega_i|}\right)^2\ln (\frac{2}{\delta})$ and $|\Omega_i|>0$, $\Pr(|C[i]-|\Omega_i||\geq |\Omega_i|\cdot \epsilon)\leq \delta$, $\forall i\in [m]$.
\end{proposition} 

\begin{proof}
    Let $t:=|\Omega_i|\cdot\epsilon$. Since $\mathbb{E}[C[i]]=|\Omega_i|$ and $X^i_1$, $X^i_2$, $\cdots$, $X^i_n$ are independent random variables such that $0\leq X^i_j\leq \frac{1}{np(e, e')}=\frac{|\Omega|}{n}$ where $j\in [n]$, we can apply Hoeffding's inequality (Lemma \ref{lem:hoeff}):
    \begin{align*}
        \Pr[|C[i]-|\Omega_i||\geq |\Omega_i|\cdot \epsilon]
        &\leq 2\exp\left(-\frac{2\epsilon^2|\Omega_i|^2}{n(|\Omega|/n)^2}\right)
        \\&\leq 2\exp(-2\epsilon^2n|\Omega_i|^2/|\Omega|^2)\leq \delta. \qedhere
    \end{align*}
\end{proof}

\smallsection{Limitations} 
However, \naive has limits in several aspects. 
Most importantly, the construction of $\Omega$ is expensive in terms of time and space, %
since in many real-world datasets, $|\Omega|$ is significantly larger than $\sum_{e\in E}|\bar{e}|$, as shown in Table~\ref{tab:datasets}.

In addition, \naive is suboptimal for hypergraph characterization, described in Section \ref{sec:method:characterize:hypergraph}. \naive may fail to reliably estimate the counts of the instances of minority \ours, which are small in population, as discussed below. This may act as an obstacle to reliably measuring the CPs.

\vspace{-1mm}
\subsection{Approximate Counting 2. \adv}\label{sec:weight}

\smallsection{Motivation}
In order to overcome the limitations of \naive, regarding minority \ours, we aim to design an advanced approximate algorithm that samples instances of minority \ours with higher probability than \naive. However,  prioritizing the instances of minority \ours during sampling is not straightforward. %
For a solution, we found out through experiments that, in real-world DHs, (1) \ours with a large number of non-empty overlapping regions (i.e., $T\cap H$, $T\cap H'$, $T'\cap H$, and $T'\cap H'$) tend to be minorities, and (2) the number of non-empty overlapping regions in a DHG tend to become larger as $|\bar{e}\cap \bar{e}'|$ is bigger in each \ours instance $(e,e')\in \Omega$ (see Figure \ref{fig:overlap}). It concludes that an instance whose intersection is big is more likely to belong to a minority DHG.  
Therefore, we design \adv so that it samples an instance with a large intersection preferentially (spec., with probability proportional to the size of intersection), without computationally cost preprocessing, which is another limitation of \naive.

\smallsection{Description}
As a kind of weighted sampling, we propose \adv, which is faster and more specialized to sample the instances of minority \ours. Instead of constructing $\Omega$, \adv only requires the weight $w[v]=\binom{d(v)}{2}$ of each node $v$, as described in Algorithm \ref{algo:adv}.

\smallsection{Analysis}
At each trial, $(e, e')\in \Omega$ is sampled with probability $p(e, e')=\frac{|\bar{e}\cap \bar{e}'|}{\sum_{v\in V} w[v]}$, which sums up to 1 by the fact that $\sum_{v\in V} w[v] = \sum_{(e, e')\in \Omega} |\bar{e}\cap \bar{e}'|$. Despite this bias in sampling, the estimated counts from \adv remain unbiased. 
Below, we provide the unbiasedness,
variance, time/space complexity, and sample concentration
bound of \adv.

\begin{algorithm}[t!]
    \small
    \caption{\adv: Approx. Counting of \ours }\label{algo:adv}
    \SetKwInput{KwInput}{Input}
    \SetKwInput{KwOutput}{Output}
    \KwInput{(1) a directed hypergraph: $G=(V, E)$ \\
            \quad\quad\quad(2) \# of samples $n=q\cdot |E|$ for a given ratio $q$\\} %
    \KwOutput{$C[i]$ for every $i\in [m]$}
     $C[i] \leftarrow 0, \forall i \in [m]$ \\ %
     $w[v]\leftarrow \binom{d(v)}{2}, \forall v\in V$ \label{algo:adv:w}\\
    \For{$1:n$}{
        Choose $v$ from distribution $P(v)\propto w[v]$ \\   
        Choose $(e, e')\in \binom{E_v}{2}$ uniformly at random\\
        $C[f(e, e')]\leftarrow C[f(e, e')]+\frac{\sum_{v\in V} w[v]}{n\cdot |\bar{e}\cap \bar{e}'|}$ \label{algo:adv:unb}\\
    }
     \Return{$C$} 
\end{algorithm}

\begin{figure}[t]
    \vspace{-3mm}
     \centering
     \includegraphics[width=0.5\textwidth]{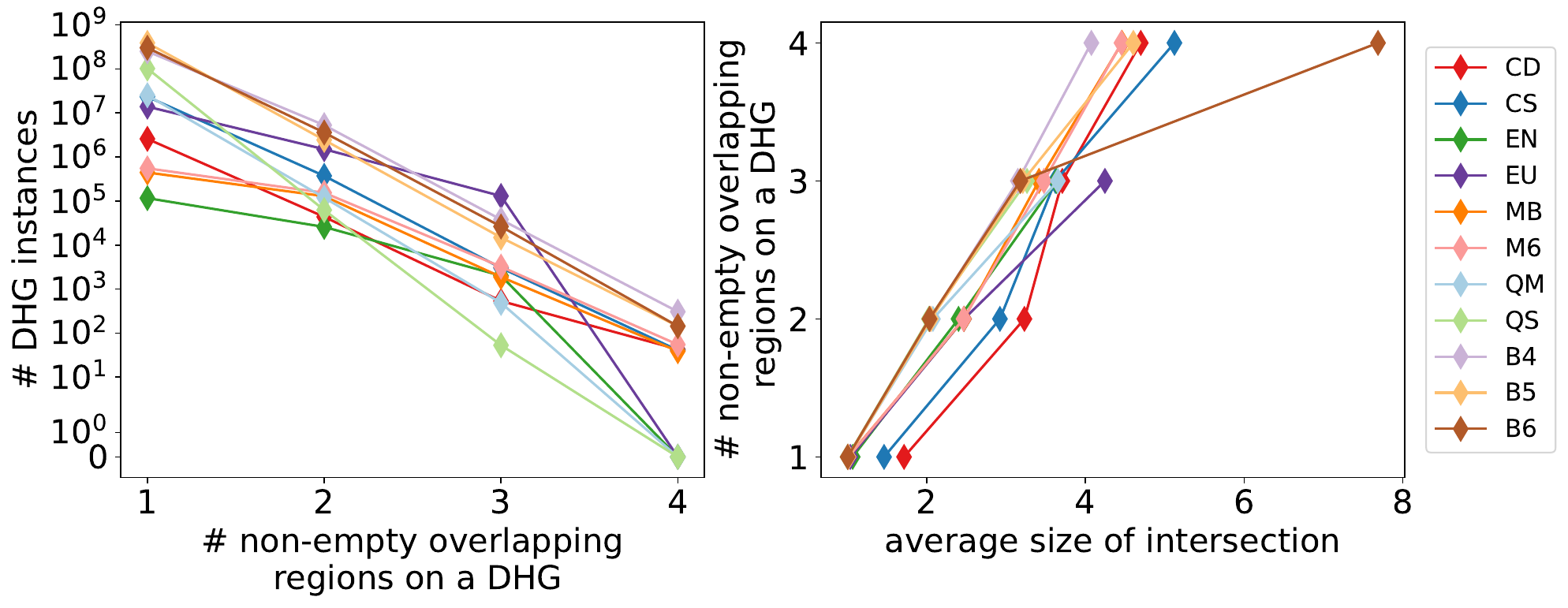} \\
     \vspace{-2mm}
     \caption{As the number of non-empty overlapping regions on a \our increases, the count of its instances tends to be smaller (left). As the size of the intersection (i.e., $|\bar{e}\cap \bar{e}'|$) increases, the number of non-empty overlapping regions on the corresponding \our tends to be larger (right). This implies that an instance whose size of intersection is large is more likely to belong to minority \ours.}
     \label{fig:overlap}
\end{figure}

\begin{proposition}[Unbiasedness of \adv]\label{prop:adv:unb}
   Algorithm \ref{algo:adv} is unbiased, i.e., $\mathbb{E}[C[i]]=|\Omega_i|$, $\forall i\in [m]$. 
\end{proposition}
\begin{proof}
Follow the flow of the proof of Proposition \ref{prop:naive:unb}.
\end{proof}

\begin{proposition}[Variance of \adv]\label{prop:adv:var}
   For each $i\in [m]$, the variance of $C[i]$ obtained from Algorithm \ref{algo:adv} is 
   {\small
   \begin{align*}
   Var[C[i]]=\sum_{\mathclap{(e, e')\in \Omega_i}}\frac{1}{n}\left(\frac{1}{p(e, e')}-1\right)
   =\sum_{\mathclap{(e, e')\in \Omega_i}}\frac{1}{n}\left(\frac{\sum_{v\in V}w[v]}{|\bar{e}\cap \bar{e}'|}-1\right).
   \end{align*}} 
\end{proposition} 
\begin{proof}
Follow the flow of the proof of Proposition \ref{prop:naive:var}.
\end{proof}

\begin{proposition}[Time and space complexity of \adv]
    \label{prop:adv:comp}
    The time complexity of Algorithm \ref{algo:adv} is $O(|V|+n\cdot(\log |V|+\max_{e\in E} |\bar{e}|))$. Its space complexity is $O(\sum_{e\in E}|\bar{e}|+|V|)$, bounded by $O(\sum_{e\in E}|\bar{e}|)$.
\end{proposition}
\begin{proof}
    Storing the DH $G$ requies $O(\sum_{e\in E}|\bar{e}|)$ space.
    Constructing the weight table $w$ (Line \ref{algo:adv:w}) takes $O(|V|)$ time and space by our assumptions in Section \ref{sec:exact}.
    Sampling $v$, $e$, and $e'$ takes       
    $O(\log |V|)$ time, and calculating $f(e, e')$ takes $O(\max_{(e, e')\in \Omega}\min(|\bar{e}|, |\bar{e}'|))\in O(\max_{e\in E} |\bar{e}|)$ time (Lemma \ref{lem:exact_complexity}), without increasing the space complexity.
\end{proof}

\begin{proposition}[Sample concentration bound of \adv]\label{prop:adv:con}
   For each $i\in [m]$, let $W=\sum_{v\in V} w[v]$ and $\gamma_i = \min_{(e,e')\in \Omega_i} |\bar{e}\cap \bar{e'}|$. Then for any $\epsilon, \delta>0$, if $n\geq \frac{1}{2\epsilon^2}\left(\frac{W}{\gamma_i|\Omega_i|}\right)^2\ln (\frac{2}{\delta})$ and $|\Omega_i|>0$, $\Pr(|C[i]-|\Omega_i||\geq |\Omega_i|\cdot \epsilon)\leq \delta$. \qedhere
\end{proposition} 

\begin{proof}
    Let $t:=|\Omega_i|\cdot\epsilon$. Since $\mathbb{E}[C[i]]=|\Omega_i|$ and $X^i_1$, $X^i_2$, $\cdots$, $X^i_n$ are independent random variables such that $0\leq X^i_j\leq \max\frac{1}{np(e, e')}=\frac{W}{n\gamma_i}$ where $j\in [n]$, we can apply Hoeffding's inequality (Lemma \ref{lem:hoeff}):
    \begin{align*}
        \Pr[|C[i]-|\Omega_i||\geq |\Omega_i|\cdot \epsilon]
        &\leq 2\exp\left(-\frac{2\epsilon^2|\Omega_i|^2}{n(W/n\gamma_i)^2}\right)
        \\& \leq 2\exp\left(-2\epsilon^2n\gamma_i^2|\Omega_i|^2/W^2\right)\leq \delta. 
    \end{align*}
    \qedhere
\end{proof}

\smallsection{Advantages}
The superiorities of \adv over \naive are summarized as follows.

\begin{itemize}[leftmargin=*]
    \item For pre-computation, while \naive requires $O(|\Omega|)$ time and space, \adv only takes $O(|V|)$ time and space.
    \item Based on real data statistics, it samples relatively more instances of minority \ours than \naive, leading to higher accuracy in estimating CP values, as shown in Section \ref{sec:exp:q3}. 
\end{itemize}

\section{Experiments}
\label{sec:experiments}

\begin{table}[t!]
    \vspace{-2mm}
    \centering
    \caption{\label{tab:datasets} Summary of 11 directed hypergraphs from 5 domains.
    }
    \vspace{-2mm}
    \scalebox{0.7}{
    \begin{tabular}{l|l|r|r|r|r}
        \toprule
        \textbf{Name} &\textbf{Full Name} & \textbf{$|V|$} & \textbf{$|E|$} & \textbf{$\sum_{e\in E} |\bar{e}|$} & \textbf{$|\Omega|$} \\
        \midrule
        \textt{MB} & \textt{metabolic-iAF1260b}  & $1,668$ & $2,064$ & $8,795$ & $574,769$ \\ 
        \textt{M6} & \textt{metabolic-iJO1366}  & $1,805$ & $2,233$ & $9,696$ & $709,151$ \\
        \midrule
        \textt{EN} & \textt{email-enron} &  $110$ & $1,447$ & $4,057$ & $143,980$ \\ 
        \textt{EU} & \textt{email-eu} &  $986$ & $34,485$ & $88,781$ & $15,512,713$ \\
        \midrule
        \textt{CD} & \textt{citation-data-science} &  $46,646$ & $38,144$ & $284,357$ & $2,619,227$ \\ 
        \textt{CS} & \textt{citation-software} & $94,886$ & $115,617$ & $827,693$ & $23,715,905$ \\ 
        \midrule
        \textt{QM} & \textt{qna-math} & $34,635$ & $83,425$ & $234,091$ & $25,268,639$ \\ 
        \textt{QS} & \textt{qna-server} & $163,508$ & $238,838$ & $659,319$ & $101,898,595$ \\ 
        \midrule
        \textt{B4} & \textt{bitcoin-2014} & $1,697,625$ & $1,164,119$ & $3,917,581$ & $255,367,684$ \\ 
        \textt{B5} & \textt{bitcoin-2015} & $1,961,886$ & $1,237,599$ & $4,317,923$ & $391,330,408$ \\ 
        \textt{B6} & \textt{bitcoin-2016} & $2,009,978$ & $1,293,604$ & $4,317,876$ & $304,737,356$ \\ 
        \bottomrule
    \end{tabular}
    }
\end{table}

We perform experiments on $11$ real-world directed hypergraphs (DHs), aiming to answer the following questions:
\begin{itemize}[leftmargin=*]
    \item \textbf{Q1. Hypergraph Characterization Power:} How accurately can we characterize DHs using \ours?
    Are there any patterns useful for classifying real-world DHs?
    \item \textbf{Q2. Hyperarc Characterization Power:} 
    How accurately can we characterize hyperarcs using \ours?
    Is the characterization method useful for hyperedge prediction?
    \item \textbf{Q3. Performance of Counting Algorithms:} How fast, memory-efficient, and accurate are \our counting algorithms? How do they depend on the number of samples?
    \item \textbf{Q4. Temporal Analysis:} What interesting temporal patterns do \ours reveal in real-world DHs?
\end{itemize}

\subsection{Experimental Setting}
\label{sec:settings}

\smallsection{Dataset} \label{exp:dataset}
We use $11$ real-world directed hypergraph datasets from $5$ different domains.
Some statistics of these datasets are given in Table \ref{tab:datasets}. %
Their semantics, sources, and preprocessing details are provided in Appendix D \cite{appendix}.

\smallsection{Competitors} \label{sec:hypergraph-competitors}
We evaluate the characterization power of \ours by comparing it to three baseline methods: %
\begin{itemize}[leftmargin=*]
    \item \textbf{H-motifs \cite{lee2020hypergraph} and 3h-motifs \cite{lee2023hypergraph} after hypergraph expansion.}
    A hypergraph expansion $H$ of a DH $G=(V, E)$ is an undirected hypergraph with a node set $V$ and a hyperedge set $\{\bar{e}: e \in E\}$.
    That is, the tail set and head set are merged to make an undirected hyperedge.
    We characterize $H$ as $26$ and $431$-dimensional vectors, using h-motifs and 3h-motifs, respectively, as suggested in \cite{lee2020hypergraph}, \cite{lee2023hypergraph}.
    \item \textbf{Triads with di-biclique expansion \cite{holland1977method}} A di-biclique expansion $D$ of a $G=(V, E)$ is a directed graph with a node set $V$ and an edge set $\bigcup_{e=\langle T, H\rangle\in E}\{(t, h): t\in T,\; h\in H\}$, where $t$ and $h$ are a node of the tail and head sets, respectively.
    We characterize $D$ as a $13$-dimensional vector, using triads (i.e., isomorphic classes of weakly connected 3-node directed subgraphs), as in Section~\ref{sec:method:characterize:hypergraph}.
    
\end{itemize}

\begin{figure}[t!]
    \vspace{-3mm}
     \centering
     \includegraphics[width=0.49\textwidth]{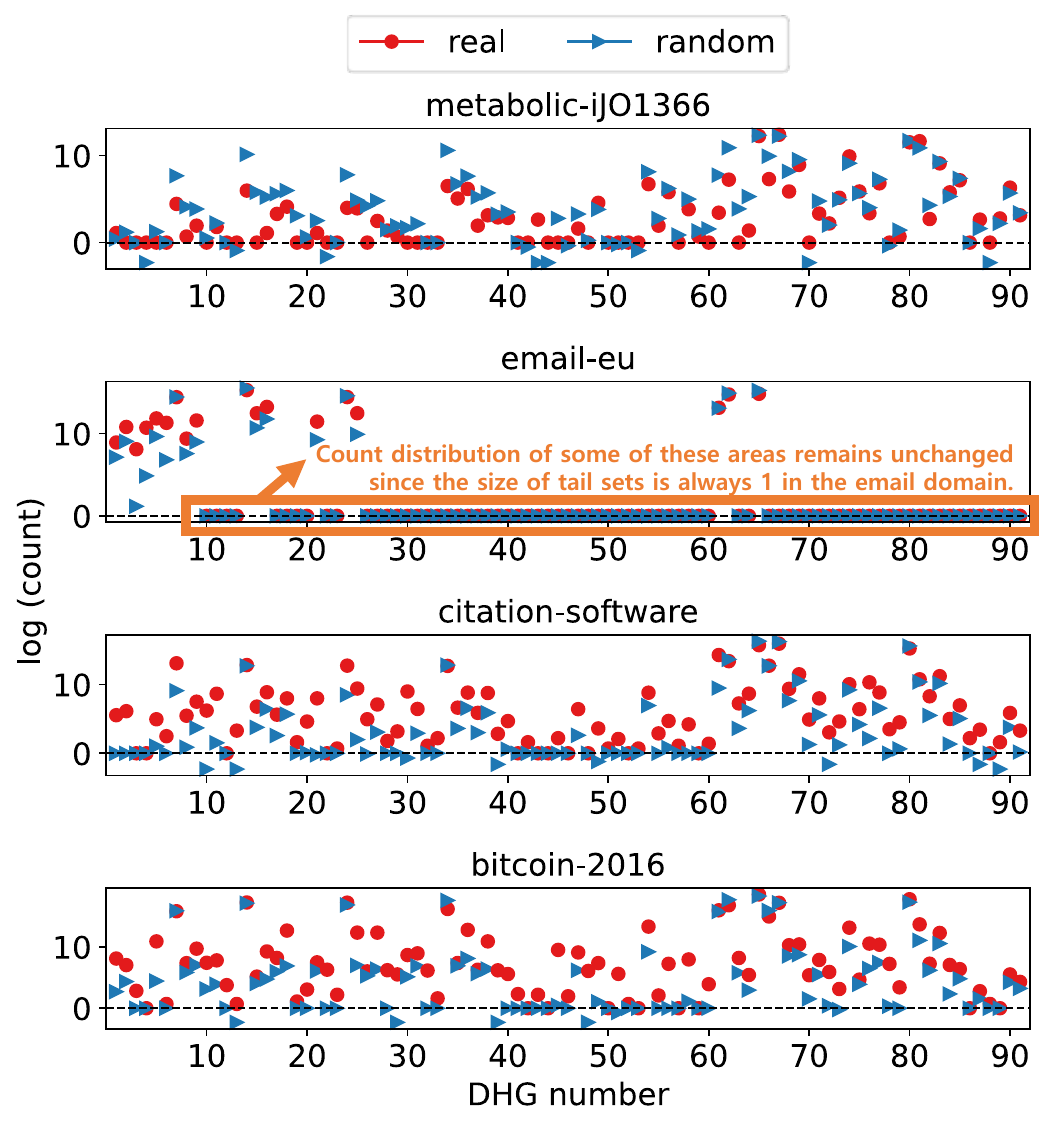} \\
     \vspace{-2mm}
     \caption{\label{fig:count_ratio_citation} Log counts of \ours in real-world and randomized directed hypergraphs (DHs). 
     The counts of \ours are clearly distinguished in real-world and randomized DHs.
     See Appendix E \cite{appendix} for full results on all DHs.}
\end{figure}

\begin{figure}[t!]
    \vspace{-3mm}
    \centering
    \includegraphics[width=\linewidth]{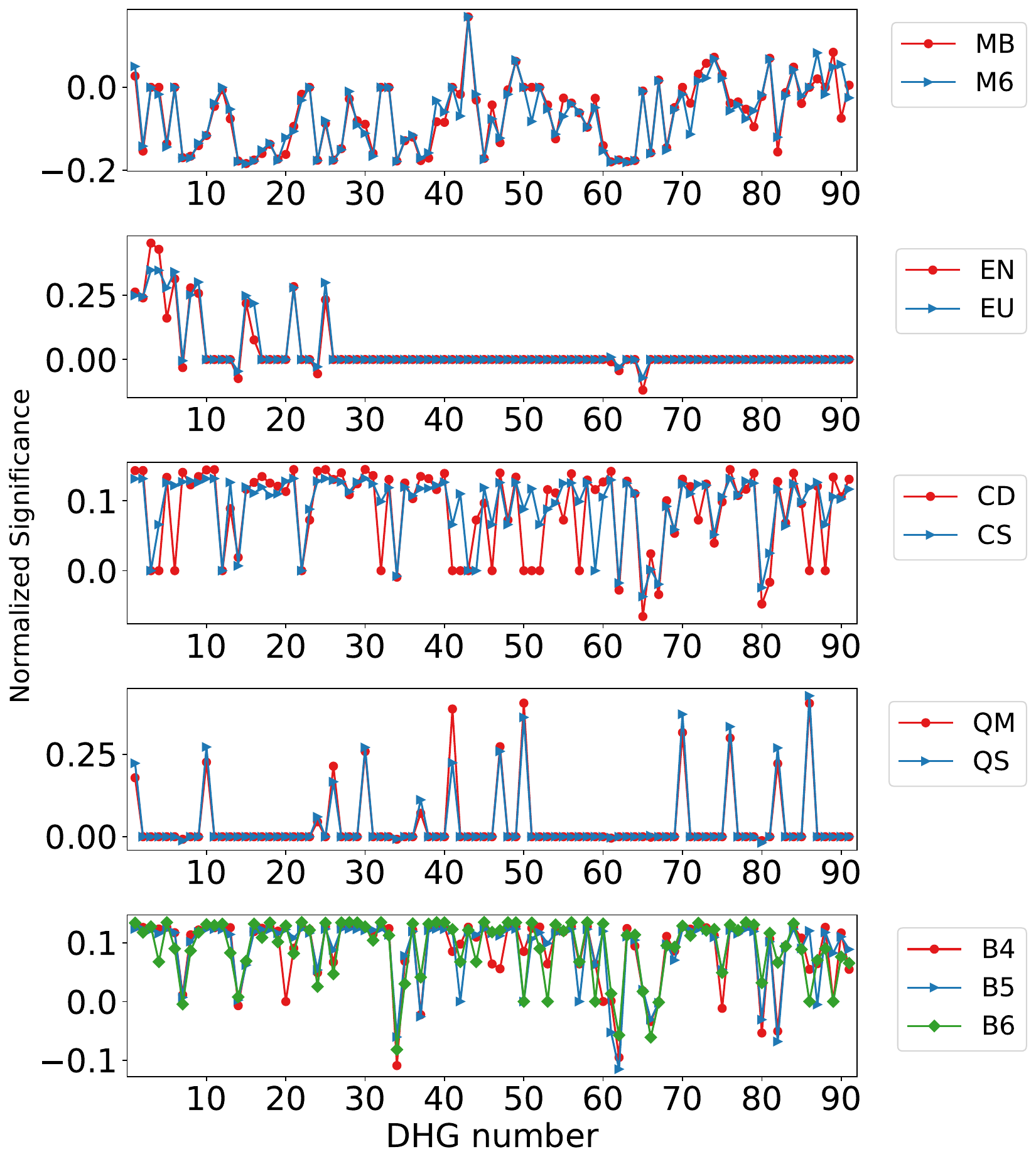} \\
    \vspace{-2mm}
     \caption{\label{fig:CPs_plot} Characteristic profiles (CPs) of \textt{metabolic}, \textt{email}, \textt{citation}, \textt{qna}, and \textt{bitcoin} datasets with ten randomized directed hypergraphs (DHs).
     The CPs of DHs from the same domain tend to be remarkably similar, while those from different domains tend to be different.
     }
\end{figure}

\smallsection{Machine and Implementation} 
We conduct all experiments on a machine with a 3.7GHz Intel i5-9600K CPU and 64GB RAM. 
For \ours, 
we implement all counting algorithms in C++.
For counting h/3h-motifs, we use the official C++ implementations \cite{lee2020hypergraph,lee2023hypergraph}. %
For %
counting triads, we implement a widely-used subquadratic algorithm~\cite{batagelj2001subquadratic} in C++.

\subsection{Q1. Hypergraph Characterization Power} \label{exp:domain}

We examine how accurately \ours characterize real-world directed hypergraphs (DHs), compared to existing methods.

\smallsection{Count distributions}
We analyze the occurrence distributions of \ours in real-world and randomized DHs.
For statistical significance, we generate ten randomized DHs and report the average counts.
Figure~\ref{fig:count_ratio_citation} shows that the counts of \ours in real-world DHs are distinct from those in randomized DHs.
Note that, in the email domain, the counts of some \ours (e.g., \ours-10, -11, and -12, etc.) are always 0, regardless of randomization,  since every tail set in this domain has size 1.

\smallsection{Domain-specific patterns (hypergraph clustering)}
For each real-world DH, we compute characteristic profiles (CPs) by using the average over ten randomized DHs as $|\Omega^{G'}_{i}|$ for each $i$ in Eq.~\eqref{equ:significance}. As shown in Figure~\ref{fig:CPs_plot}, CPs of DHs from the same domain tend to be similar, while those from the different domains tend to be different.
This tendency is confirmed numerically in Figure~\ref{fig:CPs}(a), where we calculate the similarity between CPs, using the Pearson correlation coefficient,
and plot their similarity matrix.
For instance, the similarity between \textt{MB} and \textt{M6} is 0.9515, while the average similarity between \textt{MB} and the other DHs except for \textt{M6} is -0.0017. 
Moreover, we perform spectral clustering using the similarity matrix from CPs as the input, and DHs are perfectly clustered based on their domains, as shown in Figure~\ref{fig:CPs}(e).

\begin{figure*}[t!]
     \centering
     \vspace{-3mm}
     \subfigure[\label{fig:CP-DHG} \ours]{%
         \centering
         \includegraphics[width=0.19\textwidth]{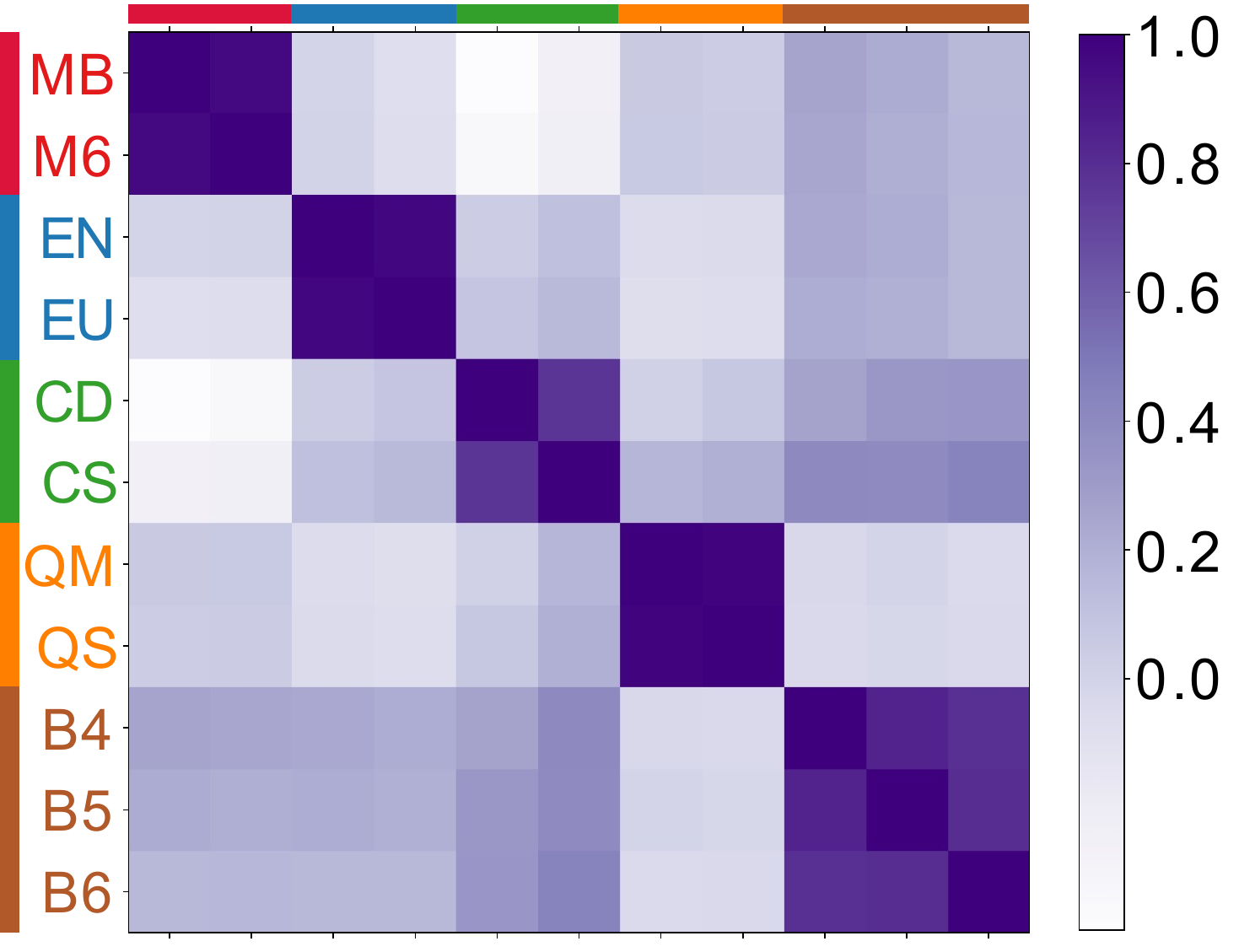}
     }
     \subfigure[\label{fig:CP-hypergraph} h-motifs]{%
         \centering
         \includegraphics[width=0.17\textwidth]{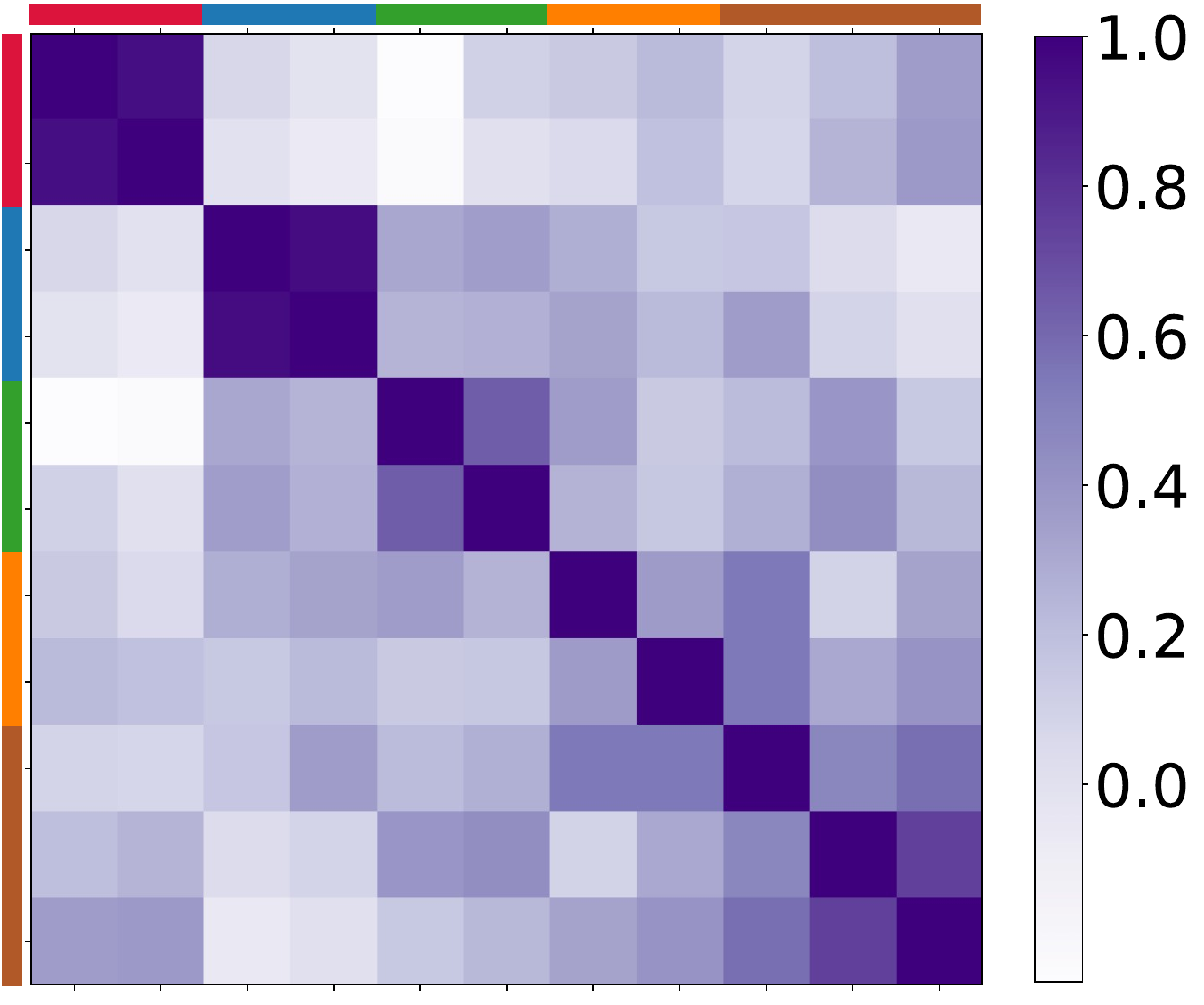}
     }
     \subfigure[\label{fig:CP-3-hypergraph} 3h-motifs]{%
         \centering
         \includegraphics[width=0.17\textwidth]{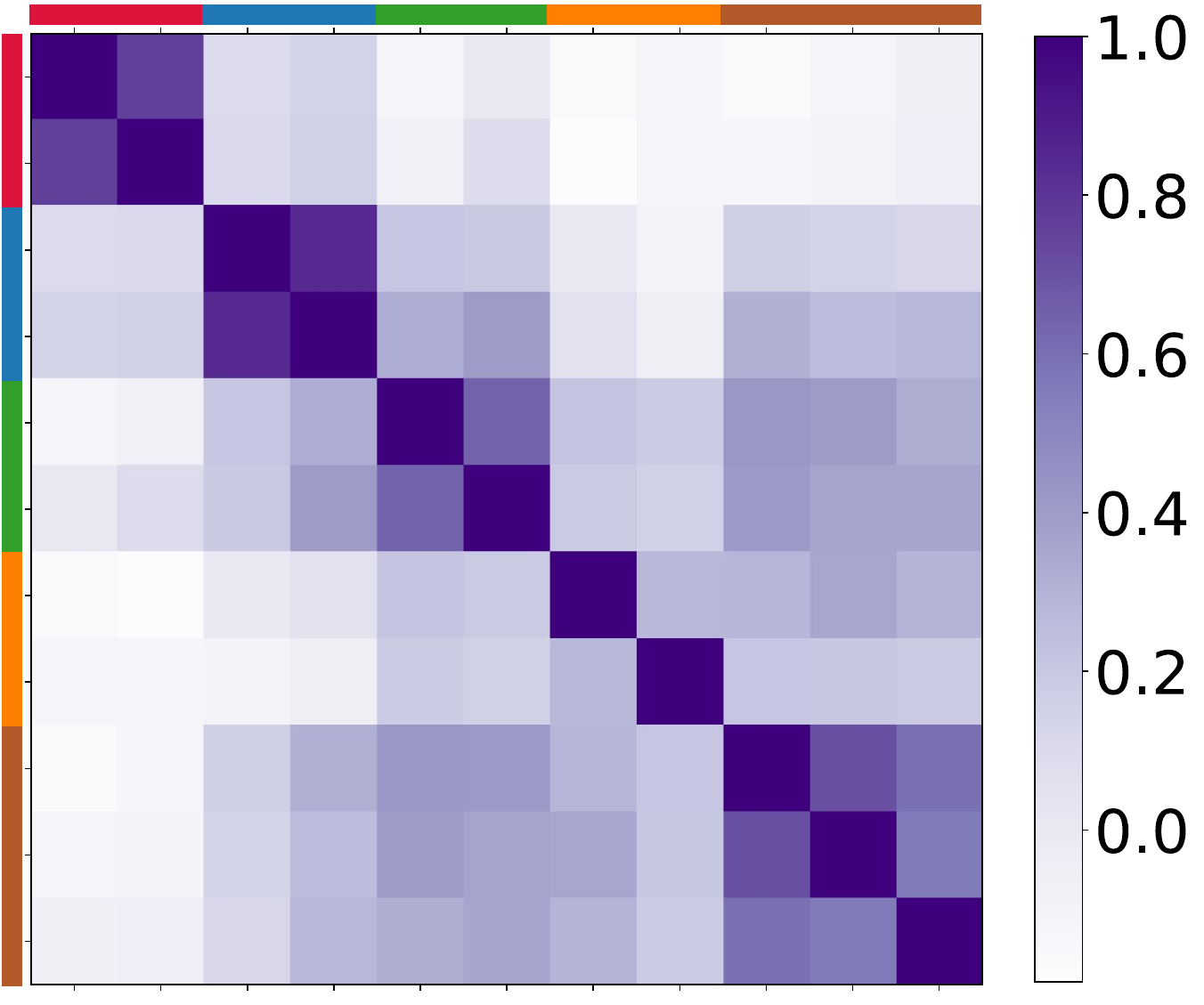}
     }
     \subfigure[\label{fig:CP-di-clique} triads]{%
         \centering
         \includegraphics[width=0.17\textwidth]{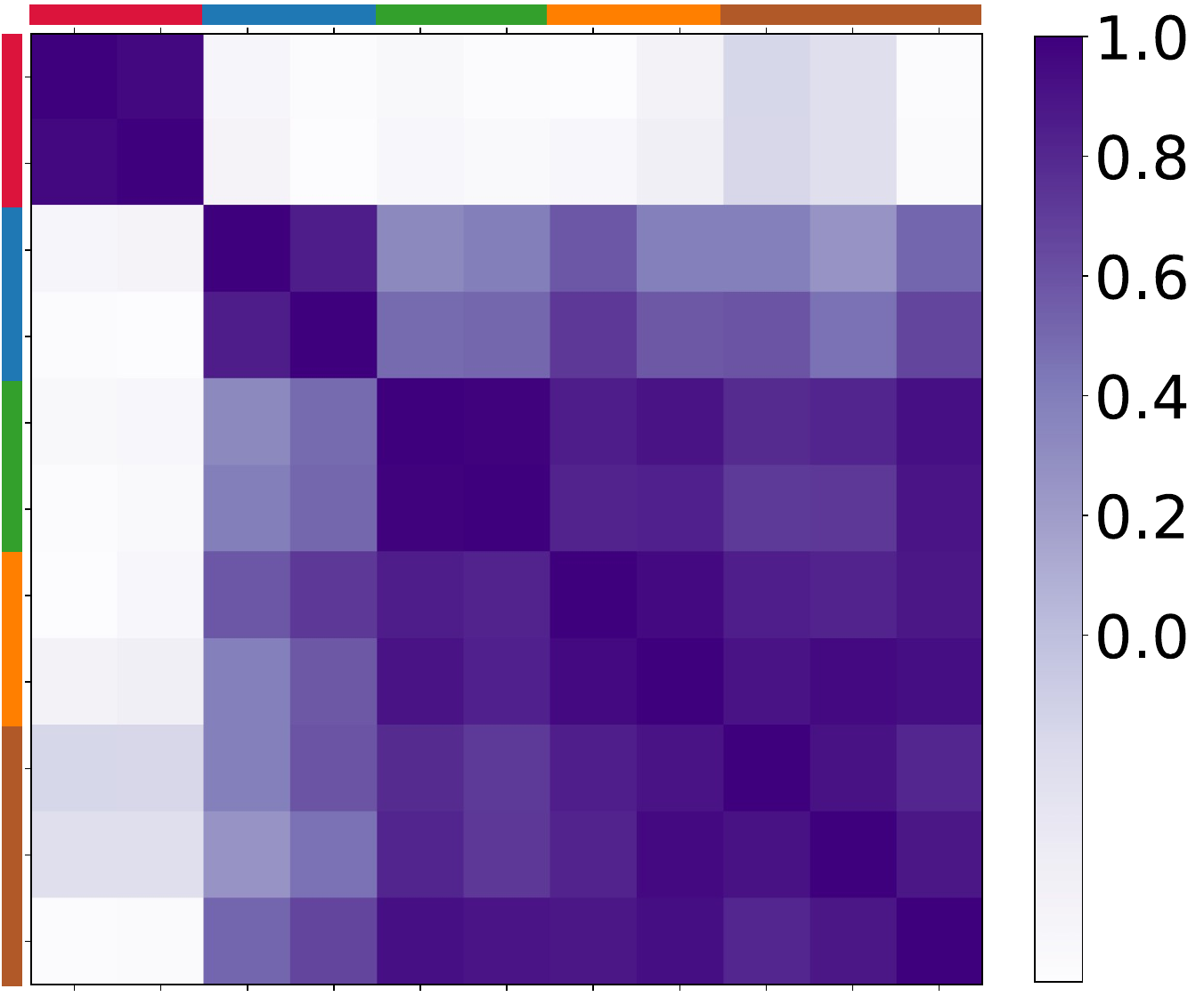}
     }
     \subfigure[\label{fig:CP-clustering} Clustering performance]{%
         \centering
         \includegraphics[width=0.18\textwidth]{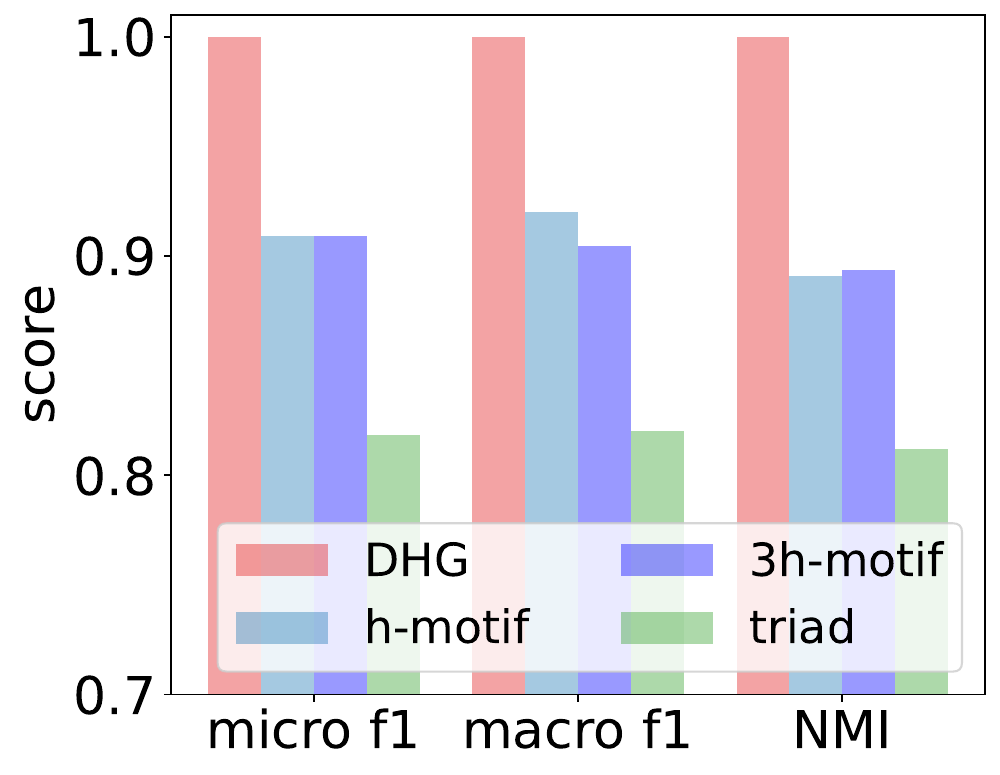}
     } \\
     \vspace{-2mm}
        \caption{\label{fig:CPs} Characteristic Profile (CP) similarity matrices and clustering performances when (a) \ours, (b) h-motifs, (c) 3h-motifs, or (d) triads are used.
        The CPs based on \ours more distinctly differentiate the datasets by domains and, in (e), lead to better clustering performance than the competitors.}
\end{figure*}

\smallsection{Comparison with other methods}
To demonstrate the effectiveness of \ours, we compare CPs based on it with CPs based on h-motifs, 3h-motifs, and triads (see in Section~\ref{sec:settings} for details).
\footnote{We count h-motifs and 3h-motifs using \mochyadv and traids using the exact subquadratic algorithm \cite{batagelj2001subquadratic}, respectively. We use \mochyadv since the exact algorithm, \mochyex, runs out of time in most data.}
Using the CPs, we compute the similarity matrix and perform clustering, as described above. Figures~\ref{fig:CP-DHG}-\ref{fig:CP-di-clique} show that using \ours leads to better domain-based differentiation than the others.
Moreover, Figure~\ref{fig:CP-clustering} shows that using the similarity matrix based on \ours results in up to $12\%$ better clustering performance (in terms of NMI) than the best competitor, respectively. 
These results imply that \ours precisely capture local structural patterns of real-world DHs.

\begin{table*}[t]
    \vspace{-3mm}
  \setlength\tabcolsep{1.8pt}
  \centering
  \caption{\label{tab:results_one_domain}
Hyperarc prediction performance (AUROC). We compare nine hyperarc feature vectors using seven classifiers. The best performances are highlighted in bold and the second-best performances are underlined.
Notably, using \our vectors leads to the best (up to $33\%$ better) AUROC in most settings, indicating that \ours extract informative hyperarc features.
See Appendix H \cite{appendix} for full results with standard deviations.
  }
\vspace{-2mm}
  \scalebox{0.90}{%
\begin{tabular}{l|cc|cc|cccc|c|cc|cc|cccc|c}
\toprule
Dataset & \multicolumn{9}{c|}{\textt{metabolic-iJO1366 (M6)}} & \multicolumn{9}{c}{\textt{email-eu (EU)}} \\
\midrule
Dimension & \multicolumn{2}{c|}{13} & \multicolumn{2}{c|}{26} & \multicolumn{4}{c|}{91} & 431 & \multicolumn{2}{c|}{13} & \multicolumn{2}{c|}{26} & \multicolumn{4}{c|}{91} & 431 \\
\midrule
Model & DHG-13 & triad & DHG-26 & h-motif & ~DHG~ & ~n2v~ & ~h2v~ & deep-h & 3h-motif & DHG-13 & triad & DHG-26 & h-motif & ~DHG~ & ~n2v~ & ~h2v~ & deep-h & 3h-motif  \\
\midrule
RF & \underline{0.824} & 0.730 & 0.817 & 0.794 & \textbf{0.834} & 0.541 & 0.546 & 0.803 & 0.820
& 0.941 & 0.901 & \underline{0.946} & 0.921 & \textbf{0.960} & 0.737 & 0.529 & 0.770 & 0.925 \\
MLP & \underline{0.782} & 0.646 & 0.761 & 0.705 & 0.687 & 0.548 & 0.559 & \textbf{0.804} & 0.682
& 0.960 & 0.911 & \textbf{0.963} & 0.909 & \underline{0.962} & 0.790 & 0.539 & 0.802 & 0.887 \\
XGB & 0.814 & 0.710 & 0.813 & 0.765 & \textbf{0.819} & 0.530 & 0.537 & 0.792 & \underline{0.818}
& 0.945 & 0.905 & \underline{0.949} & 0.917 & \textbf{0.961} & 0.747 & 0.557 & 0.777 & 0.925 \\
LGBM & \underline{0.818} & 0.687 & 0.813 & 0.747 & \textbf{0.822} & 0.515 & 0.541 & 0.794 & 0.812
& 0.948 & 0.908 & \underline{0.952} & 0.923 & \textbf{0.963} & 0.761 & 0.555 & 0.796 & 0.930 \\
HGNN & 0.571 & 0.552 & 0.587 & 0.551 & \textbf{0.601} & 0.509 & 0.526 & 0.534 & \underline{0.596}
& \underline{0.524} & 0.516 & 0.524 & 0.520 & \textbf{0.529} & 0.504 & 0.502 & 0.500 & 0.524 \\
FHGCN & 0.640 & 0.615 & 0.647 & 0.553 & \textbf{0.720} & 0.506 & 0.484 & 0.566 & \underline{0.692}
& 0.724 & \textbf{0.888} & 0.745 & 0.724 & \underline{0.849} & 0.520 & 0.550 & 0.614 & 0.768 \\
UGCN\uppercase\expandafter{\romannumeral2} & \textbf{0.713} & 0.647 & 0.692 & 0.663 & 0.663 & 0.502 & 0.503 & 0.642 & \underline{0.704}
& 0.864 & \textbf{0.912} & \underline{0.881} & 0.805 & 0.874 & 0.793 & 0.812 & 0.784 & 0.849 \\
\midrule
\rowcolor{Gray} Max & \underline{0.824} & 0.730 & 0.817 & 0.794 & \textbf{0.834} & 0.548 & 0.559 & 0.804 & 0.820
& 0.960 & 0.912 & \underline{\textbf{0.963}} & 0.923 & \underline{\textbf{0.963}} & 0.793 & 0.812 & 0.802 & 0.930 \\
\rowcolor{Gray} Avg. & \textbf{0.738} & 0.655 & 0.733 & 0.683 & \underline{0.735} & 0.522 & 0.528 & 0.705 & 0.732
& 0.843 & 0.849 & \underline{0.852} & 0.817 & \textbf{0.871} & 0.693 & 0.578 & 0.720 & 0.830 \\
\rowcolor{Gray} Rank Avg. & \underline{2.571} & 6.286 & 3.286 & 5.714 & \textbf{2.000} & 8.857 & 8.143 & 5.143 & 3.000
& 3.429 & 4.286 & \underline{2.429} & 5.429 & \textbf{1.571} & 8.000 & 8.286 & 7.571 & 4.000 \\
\midrule
\midrule
Dataset & \multicolumn{9}{c|}{\textt{citation-software (CS)}}  & \multicolumn{9}{c}{\textt{bitcoin-2016 (B6)}} \\
\midrule
Dimension & \multicolumn{2}{c|}{13} & \multicolumn{2}{c|}{26} & \multicolumn{4}{c|}{91} & 431 & \multicolumn{2}{c|}{13} & \multicolumn{2}{c|}{26} & \multicolumn{4}{c|}{91} & 431 \\
\midrule
Model & DHG-13 & triad & DHG-26 & h-motif & ~DHG~ & ~n2v~ & ~h2v~ & deep-h & 3h-motif & DHG-13 & triad & DHG-26 & h-motif & ~DHG~ & ~n2v~ & ~h2v~ & deep-h & 3h-motif  \\
\midrule
RF & 0.993 & 0.777 & \underline{0.994} & 0.945 & \textbf{0.999} & 0.611 & 0.502 & 0.739 & 0.986
& 0.971 & 0.766 & \underline{0.974} & \multirow{10}{*}{O.O.T.*} & \textbf{0.978} & 0.550 & 0.529 & 0.725 & \multirow{10}{*}{O.O.T.*} \\
MLP & \underline{\textbf{0.997}} & 0.775 & 0.997 & 0.930 & 0.996 & 0.671 & 0.544 & 0.806 & 0.964
& 0.909 & 0.655 & \underline{0.915} &  & \textbf{0.968} & 0.600 & 0.587 & 0.770 &  \\
XGB & 0.995 & 0.781 & \underline{0.996} & 0.939 & \textbf{0.999} & 0.620 & 0.522 & 0.786 & 0.991
& 0.964 & 0.801 & \underline{0.971} &  & \textbf{0.982} & 0.562 & 0.550 & 0.783 &  \\
LGBM & 0.995 & 0.792 & \underline{0.996} & 0.946 & \textbf{0.999} & 0.626 & 0.525 & 0.807 & 0.991
& 0.951 & 0.814 & \underline{0.963} &  & \textbf{0.982} & 0.563 & 0.558 & 0.788 &  \\
HGNN & 0.584 & 0.510 & \underline{0.589} & 0.509 & 0.587 & 0.562 & \textbf{0.590} & 0.508 & 0.551
& \underline{0.833} & 0.659 & \textbf{0.834} &  & 0.827 & 0.659 & 0.663 & 0.596 &  \\
FHGCN & 0.651 & 0.714 & 0.659 & 0.718 & \textbf{0.852} & 0.513 & 0.511 & 0.505 & \underline{0.732}
& 0.598 & \underline{0.712} & 0.597 &  & \textbf{0.825} & 0.508 & 0.507 & 0.505 &  \\
UGCN\uppercase\expandafter{\romannumeral2} & \underline{0.972} & 0.792 & \textbf{0.973} & 0.812 & 0.971 & 0.899 & 0.895 & 0.609 & 0.868
& \underline{\textbf{0.937}} & 0.847 & 0.937 &  & 0.931 & 0.856 & 0.801 & 0.704 &  \\
\midrule
\rowcolor{Gray} Max & \underline{0.997} & 0.792 & 0.997 & 0.946 & \textbf{0.999} & 0.899 & 0.895 & 0.807 & 0.991
& 0.971 & 0.847 & \underline{0.974} &  & \textbf{0.982} & 0.856 & 0.801 & 0.788 &  \\
\rowcolor{Gray} Avg. & 0.884 & 0.734 & \underline{0.886} & 0.828 & \textbf{0.914} & 0.643 & 0.584 & 0.680 & 0.869
& 0.880 & 0.751 & \underline{0.884} &  & \textbf{0.928} & 0.614 & 0.599 & 0.696 &  \\
\rowcolor{Gray} Rank Avg. & 3.286 & 6.571 & \underline{2.143} & 5.429 & \textbf{1.857} & 6.857 & 7.143 & 7.429 & 4.286
& 2.571 & 4.286 & \underline{2.143} &  & \textbf{1.571} & 5.429 & 6.286 & 5.714 &  \\
\bottomrule
\multicolumn{10}{l}{* O.O.T: out-of-time ($>1$ day).}
\end{tabular}
}%
\end{table*}

\subsection{Q2. Hyperarc Characterization Power}\label{sec:exp_apply}

We evaluate the effectiveness of \ours in characterizing hyperarcs by applying them to hyperarc prediction tasks.

\smallsection{Problem settings} \label{sec:hypergraph-hyperarc-prediction} 
We formulate hyperarc prediction as a classification task as in \cite{lee2020hypergraph,patil2020negative}.
Similar to constructing the randomized directed hypergraphs (DHs), we generate fake synthetic hyperarcs, up to ten percent of the total number of original hyperarcs, and add them to the original DHs.
After splitting the fake hyperarcs into the train and test sets with an 8:2 ratio, we uniformly sample the same number of real ones as the fake ones in the train and test sets, respectively.
The model is trained to classify real and fake hyperarcs with the feature vectors obtained from the contaminated DHs.

\smallsection{Feature vectors} \label{sec:hypergraph-feature-vectors}
To the best of our knowledge, \ours are the only method specifically designed for extracting feature vectors from DHs. Thus, we adapt alternative methods designed for different purposes for comparison with \ours. %
Specifically, we compare the following hyperarc feature vectors:
\begin{itemize}[leftmargin=*]
    \item \textbf{\our $\in \mathbb{R}^{91}$:}
    The exact count of the instances of each \our that each hyperarc is involved in.
    \item \textbf{h-motif $\in \mathbb{R}^{26}$} and \textbf{3h-motif $\in \mathbb{R}^{431}$:}
    The exact count of the instances of each h/3h-motif that each hyperarc is involved in after hypergraph expansion (see Section~\ref{sec:hypergraph-competitors}).
    \item \textbf{triad $\in \mathbb{R}^{13}$:}
    The exact count of the instances of each triad census that each hyperarc is involved in after di-biclique expansion (see Section~\ref{sec:hypergraph-competitors}).
    \item \textbf{DHG-26 $\in \mathbb{R}^{26}$} and \textbf{DHG-13 $\in \mathbb{R}^{13}$:}  \our after dimension reduction using PCA\footnote{https://scikit-learn.org/stable/modules/generated/sklearn.decomposition.PCA} to match the dimensions of h-motif and triad, respectively.
    \item \textbf{node2vec (n2v) $\in \mathbb{R}^{91}$:}
    The node2vec \cite{grover2016node2vec} embedding of each hyperarc after di-biclique expansion (see Section~\ref{sec:hypergraph-competitors}). %
    \item \textbf{hyper2vec (h2v) \& deep hyperedges (deep-h) $\in \mathbb{R}^{91}$:} The hyper2vec \cite{huang2019hyper2vec} and deep hyperedges \cite{payne2019deep} embeddings of each hyperarc after hypergraph expansion (see Section~\ref{sec:hypergraph-competitors}).
\end{itemize}
Note that, for the last two, we generate 91-dimensional vectors to match the dimension of CPs based on \ours.

\smallsection{Evaluation protocol}
We train ten classifiers (Logistic Regression, Random Forest, Decision Tree, K-Nearest Neighbor, Multi-Layer Perceptron \cite{scikit-learn}, eXtreme Gradient Boosting \cite{chen2016xgboost}, Light Gradient-Boosting Machine \cite{ke2017lightgbm}, HGNN \cite{feng2019hypergraph}, FastHyperGCN \cite{NEURIPS2019_1efa39bc}, and UniGCN\uppercase\expandafter{\romannumeral2} \cite{ijcai21-UniGNN}) using each of the feature vectors above.
We assess the performance using the area under the ROC curve (AUROC) and accuracy.
We generate ten different contaminated DHs from each dataset, and train and test the model ten times on each generated DH. See Appendix G \cite{appendix} for details, including hyperparameter settings.

\smallsection{Results}
Table~\ref{tab:results_one_domain} presents the AUROC results (average over 100 trials) on four datasets using the seven best classifiers. Notably, the use of \our vectors leads to the best results in most datasets with most classifiers, achieving up to  $33\%$ better AUROC on the \textt{B6} dataset than other features. 
Even with reduced dimensions, \our vectors consistently outperform the competitors with the same dimension.
These results indicate that \ours extract informative features of hyperarcs.
For full results on all datasets with standard deviations and all classifiers, see Appendix H \cite{appendix}.

\subsection{Q3. Performance of Counting Algorithms}\label{sec:exp:q3}

We evaluate the speed and accuracy of \our counting algorithms.
In addition to the proposed ones (i.e., \exact and \adv), we consider two approximate competitors: \naive (Section~\ref{sec:approx}) and \ata.\footnote{\ata samples a single hyperarc $e$ from $E$ uniformly at random first, and then another incident hyperarc $e'\in N_{e}$ uniformly at random. We present its detailed procedure, unbiasedness, variance, and complexity in Appendix A. }
We vary the ratio $q$ of the number of samples to the number of hyperarcs in $\{0.2, 1, 5, 10, 50\}$ when $|V|<10^4$ (\textt{EN}, \textt{EU}, \textt{MB}, and \textt{M6}) and in $\{0.2, 0.4, 0.6, 0.8, 1.0\}$ otherwise (\textt{CD}, \textt{CS}, \textt{QM}, \textt{QS}, \textt{B4}, \textt{B5}, and \textt{B6}). 
For each algorithm and each $q$ value, we report the running time and accuracy averaged over $20$ independent trials.
We employ two accuracy measures, as described below.

\smallsection{Measure 1. error measure for \our counts}
As a performance measure, we use $err^G$, which measures the difference between the exact and estimated counts of \our instances as:
\begin{align*}
 err^G:=\frac{\sum_{i \in [m]} |C^G[i]-|\Omega^G_i||}{\sum_{i \in [m]}|\Omega^G_i|},
\end{align*}
where $C^G[i]$ is an estimate of $|\Omega_i^G|$ obtained by one among \naive, \ata and \adv. 
Note that a lower $err^G$ value indicates better performance.

As shown in Figure~\ref{app:fig:error}, the $err^G$ values of \adv, which is our proposed approximate algorithm, are significantly lower than those of \naive and \ata on all datasets, when their running time is similar. 
When the $err^G$ values are similar, \adv is up to $40\times$ faster than \naive and \ata.

\smallsection{Measure 2. CP similarity}
We also use $cos^G$, defined as the cosine similarity between the characteristic profiles (CPs) based on exact and estimated occurrences of \ours.
The estimates are obtained by one among \naive, \ata, and \adv.
A higher $cos^G$ value indicates better performance.

Figure~\ref{fig:cosine} shows that %
\adv is up to $32\times$ faster than \naive and \ata on all datasets when their $cos^G$ values are similar. When their running times are similar, \adv gives the best $cos^G$ value in almost all cases.

\begin{figure}[t!]
    \centering
    \vspace{-1mm}
    \includegraphics[width=\linewidth]{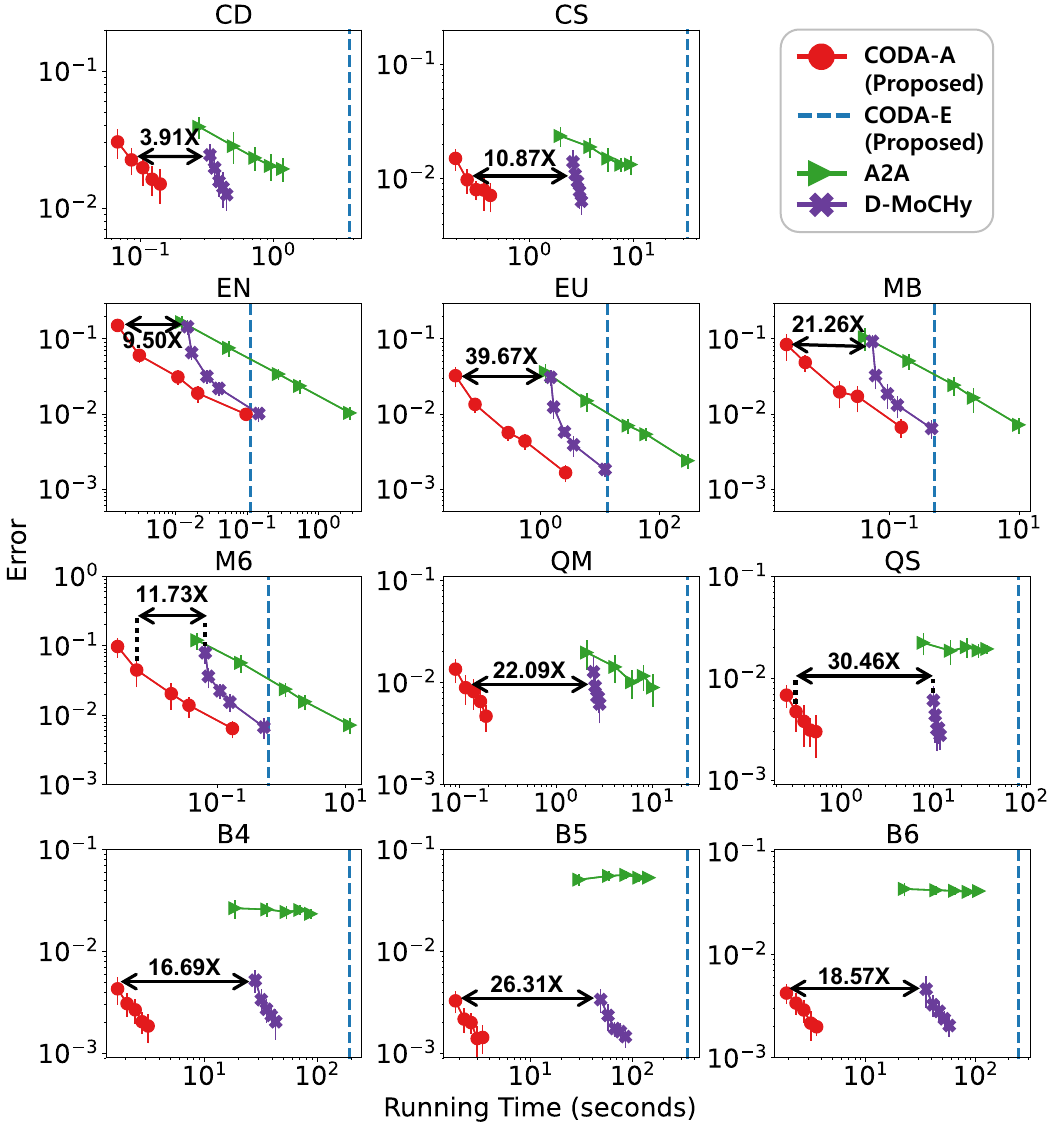}
    \\
     \vspace{-2mm}
     \caption{\label{app:fig:error} Estimation error $err^{G}$ (a lower value indicates better performance) and running time of each algorithm on all datasets. Note that our proposed approximate algorithm, \adv, consistently
     gives the best trade-off among the approximate ones. 
     Error bars indicate $\pm 1$ standard deviation.
     For \exact, which is an exact algorithm, $err^{G}$ is always 0.
     }
\end{figure}

\begin{figure}[t!]
    \centering
    \vspace{-1mm}
    \includegraphics[width=\linewidth]{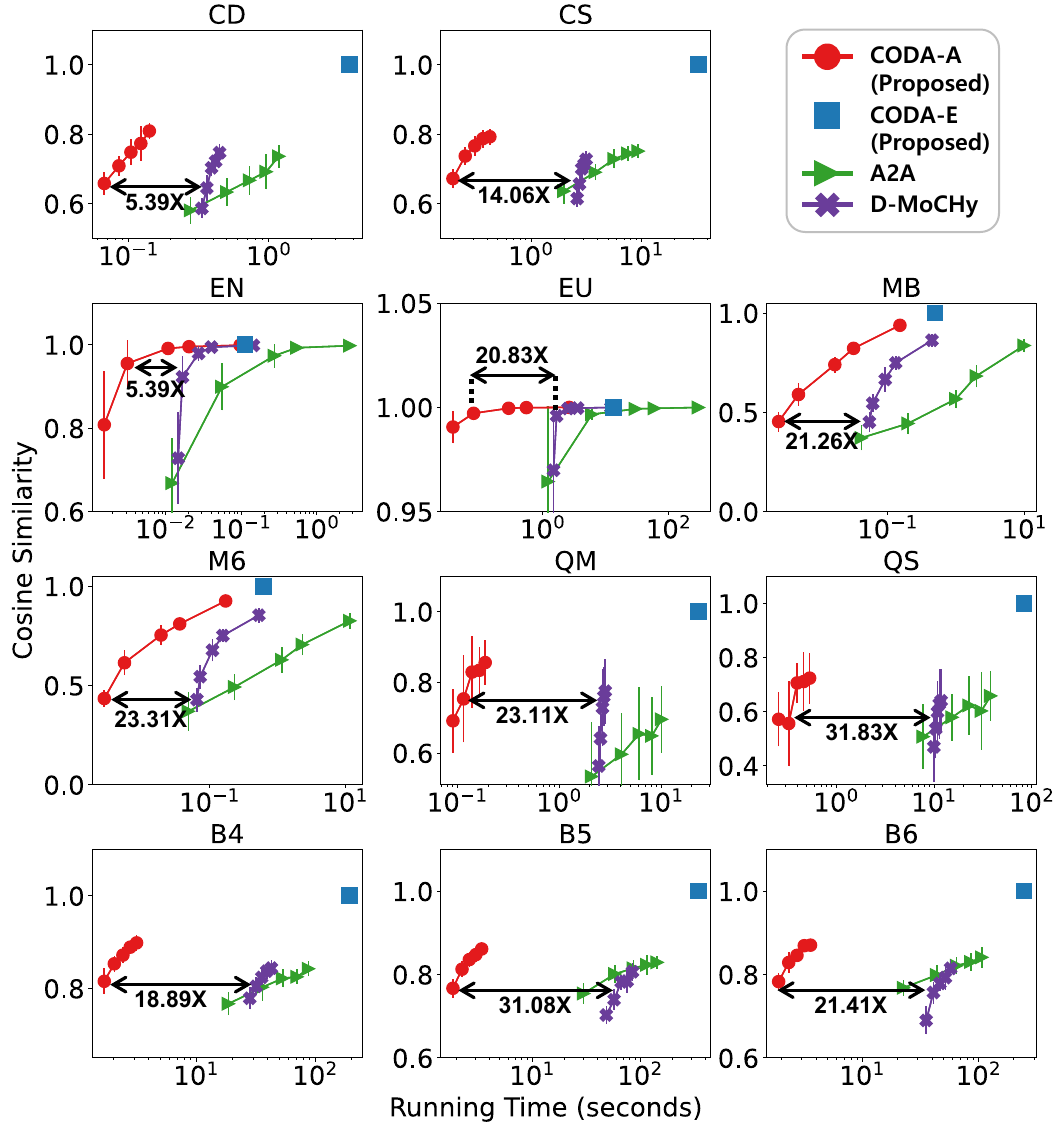} \\
     \vspace{-2mm}
     \caption{\label{fig:cosine} CP similarity $cos^G$ (a higher value indicates better performance) and running time of each algorithm on all datasets. Note that our proposed approximate algorithm, \adv, consistently
     gives the best trade-off among the approximate ones.
     Error bars indicate $\pm 1$ standard deviation.
     } %
\end{figure}

\begin{figure}[t!]
    \centering
    \vspace{-1mm}
    \includegraphics[width=\linewidth]{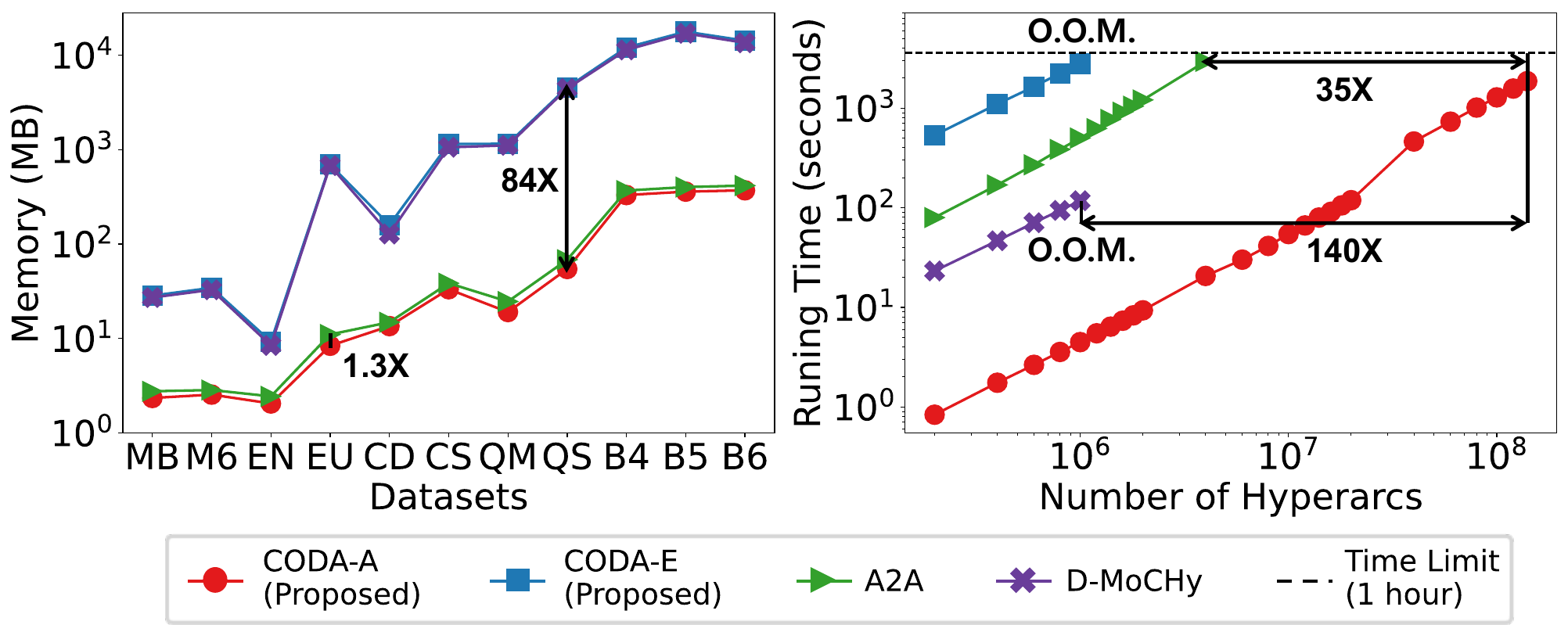}
    \\
     \caption{\label{app:fig:memory_and_scalability} (Left) Memory requirements of each algorithm on eleven datasets. Notably, \adv and \ata require significantly less memory compared to \naive and \exact. (Right) Running times of each algorithm on large synthetic datasets. While \naive and \exact runs out of memory on DHs with more than $10^6$ hyperarcs, \adv is capable of handling DHs that are up to $140\times$ larger than those processed by \naive and \exact.}
\end{figure}

\smallsection{Memory requirements}
The left side of Figure~\ref{app:fig:memory_and_scalability} shows the average memory requirement over ten trials for all \our counting algorithms across eleven datasets. Importantly, these memory requirements are independent of the number of samples.
\adv requires significantly less memory than \naive and \exact.
This result aligns with our theoretical findings that \naive and \exact require $O(|\Omega|)$ additional space compared to \adv (Propositions~\ref{thm:complexity_exact}, \ref{prop:naive:comp}, and \ref{prop:adv:comp}).
\ata, despite its weaker speed-accuracy trade-off, requires a similar amount of memory with \adv. Their space complexities are the same (see Proposition~\ref{prop:a2a} in Appendix~\ref{app:ata}).

\smallsection{Scalability}
We evaluate the scalability of \adv using synthetic datasets generated by a hypergraph generator.
Refer to Algorithm \ref{algo:naive_random} in Appendix~\ref{app:generate} for details.
We set a maximum hyperarc size $k$ to $40$ and a ratio of the number of hyperarcs to the number of nodes $r$ to $20$. 
We vary the number of nodes $N$ from $10^4$ to $10^7$.
For sampling, we fix the ratio $q$ of the number of samples to the number of hyperarcs to $1.0$.
Using created synthetic DHs with varying numbers of hyperarcs, we measure the average running time over ten trials with a $1$ hour time limit.
The right side of Figure~\ref{app:fig:memory_and_scalability} shows that \naive and \exact run out of memory when dealing with over $10^6$ hyperarcs, whereas \adv exhibits remarkable scalability, capable of handling $140\times$ more hyperarcs than both methods.

In summary, \adv, which is our proposed approximate algorithm,  gives the best trade-off between time and accuracy consistently among the approximate algorithms, being consistent with our time-complexity analysis (Propositions~\ref{prop:naive:comp} and \ref{prop:adv:comp}). Moreover, \adv requires less memory, enabling its application to large datasets without out-of-memory issues.

\begin{figure}[t!]
    \centering
    \includegraphics[width=\linewidth]{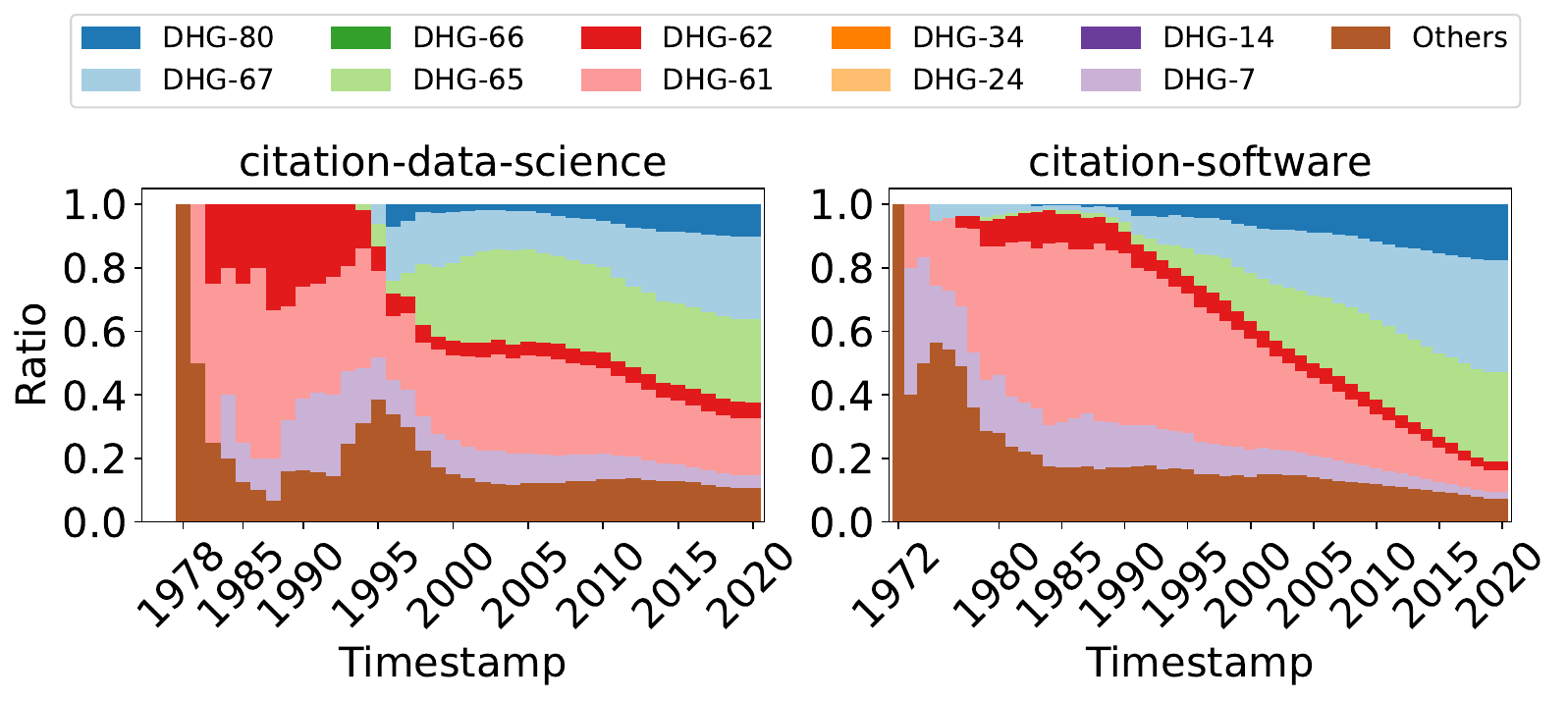}
    \vspace{1mm}
    \includegraphics[width=\linewidth]{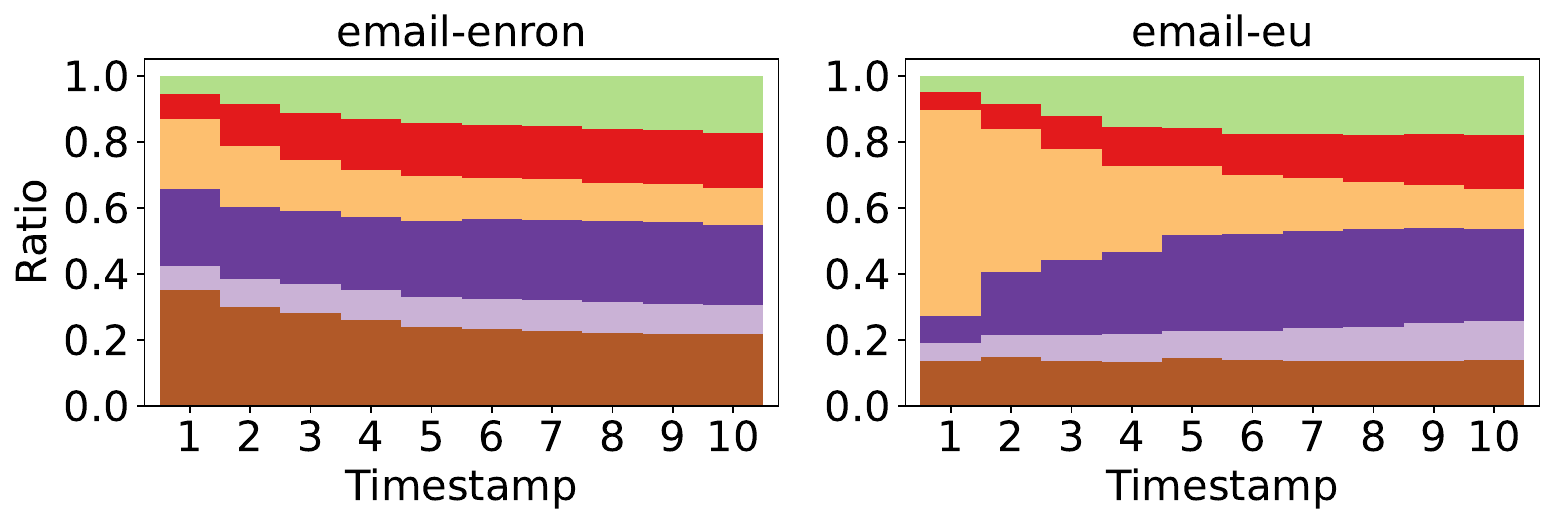}
    \vspace{1mm}
    \includegraphics[width=\linewidth]{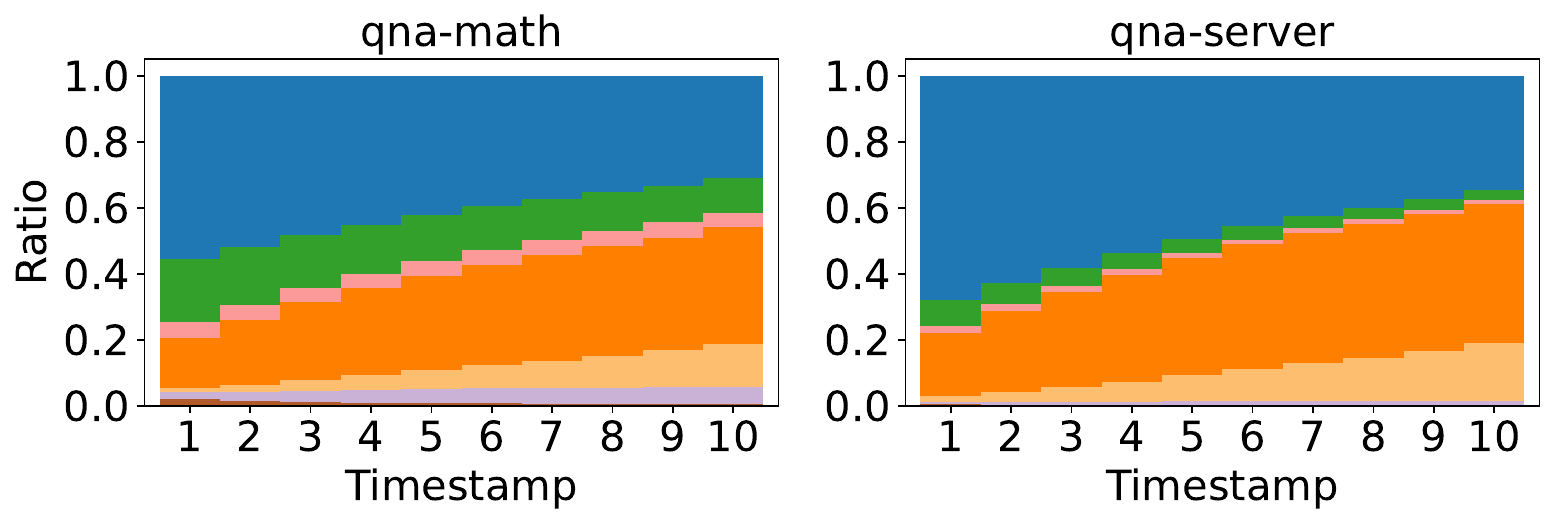}
    \vspace{1mm}
    \includegraphics[width=\linewidth]{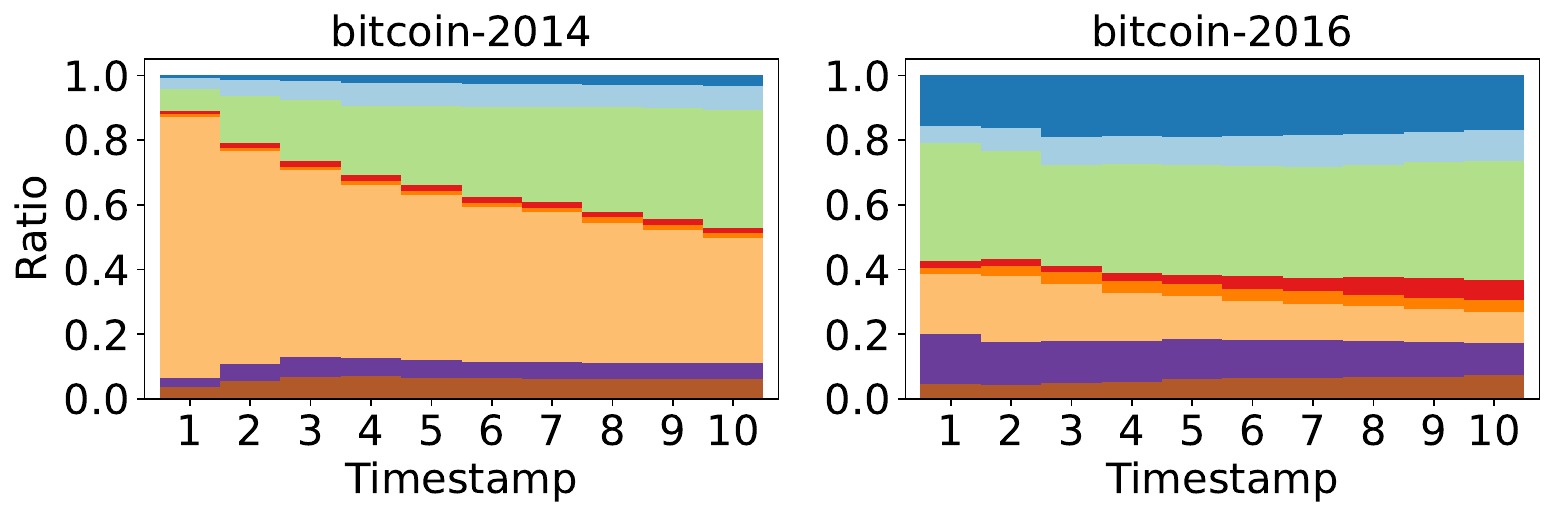}
    \vspace{-5mm}
    \caption{
    Directed hypergraphs from the same domain not only share the same set of frequent \ours but also exhibit similar time-evolving tendencies. \ours
with ratios below certain thresholds (spec., $0.03$ for the \textt{citation} datasets,  $0.1$ for the \textt{email} datasets, $0.01$ for the \textt{qna} datasets, and 0.03 for the \textt{bitcoin} datasets) are grouped as \textt{Others}.
    }
    \label{fig:time_all}
\end{figure}

\begin{figure}[t!]
    \vspace{-2mm}
    \centering
    \includegraphics[width=\linewidth]{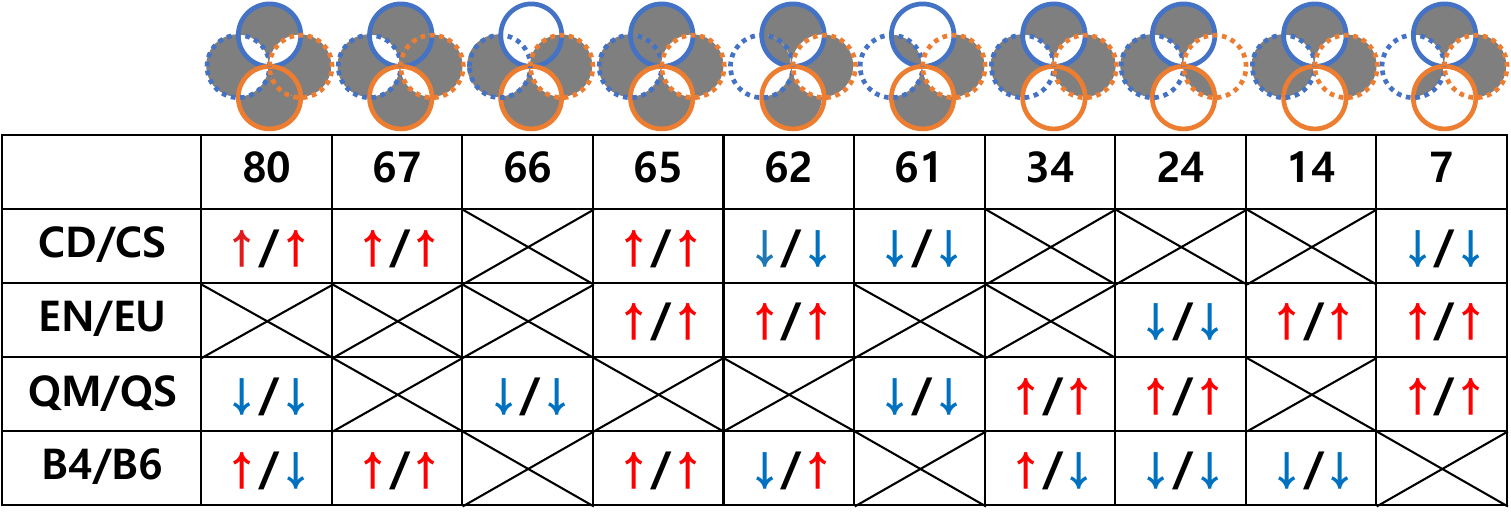}
    \vspace{-4mm}
     \caption{Time-evolving trend.
     Each row represents a domain of datasets and each column represents a \our. %
     \ours marked with  `X' have ratios less than a predefined threshold.
     The ratios of \ours marked with \red{$\uparrow$} or \blue{$\downarrow$} tend to increase or decrease, respectively. 
     }
     \label{fig:time_tendency_all}
\end{figure}

\subsection{Q4. Temporal Analysis}\label{sec:exp_discover}

We analyze the time-evolving patterns in real-world directed hypergraphs (DHs) using \ours.
Specifically, we examine snapshots of the same interval for each time-evolving DH\footnote{The \textt{metabolic} datasets do not include timestamps.} %
and track the occurrence ratio of each \our in each snapshot. Refer to Appendix F \cite{appendix} for deatils of the snapshots.

In Figure~\ref{fig:time_all}, we visualize \ours whose ratio is greater than specific thresholds while aggregating the rest as \textt{Others}.
The threshold values for the \textt{citation}, \textt{email}, \textt{qna}, and \textt{bitcoin} datasets are 0.03, 0.1, 0.01, and 0.03, respectively.
In addition, we summarize the time-evolving patterns of the top ten most frequent \ours in Figure~\ref{fig:time_tendency_all}.
Notably, datasets from the same domain exhibit not only similar sets of frequent \ours but also similar time-evolving patterns for each \our.

\smallsection{Citation datasets}
The ratios of \our-65, -67, and -80 increase, while those of \our-7, -61, and -62 decrease.
In \our-65, -67, and -80, all non-intersecting regions (i.e., regions 1, 4, 5, and 8 in Figure~1(c)) contain at least one node, while in \our-7, -61, and -62, some non-intersecting regions are empty.
That is, the number of non-empty non-intersecting regions increases over time.

\smallsection{Email datasets}
Among the frequent \ours, only the ratio of \our-24 decreases, whereas the ratios of \our-7, -14, -62, and -65 increase.
\our-24 is distinct from the other \ours in that two tail sets intersect in it.
Given that the size of tail sets in email datasets is always 1, the probability of two hyperarcs sharing the same tail set decreases over time, resulting in the decrease of the \our-24 ratio.

\smallsection{Qna datasets}
There is a dramatic decrease in the ratio of \our-80 and a dramatic increase in the ratio of \our-34.
The ratios of \our-61, -66, and -80  decrease, while the ratios of \our-7, -24, and -34 increase.
The increasing \ours have empty non-intersecting areas in their tail sets, while the declining \ours have no such areas.
This indicates that the number of users who frequently answer questions increases.

\smallsection{Bitcoin datasets}
Although the two \textt{bitcoin} datasets share the same set of frequent \ours, their tendencies to increase or decrease over time differ in them.
For example, \our-65 becomes dominant in the bitcoin-2016 dataset, with a ratio of 0.37, while it has a ratio of only 0.07 in the other dataset. Conversely, \our-24 becomes dominant in the bitcoin-2014 dataset, with a ratio of 0.81, but only has a ratio of 0.10 in the other dataset. The main difference between \our-65 and \our-24 is the presence or absence of an intersection between tail sets.
\our-24 has only an intersection between tail sets, but \our-65 does not.
This indicates that the diversity of accounts participating in transactions increases over time.

\section{Conclusions}
\label{sec:conclusion}

In this work, we focus on measurements, findings, and applications related to local structures of directed hypergraphs (DHs).
Specifically, we present (a) how to characterize the local structures of DHs (b) what the building blocks of real-world DHs are, (c) how they are used for applications, and (d) how we can rapidly but accurately characterize the local structures of a large-scale DH, based on a small fraction of it.

In summary, our contributions are as follows:
\begin{itemize}[leftmargin=*]
    \item \textbf{New Concepts.} To the best of our knowledge, we are the first to define equivalent classes for describing the local structures of DHs, i.e., 91 directed hypergraphlets (\ours).
    \item \textbf{Discoveries.} 
    Using \ours, we unveil striking domain-based patterns in the local structures of $11$ real-world DHs.
    \item \textbf{Applications.} We present how \ours can be used for vector representations of DHs and hyperarcs,
    and we show that they lead to better performance than other features on hypergraph clustering (Figure~\ref{fig:CPs}) and hyperarc prediction (Table~\ref{tab:results_one_domain}).
    \item \textbf{Fast Algorithms.} We develop fast algorithms for counting \our instances and prove their theoretical properties (Propositions~\ref{prop:adv:unb}-\ref{prop:adv:con}).
    We show that \adv, our proposed approximate algorithm, achieves a remarkable balance between speed and accuracy (Figures~\ref{app:fig:error} and  \ref{fig:cosine}) while requiring little memory (Figure~\ref{app:fig:memory_and_scalability}).
\end{itemize}
For \textbf{reproducibility}, we provide the code and datasets at~\cite{appendix}.

\appendix
\label{sec:appendix_offline}

\subsection{\ata: Baseline Algorithm}\label{app:ata}

In this section, we provide a detailed description of the baseline algorithm, \ata, including its unbiasedness, variance, and complexity. Its pseudocode is presented in Algorithm~\ref{algo:ata}.
Here, $p(e, e')=\left(\frac{1}{|N_{e}|}+\frac{1}{|N_{e'}|}\right)\cdot \frac{1}{|E_{\geq 1}|}$ where $E_{\geq 1}=\{e: N_{e}\geq 1\}$. 
Assume $E_{\geq 1}$ is given at first. 
Also, for space efficiency, we assume $N_{e}$ is maintained (Line~\ref{algo:ata:nei}).
$HM(A, B)$ on Line~\ref{algo:ata:hm} denotes the harmonic mean of $A$ and $B$. %

\begin{algorithm}[t]
    \small
    \caption{\ata}\label{algo:ata}
    \SetKwInput{KwInput}{Input}
    \SetKwInput{KwOutput}{Output}
    \KwInput{(1) a directed hypergraph: $G=(V, E)$ \\
            \quad\quad\quad(2) \# of samples $n=q\cdot |E|$ for a given ratio $q$\\} %
    \KwOutput{$C[i]$ for every $i\in [m]$}
     $C[i] \leftarrow 0, \forall i \in [m]$ \\ %
    \For{$1:n$}{
        Choose $e\in E_{\geq 1}$ uniformly at random \\   
        $N_{e}\leftarrow \{e'\in E\setminus \{e\}: e\cap e'\neq \emptyset\}$\label{algo:ata:nei}\\
        $C[f(e, e')]\leftarrow C[f(e, e')]+\frac{|E_{\geq 1}|}{2\cdot n}\cdot HM(|N_{e}|, |N_{e'}|)$ \label{algo:ata:hm}\\
    }
     \Return{$C$} 
\end{algorithm}

\begin{proposition}[Unbiasedness of \ata]
Algorithm~\ref{algo:ata} is unbiased, i.e., $\mathbb{E}[C[i]]=|\Omega_i|$, $\forall i\in [m]$.
\end{proposition}
\begin{proof}
    Follow the flow of the proof of Proposition~\ref{prop:naive:unb}.
\end{proof}

\begin{proposition}[Variance of \ata]
   For each $i\in [m]$, the variance of $C[i]$ obtained by Algorithm~\ref{algo:ata} is 
  {\small\begin{align*}Var[C[i]]&=\sum_{(e, e')\in \Omega_i}\frac{1}{n}\left(\frac{1}{p(e, e')}-1\right)\\&=\sum_{(e, e')\in \Omega_i}\frac{1}{n}\left(\frac{|E_{\geq 1}|}{2}\cdot HM(|N_{e}|, |N_{e'}|)-1\right).\end{align*}} %
\end{proposition} 
\begin{proof}
    Follow the flow of the proof of Proposition~\ref{prop:naive:var}.
\end{proof}
\begin{proposition}[Time and space complexity of \ata]\label{prop:a2a}
    The time complexity of Algorithm~\ref{algo:ata} is $O(n\cdot (\max_{e\in E}|\bar{e}|\cdot \max_{e\in E}|N_{e}|))$.
    Its space complexity is $O(\sum_{e\in E}|\bar{e}|)$.
\end{proposition}
\begin{proof}
    The information of a given directed graph is stored in $O(\sum_{e\in E}|\bar{e}|)$ space at first. 
    For time complexity, $O(\max_{e\in E} |\bar{e}|\cdot \max_{e\in E}|N_{e}|)$ time is required assuming $O(p\cdot q)$ time is taken for set union when there are $p$ sets, of which size bounded by $q$.
    For space complexity, $O(|N_{e}|)\in O(\sum_{e\in E} |\bar{e}|)$ space is needed. Checking $f(e, e')$ requires $O(\max_{(e, e')\in \Omega}\min(|\bar{e}|, |\bar{e}'|))$-time, which is bounded by $O(\max_{e\in E} |\bar{e}|\cdot \max_{e\in E} |N_e|)$.
\end{proof}

\subsection{Randomization of Directed Hypergraphs (DHs)}\label{app:random}
To randomize DHs, we extend the configuration model, which is widely-used for (hyper)graphs \cite{newman2018networks,chodrow2020configuration}, to DHs. Algorithm~\ref{algo:configuration} outlines the process of obtaining a randomized DH $G'$ from the input DH $G$. It ensures that $G'$ and $G$ have the same distributions of hyperarc sizes and node degrees.
Hyperarcs are paired and the nodes in the head sets of each pair are shuffled to obtain a shuffled set $E_{\text{temp}}$ of hyperarcs (Lines~\ref{algo:conf:head1}-\ref{algo:conf:head2}).
The same process is applied to the tail sets on $E_{\text{temp}}$ (Lines~\ref{algo:conf:tail1}-\ref{algo:conf:tail2}) to obtain the final set $E'$ of hyperarcs.

\subsection{Generation of Directed Hypergraphs (DHs)}\label{app:generate}
Algorithm~\ref{algo:naive_random} describes the process of generating a random DH of a desired size. The size of each hyperedge is determined uniformly at random (Line~\ref{algo:random:size}), and it consists of randomly selected nodes (Line~\ref{algo:random:nodes}), which are then randomly divided into head and tail sets of almost equal size (Line~\ref{algo:random:split}).

\begin{algorithm}[h!]
    \small
    \caption{Randomization of Directed Hypergraphs}\label{algo:configuration}
    \SetKwInput{KwInput}{Input}
    \SetKwInput{KwOutput}{Output}
    \KwInput{a directed hypergraph: $G=(V, E)$}
    \KwOutput{a randomized directed hypergraph: $G'=(V', E')$}
    
     $V' \leftarrow V$, $n_E\leftarrow|E|$ \\
     $E_{\text{temp}}, E' \leftarrow \emptyset, \emptyset$ \\
     \tcp{Shuffle the two head sets.}
     \For{$1:\lfloor n_E/2 \rfloor$}{\label{algo:conf:head1} %
        Choose $e_1, e_2 \in \binom{E}{2}$ uniformly at random \\
        $E \leftarrow E \setminus \{e_1, e_2\}$ \\ 
        $H'_1, H'_2 \leftarrow \textsc{Shuffle}(H_1, H_2 | T_1, T_2)$ \\
        $E_{\text{temp}} \leftarrow E_{\text{temp}} \cup \{\langle T_1, H'_1 \rangle, \langle T_2, H'_2 \rangle\}$
    }
    \If{$|E|=1$}{
        $e_1 \in E$ \\
        Choose $e_2\in E_{\text{temp}}$ uniformly at random
        \\$E_{\text{temp}} \leftarrow E_{\text{temp}} \setminus \{e_2\}$ \\
        $H'_1, H'_2 \leftarrow \textsc{Shuffle}(H_1, H_2 | T_1, T_2)$ \\
        $E_{\text{temp}} \leftarrow E_{\text{temp}} \cup \{\langle T_1, H'_1 \rangle, \langle T_2, H'_2 \rangle\}$\label{algo:conf:head2}
    }
    \tcp{Shuffle the two tail sets.}
    \For{$1:\lfloor n_E/2 \rfloor$}{ \label{algo:conf:tail1}%
        Choose  $e_1, e_2 \in \binom{E_{\text{temp}}}{2}$ uniformly at random \\ $E_{\text{temp}} \leftarrow E_{\text{temp}} \setminus \{e_1, e_2\}$ \\
        $T'_1, T'_2 \leftarrow \textsc{Shuffle}(T_1, T_2 | H_1, H_2)$ \\
        $E' \leftarrow E' \cup \{\langle T'_1, H_1 \rangle, \langle T'_2, H_2 \rangle$\}
    }
    \If{$|E_{\text{temp}}|=1$}{
        $e_1 \in E_{\text{temp}}$ \\
        Choose $e_2\in E'$ uniformly at random
        \\$E_{\text{temp}} \leftarrow E_{\text{temp}} \setminus \{e_2\}$ \\
        $T'_1, T'_2 \leftarrow \textsc{Shuffle}(T_1, T_2 | H_1, H_2)$ \\
        $E' \leftarrow E' \cup \{\langle T'_1, H_1 \rangle, \langle T'_2, H_2 \rangle\}$\label{algo:conf:tail2}
    }
     \Return{$G'=(V',E')$} 

    \tcp{Shuffle the two sets of nodes}
    \SetKwProg{Fn}{Function}{}{}
    \Fn{\textsc{Shuffle}($S_1, S_2 | F_1, F_2$)}{
        $I \leftarrow S_1 \cap S_2$ \\
        $R \leftarrow (S_1 \cup S_2) \setminus I$ \\
        $R' \leftarrow R \setminus F_1 \setminus F_2$\\
        $S_{1}' \leftarrow$ Choose $(|S_1\setminus I\setminus F_2|)$ elements in $R'$ uniformly at random \\
        $S_{2}' \leftarrow R' \setminus S_{1}'$ \\
        \KwRet{$I \cup (S_1\cap F_2) \cup S_{1}',  I \cup  (S_2\cap F_1) \cup S_{2}'$}
    }
\end{algorithm}

\begin{algorithm}[h!]
    \small
    \caption{Generation of Directed Hypergraphs}\label{algo:naive_random}
    \SetKwInput{KwInput}{Input}
    \SetKwInput{KwOutput}{Output}
    \KwInput{(1) the number $n$ of nodes, \\ \quad\quad\quad (2) a ratio $r$ of hyperedges to nodes, \\ \quad\quad\quad (3) the maximum size  $k$ of a hyperedge}
    \KwOutput{a directed hypergraph: $G=(V, E)$}
     $V \leftarrow [n]$, $E \leftarrow \emptyset$ \\
     \For{$1:r\cdot n$}{
        Choose the hyperarc size $d\in \{2,\dots,k\}$ uniformly at random\label{algo:random:size}\\ 
        Choose $U\subseteq V$ such that $|U|=d$ uniformly at random \label{algo:random:nodes}\\
        $T, H \leftarrow$ Split $U$ into two groups of sizes $\lfloor \frac{d}{2} \rfloor$ and $\lceil \frac{d}{2} \rceil$\label{algo:random:split}\\
        $E\leftarrow E\cup \{\langle T, H \rangle\}$ \\
    }
    \Return{$G=(V,E)$}
\end{algorithm}

\clearpage
\bibliographystyle{IEEEtran}
\bibliography{bib.bib}

\clearpage

\red{If the preview is not legible, please download the PDF file. Appendices A-C are included at the end of the main paper.}

\section*{APPENDIX (online)}
\label{sec:appendix_online}

\subsection{Datasets (Table~\ref{tab:datasets} of the main paper)}\label{app:data}
We describe the representation, sources, and preprocessing steps of the datasets used in this work. As a default preprocessing step, we remove all duplicate hyperarcs and self-loops.

\begin{itemize}[leftmargin = 0.5cm]
    \item \textbf{Metabolic datasets:} We use two metabolic datasets, \textt{metabolic-iAF1260b}, and \textt{metabolic-iJO1366}.
    Each node represents a gene, and each hyperarc represents a metabolic reaction, where each head and tail set indicates a set of genes. 
    When the genes in the tail set participate in a metabolic reaction, they become the genes in the head set of the corresponding hyperarc.
They are provided in the complete form of directed hypergraphs %
which do not require any preprocessing step.
    \item \textbf{Email datasets:} We use two email datasets, \textt{email-enron}%
    , and \textt{email-eu}%
    .
    Each node represents an account, and each hyperarc represents an email from a sender to one or more recipients, where the tail set consists of a node representing the sender, and the head set consists of nodes representing the recipients. We transformed the original pairwise graph into a directed hypergraph by considering all edges occurring at the same timestamp from the same sender as a single email (hyperarc).
Note that the size of tail sets is always 1 in these datasets. (i.e., $|T_i|=1, \forall i=\{1,\dots,|E|\}$.)
    \item \textbf{Citation datasets:} We use two citation datasets from DBLP: \textt{citation-data-science}, and \textt{citation-software}%
    . 
    We extracted papers in the fields of data science or software from the dataset. Nodes represent authors and the (head and tail) sets indicate co-authors of each publication. Hyperarcs indicate citation relationships, with the tail set representing the paper that cites the head set paper. %
    \item \textbf{Question \& Answering datasets:} We use two question \& answering datasets, \textt{qna-math} and \textt{qna-server}. 
    Following~\cite{kim2022reciprocity}, we created a directed hypergraph from the log data of the two question-answering sites%
    : Math Exchange and Server Fault. 
    Each node represents a user, and each hyperarc represents a post, with the tail set consisting of the answerers and the head set consisting of the questioner.
Note that the size of head sets is always 1 in these datasets. (i.e., $|H_i|=1, \forall i=\{1,\dots,|E|\}$.)
    \item \textbf{Bitcoin transaction datasets:} We use three bitcoin transaction datasets, \textt{bitcoin-2014}, \textt{bitcoin-2015}, and \textt{bitcoin-2016}, created from the original datasets%
    , as suggested in \cite{kim2022reciprocity}.
They contain the first 1.5 million transactions in Nov 2014, Jun 2015, and Jan 2016, respectively.
Each node represents an individual account, and each hyperarc represents a cryptocurrency transaction. The tail set of a hyperarc corresponds to the accounts selling the cryptocurrency, while the head set corresponds to the accounts buying the corresponding cryptocurrency.
\end{itemize}

\subsection{Count Distributions (Section~\ref{exp:domain} of the main paper)}\label{app:count_distribution}

We analyze the occurrence distributions of \ours in real-world and randomized directed hypergraphs (DHs). To ensure statistical significance, we generate ten randomized DHs and report the average counts. As shown in Figure~\ref{fig:count_all}, the counts of \ours in real-world directed hypergraphs are distinct from those in randomized directed hypergraphs.

\begin{figure}[t!]
    \centering
    \includegraphics[width=\linewidth]{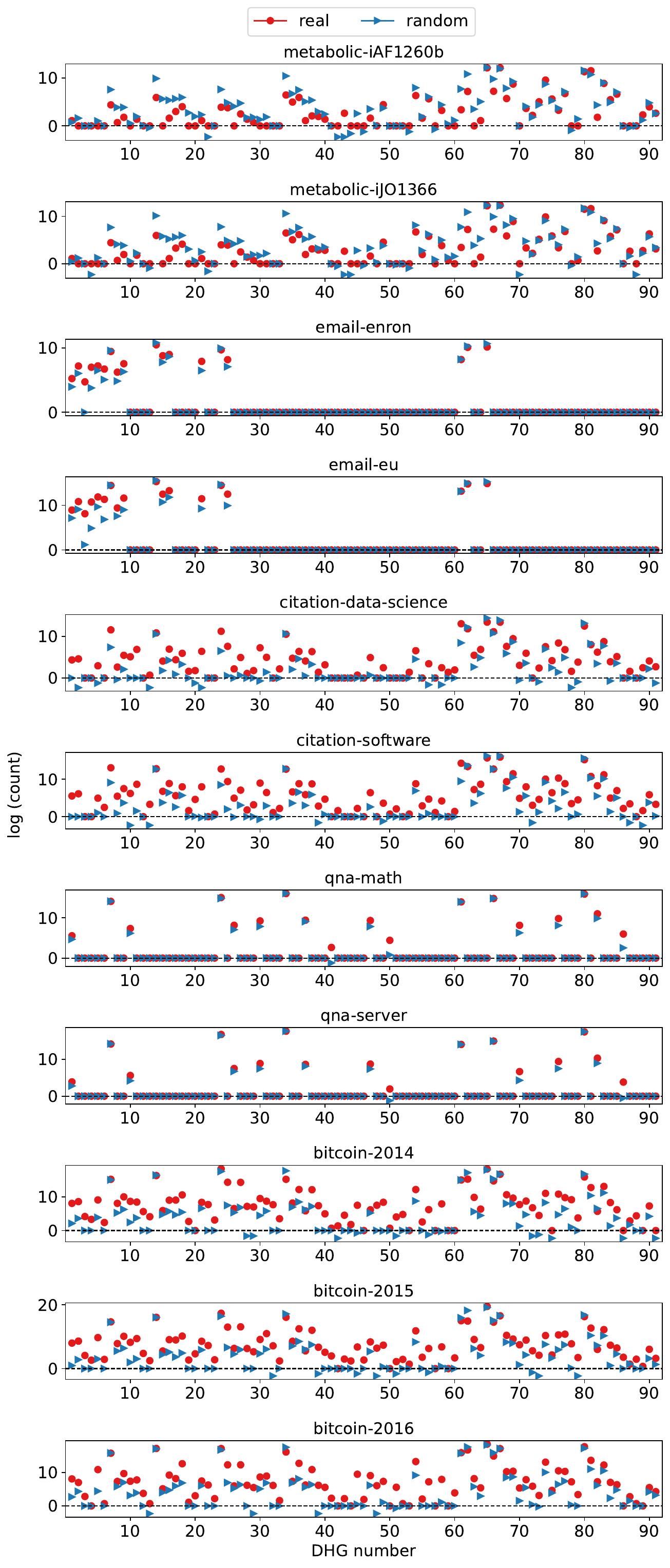}
    \caption{Log counts of \ours in real-world and randomized directed hypergraphs (DHs). The counts of \ours are clearly distinguished in real-world and randomized DHs.}
    \label{fig:count_all}
\end{figure}

\subsection{Temporal Analysis (Section~\ref{sec:exp_discover} of the main paper)}\label{app:exp_discover}
We analyze time-evolving DHs (all considered DHs except for
the \textt{metabolic} datasets, which do not contain timestamps). A time-evolving DH $G=(V, E)$ has timestamp $\tau_e$ for each $e\in E$, i.e., $e=\langle H, T, \tau_e\rangle$. With regard to the \textt{citation} datasets, \textt{citation-data-science} consists of 41 timestamps, while \textt{citation-software} includes 49 timestamps, with each publication year assigned as a timestamp. For the \textt{email}, \textt{qna}, and \textt{bitcoin} datasets, we consider $10$ timestamps $\{t_1, t_2, \cdots, t_{10}\}$ of the same interval, where $t_1<\dots <t_{10}=\max_{e\in E} \tau_e$ and
$t_{2}-t_{1}=t_{1}-
\min_{e\in E} \tau_e$.
For each timestamp $t_i$ above, we create a snapshot (i.e., sub-DH) where  the edge set is $E_i=\{e: \tau_e \leq t_i\}$ and the node set $V_i=\bigcup_{e\in E_i}\bar{e}$. Then, we compute the occurrence ratio of each \our in each sub-DH.

\subsection{Experimental Settings for Hyperarc Prediction (Section~\ref{sec:exp_apply} of the main paper)}\label{app:application_hyperparameter}
In this section, we list the hyperparameter settings of the feature vectors and classifiers used for the hyperarc prediction and report the detailed experimental setups.

\smallsection{Hyperparameter settings of feature vectors}
The embedding dimensions of node2vec, hyper2vec, and deep hyperedges are all fixed to 91.
Other hyperparameters of these methods are fixed to their default settings at the following links:
\begin{itemize}[leftmargin=*]
    \item \textbf{node2vec (n2v)}: \url{https://github.com/aditya-grover/node2vec}
    \item \textbf{hyper2vec (h2v)}: \url{https://github.com/jeffhj/NHNE}
    \item \textbf{deep hyperedges (deep-h)}: \url{https://github.com/0xpayne/deep-hyperedges}
\end{itemize}
Note that h-motif and triad do not have any hyperparameters.

\smallsection{Details of classifiers}
The hyperparameters of the tree-based classifiers (Decision Tree, Random Forest, XGBoost, and LightGBM), Logistic Regressor, KNN, and MLP are fixed to their default settings at the following links: 
\begin{itemize}[leftmargin=*]
    \item \textbf{Decision Tree (DT)}: \url{https://scikit-learn.org/stable/modules/generated/sklearn.tree.DecisionTreeClassifier}
    \item \textbf{Random Forest (RF)}: \url{https://scikit-learn.org/stable/modules/generated/sklearn.ensemble.RandomForestClassifier}
    \item \textbf{XGBoost (XGB)}: \url{https://xgboost.readthedocs.io/en/stable/}
    \item \textbf{LightGBM (LGBM)}: \url{https://lightgbm.readthedocs.io/en/latest/pythonapi/lightgbm.LGBMClassifier}
    \item \textbf{Logistic Regressor (LR)}: \url{https://scikit-learn.org/stable/modules/generated/sklearn.linear_model.LogisticRegression}
    \item \textbf{KNN}: \url{https://scikit-learn.org/stable/modules/generated/sklearn.neighbors.KNeighborsClassifier}
    \item \textbf{MLP}: \url{https://scikit-learn.org/stable/modules/generated/sklearn.neural_network.MLPClassifier}
\end{itemize}

To utilize the hyperarc-level feature vectors for hypergraph-neural-network-based (HNN-based) classifiers (HGNN, FastHyperGCN, and UniGCNII), which assume that the input is an undirected hypergraph with node features, we use the ``dual'' hypergraph of a given directed hypergraph (DH) as the input of the classifiers.
In the dual hypergraph $G^{*}=(V^{*}, E^{*}$) of a DH $G=(V,E)$, each node is a hyperarc in $G$ (i.e., $V^{*}=E$) and each hyperedge is the set of hyperarcs containing a node in $G$ (i.e., $E^{*}=\{E_v:v \in V\}$).

The hyperparameters of these HNN-based classifiers  are set as follows:
the number of layers and hidden dimension are all fixed to 2 and 128, respectively. 
We train HGNN and UniGCNII for 500 epochs using Adam
with a learning rate of 0.001 and a weight decay of $10^{-6}$, and FastHyperGCN for 200 epochs using Adam with a learning rate of 0.01, a weight decay of $5\times 10^{-4}$, and a dropout rate of 0.5.

For these HNN-based classifiers, we employ early stopping, and to this end, we divide the fake hyperarcs into the train, validation, and test sets using a 6:2:2 ratio. In each set, we uniformly sample the same number of real hyperarcs as the number of fake hyperarcs. For every 50 epochs, we measure the validation accuracy and save the model parameters. 
Then, we use the checkpoint (i.e., saved model parameters) with the highest validation accuracy to measure test performance.

\subsection{Application Results (Section~\ref{sec:exp_apply} of the main paper)}\label{app:application}
In this section, we report the full results of the hyperarc prediction problem.
Table~\ref{tab:results_all_in_one} reports the accuracy and AUROC and results (average over 100 trials).
The best performances are highlighted in bold, and the second-best performances are underlined.
Notably, in terms of average ranking, using \our vectors, including the dimension reduced versions, performs best in %
most settings, achieving up to $33\%$ higher AUROC on the \textt{bitcoin-2016} dataset and a $47\%$ higher accuracy on the \textt{bitcoin-2014} dataset than the second best features.

\begin{table*}
  \centering
  \caption{Hyperarc prediction performance. We compare nine hyperarc feature vectors using ten classifiers. The best performances are highlighted in bold, and the second-best performances are underlined. Notably, using \our vectors leads to the best performance (up to $47\%$ and $33\%$ better in terms of accuracy and AUROC, respectively) in most settings, indicating that \ours extract highly informative hyperarc features.}
  \label{tab:results_all_in_one}
  \vspace{-3mm}
  \subfigure[\textt{metabolic-iAF1260b}]{
    \scalebox{0.85}{%

    }
  }
\end{table*}

\clearpage

\end{document}


\title{Online Appendix: Four-set Hypergraphlets for Characterization of Directed Hypergraphs\vspace{-10mm}}

\maketitle

\noindent\red{If the preview is not legible, please download the PDF file.}

\label{sec:appendix_online}

\subsection{Datasets (Table~II of the main paper)}\label{app:data}
We describe the representation, sources, and preprocessing steps of the datasets used in this work. As a default preprocessing step, we remove all duplicate hyperarcs and self-loops.

\begin{itemize}[leftmargin = 0.5cm]
    \item \textbf{Metabolic datasets:} We use two metabolic datasets, \textt{metabolic-iAF1260b}, and \textt{metabolic-iJO1366}.
    Each node represents a gene, and each hyperarc represents a metabolic reaction, where each head and tail set indicates a set of genes. 
    When the genes in the tail set participate in a metabolic reaction, they become the genes in the head set of the corresponding hyperarc.
They are provided in the complete form of directed hypergraphs \cite{yadati2020nhp} 
which do not require any preprocessing step.
    \item \textbf{Email datasets:} We use two email datasets, \textt{email-enron} \cite{chodrow2020annotated}
    , and \textt{email-eu} \cite{snapnets}
    .
    Each node represents an account, and each hyperarc represents an email from a sender to one or more recipients, where the tail set consists of a node representing the sender, and the head set consists of nodes representing the recipients. We transformed the original pairwise graph into a directed hypergraph by considering all edges occurring at the same timestamp from the same sender as a single email (hyperarc).
Note that the size of tail sets is always 1 in these datasets. (i.e., $|T_i|=1, \forall i=\{1,\dots,|E|\}$.)
    \item \textbf{Citation datasets:} We use two citation datasets from DBLP: \textt{citation-data-science}, and \textt{citation-software} \cite{Tang:08KDD, sinha2015overview}
    . 
    We extracted papers in the fields of data science or software from the dataset. Nodes represent authors and the (head and tail) sets indicate co-authors of each publication. Hyperarcs indicate citation relationships, with the tail set representing the paper that cites the head set paper. %
    \item \textbf{Question \& Answering datasets:} We use two question \& answering datasets, \textt{qna-math} and \textt{qna-server}. 
    Following~\cite{kim2022reciprocity}, we created a directed hypergraph from the log data of the two question-answering sites \cite{stackexchange}
    : Math Exchange and Server Fault. 
    Each node represents a user, and each hyperarc represents a post, with the tail set consisting of the answerers and the head set consisting of the questioner.
Note that the size of head sets is always 1 in these datasets. (i.e., $|H_i|=1, \forall i=\{1,\dots,|E|\}$.)
    \item \textbf{Bitcoin transaction datasets:} We use three bitcoin transaction datasets, \textt{bitcoin-2014}, \textt{bitcoin-2015}, and \textt{bitcoin-2016}, created from the original datasets \cite{wu2021detecting}
    , as suggested in \cite{kim2022reciprocity}.
They contain the first 1.5 million transactions in Nov 2014, Jun 2015, and Jan 2016, respectively.
Each node represents an individual account, and each hyperarc represents a cryptocurrency transaction. The tail set of a hyperarc corresponds to the accounts selling the cryptocurrency, while the head set corresponds to the accounts buying the corresponding cryptocurrency.
\end{itemize}

\subsection{Concentration Bounds (Section~IV of the main paper)}\label{app:con}

We present the sample concentration bounds of D-MoCHy and CODA-A, along with their corresponding proofs as below.

\begin{lemma}[Hoeffding's inequality~\cite{hoeffding1994probability}]
\label{lem:hoeff}
Let $X_1, X_2, \dots, X_n$ be independent random variables with $a_j\leq X_j\leq b_j$ for all $j\in [n]$. Consider the sum of random variables $X=X_1+\dots+X_n$. Then for any $t>0$, we have 
\[
    \Pr[|X-\mu|\geq t]\leq 2\exp\left(-\frac{2t^2}{\sum_{j=1}^n (b_j-a_j)^2}\right).
\]
\end{lemma}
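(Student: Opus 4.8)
The plan is to prove Lemma~\ref{lem:hoeff} by the Chernoff--Cram\'er exponential-moment method. First I would center the variables, setting $Y_j := X_j - \mathbb{E}[X_j]$ and $Y := X-\mu = \sum_{j=1}^n Y_j$, so that each $Y_j$ has mean zero and takes values in an interval of length $b_j-a_j$ that contains $0$. For the upper tail, I would fix $s>0$ and apply Markov's inequality to the nonnegative random variable $e^{sY}$:
\[
\Pr[Y \ge t] \;=\; \Pr\!\big[e^{sY} \ge e^{st}\big] \;\le\; e^{-st}\,\mathbb{E}\!\big[e^{sY}\big].
\]
Independence of the $Y_j$ then lets me factorize the moment generating function, $\mathbb{E}[e^{sY}] = \prod_{j=1}^n \mathbb{E}[e^{sY_j}]$, reducing the problem to bounding the MGF of a single mean-zero variable supported on an interval of known width.

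The key ingredient --- and the step I expect to be the main obstacle --- is \emph{Hoeffding's lemma}: if $\mathbb{E}[Z]=0$ and $Z\in[\alpha,\beta]$ almost surely, then $\mathbb{E}[e^{sZ}]\le \exp(s^2(\beta-\alpha)^2/8)$. I would establish it by exploiting convexity of $z\mapsto e^{sz}$: for $z\in[\alpha,\beta]$ we have $e^{sz}\le \frac{\beta-z}{\beta-\alpha}e^{s\alpha}+\frac{z-\alpha}{\beta-\alpha}e^{s\beta}$, and taking expectations with $\mathbb{E}[Z]=0$ yields $\mathbb{E}[e^{sZ}]\le \frac{\beta}{\beta-\alpha}e^{s\alpha}-\frac{\alpha}{\beta-\alpha}e^{s\beta}=:e^{\varphi(s)}$. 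Introducing $p := -\alpha/(\beta-\alpha)\in[0,1]$ and $u := s(\beta-\alpha)$, a direct computation gives $\varphi(s) = -pu+\log(1-p+pe^{u})$ with $\varphi(0)=\varphi'(0)=0$ and $\varphi''(s)=(\beta-\alpha)^2\,q(1-q)$, where $q:=\frac{pe^{u}}{1-p+pe^{u}}\in[0,1]$; since $q(1-q)\le 1/4$ always, Taylor's theorem with Lagrange remainder gives $\varphi(s)\le s^2(\beta-\alpha)^2/8$. Verifying the second-derivative bound is the one place a careful (but elementary) calculation is needed.

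Finally I would combine the pieces: applying Hoeffding's lemma to each $Y_j$ (interval width $b_j-a_j$) and writing $S:=\sum_{j=1}^n (b_j-a_j)^2$, multiplication gives $\mathbb{E}[e^{sY}]\le e^{s^2 S/8}$, hence
\[
\Pr[Y\ge t] \;\le\; \exp\!\Big(-st+\tfrac{s^2 S}{8}\Big).
\]
The exponent is a convex quadratic in $s$ minimized at $s^\star = 4t/S>0$, and substituting $s^\star$ yields $\Pr[Y\ge t]\le e^{-2t^2/S}$. Running the same argument on $-Y$ (whose summands $-Y_j$ have identical widths) bounds $\Pr[Y\le -t]$ by the same quantity, and a union bound over $\{Y\ge t\}\cup\{Y\le -t\}=\{|X-\mu|\ge t\}$ produces the factor $2$, giving $\Pr[|X-\mu|\ge t]\le 2e^{-2t^2/S}$ as claimed.
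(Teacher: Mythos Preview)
Your proof plan is correct and is exactly the standard Chernoff--Cram\'er argument that one finds in textbooks: Markov's inequality on $e^{sY}$, factorization by independence, Hoeffding's lemma for each summand via convexity and the second-derivative bound $\varphi''(s)\le (\beta-\alpha)^2/4$, optimization over $s>0$, and a union bound for the two tails. There is nothing to compare against, however: the paper does not prove Lemma~\ref{lem:hoeff} at all but simply states it with a citation to Hoeffding's original work and then invokes it as a black box in the proofs of Propositions~\ref{prop:naive:con} and~\ref{prop:adv:con}.
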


\begin{proposition}[Sample Concentration Bound of \naive]\label{prop:naive:con}
   For any $\epsilon>0$, if $n\geq \frac{1}{2\epsilon^2}\ln (\frac{2}{\delta})$ and $|\Omega|>0$, $\Pr(|C[i]-|\Omega_i||\geq |\Omega|\cdot \epsilon)\leq \delta$, $\forall i\in [m]$.
\end{proposition}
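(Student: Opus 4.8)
The plan is to express $C[i]$ as a suitably scaled sum of independent $[0,1]$-bounded indicator variables and then invoke Hoeffding's inequality (Lemma~\ref{lem:hoeff}) directly. Recall that \naive draws $n$ samples independently and uniformly at random (with replacement) from the ground set $\Omega$ (e.g., the set of sampled substructures), and for each drawn element tests whether it lies in $\Omega_i$, the subset corresponding to hypergraphlet type $i$. Writing $Z_{ij}$ for the indicator of the event that the $j$-th sample belongs to $\Omega_i$, the estimator has the form $C[i] = \frac{|\Omega|}{n}\sum_{j=1}^{n} Z_{ij}$. Since each sample is uniform on $\Omega$, $\mathbb{E}[Z_{ij}] = |\Omega_i|/|\Omega|$, hence $\mathbb{E}[C[i]] = |\Omega_i|$; i.e., $C[i]$ is unbiased. (The assumption $|\Omega|>0$ is exactly what makes the scaling factor $|\Omega|/n$ and these ratios well defined.)

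First I would rewrite the deviation event in terms of the raw sum. Setting $X := \sum_{j=1}^n Z_{ij}$, the event $|C[i]-|\Omega_i||\ge |\Omega|\cdot\epsilon$ is equivalent to $\bigl|X - n\,|\Omega_i|/|\Omega|\bigr| \ge n\epsilon$, that is, a deviation of $X$ from its mean $\mu=\mathbb{E}[X]=n\,|\Omega_i|/|\Omega|$ by at least $t=n\epsilon$. The $Z_{ij}$ are independent (sampling is with replacement) and satisfy $0\le Z_{ij}\le 1$, so in the notation of Lemma~\ref{lem:hoeff} we may take $a_j=0$, $b_j=1$, giving $\sum_{j=1}^n(b_j-a_j)^2 = n$.

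Next I would apply Lemma~\ref{lem:hoeff} to $X$ with $t=n\epsilon$, which yields
\[
\Pr\bigl(|C[i]-|\Omega_i||\ge |\Omega|\cdot\epsilon\bigr) \;=\; \Pr\bigl(|X-\mu|\ge n\epsilon\bigr) \;\le\; 2\exp\!\left(-\frac{2n^2\epsilon^2}{n}\right) \;=\; 2\exp(-2n\epsilon^2).
\]
Finally, requiring $2\exp(-2n\epsilon^2)\le\delta$ and solving for $n$ gives $n\ge \frac{1}{2\epsilon^2}\ln\frac{2}{\delta}$, which is precisely the stated hypothesis. The conclusion holds for every $i\in[m]$ since the argument is identical for each type; note that no union bound is needed because the claim is a per-$i$ statement rather than a simultaneous one.

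The argument is essentially routine once the estimator is written in the right form, so the only step needing care is the first one: correctly identifying $C[i]$ as $\frac{|\Omega|}{n}\sum_{j}Z_{ij}$ with the $Z_{ij}$ independent and $[0,1]$-valued, which hinges on the samples being drawn independently (with replacement). After that, the identity $\sum_j(b_j-a_j)^2=n$ together with the choice $t=n\epsilon$ collapses Hoeffding's bound to the clean form $2e^{-2n\epsilon^2}$, and the remainder is elementary algebra.
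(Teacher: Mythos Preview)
Your proposal is correct and essentially identical to the paper's proof: both write $C[i]$ as a sum of $n$ independent bounded contributions and apply Hoeffding's inequality (Lemma~\ref{lem:hoeff}) to obtain the bound $2\exp(-2n\epsilon^2)$. The only cosmetic difference is that the paper applies Hoeffding directly to the scaled summands $X^i_j\in[0,|\Omega|/n]$ with $t=|\Omega|\epsilon$, whereas you first factor out $|\Omega|/n$ and apply Hoeffding to the $[0,1]$-valued indicators $Z_{ij}$ with $t=n\epsilon$; the two computations are equivalent by rescaling.
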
 

\begin{proof}
    Let $t:=|\Omega|\cdot\epsilon$. Since $\mathbb{E}[C[i]]=|\Omega_i|$ and $X^i_1$, $X^i_2$, $\cdots$, $X^i_n$ are independent random variables such that $0\leq X^i_j\leq \frac{1}{np(e, e')}=\frac{|\Omega|}{n}$ where $j\in [n]$, we can apply Hoeffding's inequality (Lemma~\ref{lem:hoeff}):
    \begin{align*}
        \Pr[|C[i]-|\Omega_i||\geq |\Omega|\cdot \epsilon]
        &\leq 2\exp\left(-\frac{2\epsilon^2|\Omega|^2}{n(|\Omega|/n)^2}\right)
        \\&\leq 2\exp(-2\epsilon^2n)\leq \delta. \qedhere
    \end{align*}
\end{proof}

\begin{proposition}[Sample Concentration Bound of \adv]\label{prop:adv:con}
   Let $W=\sum_{v\in V} w[v]$ and $\overline{h}=HM(|\bar{e}\cap \bar{e}'|^2)$ be a harmonic mean of $|\bar{e}\cap \bar{e}'|^2$ for all $(e, e')\in \Omega$, \textit{i.e.,} $HM(|\bar{e}\cap \bar{e}'|^2)=n/\sum_{(e, e')\in \Omega} \frac{1}{|\bar{e}\cap \bar{e}'|^2}$. Then for any $\epsilon>0$, if $n\geq \frac{1}{2\epsilon^2\overline{h}}\ln (\frac{2}{\delta})$ and $W>0$, $\Pr(|C[i]-|\Omega_i||\geq W\cdot \epsilon)\leq \delta$, $\forall i\in [m]$. \qedhere
\end{proposition}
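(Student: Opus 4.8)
The plan is to reuse the template of the proof of Proposition~\ref{prop:naive:con}: write $C[i]$ as a sum of independent, bounded random variables with mean $|\Omega_i|$, and invoke Hoeffding's inequality (Lemma~\ref{lem:hoeff}). The only genuine difference from the \naive case is that \adv samples a hyperedge pair $(e,e')\in\Omega$ with probability proportional to $|\bar e\cap\bar e'|$ rather than uniformly, so the per-sample ranges become $|\bar e\cap\bar e'|$-dependent; this is precisely where the weighted total $W$ and the harmonic mean $\overline h$ enter the final exponent.

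Concretely, first I would recall from the description of \adv that it is equivalent to including each pair $(e,e')\in\Omega$ in the sample independently with probability $\pi_{ee'}=\frac{n}{W}\,|\bar e\cap\bar e'|$ (truncated to $1$ when this exceeds $1$, which only removes randomness), using that $W=\sum_{v\in V}w[v]=\sum_{(e,e')\in\Omega}|\bar e\cap\bar e'|$, and that $C[i]$ is the corresponding Horvitz--Thompson estimator $C[i]=\sum_{(e,e')\in\Omega_i}\frac{\mathbbm{1}[(e,e')\ \text{sampled}]}{\pi_{ee'}}$, so $\mathbb{E}[C[i]]=|\Omega_i|$. Then $C[i]$ is a sum of $|\Omega_i|$ independent random variables, the $(e,e')$-th of which lies in $\left[0,\frac{1}{\pi_{ee'}}\right]=\left[0,\frac{W}{n\,|\bar e\cap\bar e'|}\right]$. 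Next I would bound the Hoeffding denominator, using $\Omega_i\subseteq\Omega$ and the definition of $\overline h$:
\[
\sum_{(e,e')\in\Omega_i}\left(\frac{W}{n\,|\bar e\cap\bar e'|}\right)^{2}\ \le\ \frac{W^{2}}{n^{2}}\sum_{(e,e')\in\Omega}\frac{1}{|\bar e\cap\bar e'|^{2}}\ =\ \frac{W^{2}}{n\,\overline h}.
\]
Applying Lemma~\ref{lem:hoeff} with $t=W\epsilon$ then yields $\Pr[\,|C[i]-|\Omega_i||\ge W\epsilon\,]\le 2\exp\!\left(-\frac{2W^{2}\epsilon^{2}}{W^{2}/(n\overline h)}\right)=2\exp(-2\epsilon^{2}n\,\overline h)$, and substituting the hypothesis $n\ge\frac{1}{2\epsilon^{2}\overline h}\ln(2/\delta)$ makes the right-hand side at most $\delta$; since nothing here is specific to $i$, the bound holds for every $i\in[m]$.

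The hard part is not the calculation but the translation step at the beginning: one has to verify that \adv's sampling really does make the contributions of distinct pairs independent (so that $\sum_j(b_j-a_j)^2$ is a deterministic sum over $\Omega_i$ rather than a random sum over draws), that the inclusion probability is exactly $\frac{n}{W}|\bar e\cap\bar e'|$ as expressed through the node weights $w[v]$, and that truncating $\pi_{ee'}$ at $1$ damages neither the unbiasedness nor the range bound. Once this bookkeeping is settled, the rest is the same two-line Hoeffding argument as in Proposition~\ref{prop:naive:con}.
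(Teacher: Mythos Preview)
Your calculation from the Hoeffding denominator onward matches the paper line for line: both arrive at $\sum_{(e,e')\in\Omega}\bigl(W/(n|\bar e\cap\bar e'|)\bigr)^{2}=W^{2}/(n\overline h)$ and hence $2\exp(-2\epsilon^{2}n\overline h)\le\delta$. The difference is in the decomposition. The paper does \emph{not} view \adv as Poisson/Bernoulli inclusion of pairs; it writes $C[i]=\sum_{j=1}^{n}X^i_j$ as a sum over the $n$ i.i.d.\ \emph{draws}, with $p(e,e')=|\bar e\cap\bar e'|/W$ the per-draw sampling probability and $0\le X^i_j\le 1/(np(e,e'))=W/(n|\bar e\cap\bar e'|)$. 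Your Horvitz--Thompson-over-$\Omega_i$ framing is a different (and in some ways cleaner) way to get the same Hoeffding denominator, but it rests on the assumption you yourself flag as ``the hard part'': that the algorithm is equivalent to independent inclusion of each pair with probability $n|\bar e\cap\bar e'|/W$. That equivalence is \emph{not} what the paper uses, and for a with-replacement sampler it is false in general, so that translation step would not go through as stated.

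In short: same Hoeffding calculation, different underlying random-variable decomposition. The paper indexes over draws; you index over pairs. If you keep your write-up, you should either (i) confirm from the main paper that \adv really performs independent per-pair inclusion (in which case your argument is the more transparent one), or (ii) switch to the paper's draw-indexed decomposition $C[i]=\sum_{j=1}^n X^i_j$ and quote the per-draw range bound directly, which removes the need for the Horvitz--Thompson detour.
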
 

\begin{proof}
    Let $t:=W\cdot\epsilon$. Since $\mathbb{E}[C[i]]=|\Omega_i|$ and $X^i_1$, $X^i_2$, $\cdots$, $X^i_n$ are independent random variables such that $0\leq X^i_j\leq \frac{1}{np(e, e')}=\frac{W}{n\cdot |\bar{e}\cap \bar{e}'|}$ where $j\in [n]$, we can apply Hoeffding's inequality (Lemma~\ref{lem:hoeff}):
    \begin{align*}
        \Pr[|C[i]-|\Omega_i||\geq W\cdot \epsilon]
        &\leq 2\exp\left(-\frac{2\epsilon^2W^2}{\sum_{(e, e')\in \Omega}\left(\frac{W}{n|\bar{e}\cap \bar{e}'|}\right)^2}\right)
        \\& \leq 2\exp\left(-2\epsilon^2n\overline{h}\right)\leq \delta. \qedhere
    \end{align*}
\end{proof}

\subsection{\ata: Baseline Algorithm (Section~V-D of the main paper)}\label{app:ata}

In this section, we provide a detailed description of the baseline algorithm, \ata, including its unbiasedness, variance, and complexity. Its pseudocode is presented in Algorithm~\ref{algo:ata}.
Here, $p(e, e')=\left(\frac{1}{|N_{e}|}+\frac{1}{|N_{e'}|}\right)\cdot \frac{1}{|E_{\geq 1}|}$ where $E_{\geq 1}=\{e: N_{e}\geq 1\}$. 
Assume $E_{\geq 1}$ is given at first. 
Also, for space efficiency, we assume $N_{e}$ is maintained (Line~\ref{algo:ata:nei}).
$HM(A, B)$ on Line~\ref{algo:ata:hm} denotes the harmonic mean of $A$ and $B$. %

\begin{algorithm}[htb]
    \small
    \caption{\ata}\label{algo:ata}
    \SetKwInput{KwInput}{Input}
    \SetKwInput{KwOutput}{Output}
    \KwInput{(1) a directed hypergraph: $G=(V, E)$ \\
            \quad\quad\quad(2) \# of samples $n=q\cdot |E|$ for $q\in (0,1)$\\}
    \KwOutput{$C[i]$ for every $i\in [m]$}
     $C[i] \leftarrow 0, \forall i \in [m]$ \\ %
    \For{$1:n$}{
        Choose $e\in E_{\geq 1}$ uniformly at random \\   
        $N_{e}\leftarrow \{e'\in E\setminus \{e\}: e\cap e'\neq \emptyset\}$\label{algo:ata:nei}\\
        $C[f(e, e')]\leftarrow C[f(e, e')]+\frac{|E_{\geq 1}|}{2\cdot n}\cdot HM(|N_{e}|, |N_{e'}|)$ \label{algo:ata:hm}\\
    }
     \Return{$C$} 
\end{algorithm}

\begin{proposition}[Unbiasedness of \ata]
Algorithm~\ref{algo:ata} is unbiased, i.e., $\mathbb{E}[C[i]]=|\Omega_i|$. 
\end{proposition}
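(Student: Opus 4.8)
The plan is to prove unbiasedness by the standard Horvitz--Thompson argument: show that each of the $n$ i.i.d.\ sampling iterations contributes exactly $|\Omega_i|/n$ to $C[i]$ in expectation, and then sum over iterations by linearity of expectation. The two facts that make this work are (i) the per-iteration sampling distribution over pairs equals $p(e,e')$, and (ii) the quantity added on Line~\ref{algo:ata:hm} equals the inverse-probability weight $\tfrac{1}{n\,p(e,e')}$.

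First I would pin down the per-iteration distribution. In one iteration the algorithm draws $e$ uniformly from $E_{\geq 1}$ and then $e'$ uniformly from $N_e$ (the neighbor set built on Line~\ref{algo:ata:nei}), producing the pair $\{e,e'\}$. For any $(e,e')\in\Omega$ we have $e\cap e'\neq\emptyset$, which is a symmetric relation, so $e'\in N_e$ \emph{and} $e\in N_{e'}$; in particular $|N_e|,|N_{e'}|\geq 1$, hence $e,e'\in E_{\geq 1}$, so every pair in $\Omega$ is reachable with positive probability. The pair $\{e,e'\}$ is generated in exactly two mutually exclusive ways --- ``draw $e$, then $e'$'' or ``draw $e'$, then $e$'' --- so its probability is
\[
 \frac{1}{|E_{\geq 1}|}\cdot\frac{1}{|N_e|} + \frac{1}{|E_{\geq 1}|}\cdot\frac{1}{|N_{e'}|} = \left(\frac{1}{|N_e|}+\frac{1}{|N_{e'}|}\right)\frac{1}{|E_{\geq 1}|} = p(e,e').
\]

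Next I would identify the increment with $\tfrac{1}{n\,p(e,e')}$. Using $HM(A,B)=2/(1/A+1/B)$,
\[
 \frac{|E_{\geq 1}|}{2n}\,HM(|N_e|,|N_{e'}|) = \frac{|E_{\geq 1}|}{n}\cdot\frac{1}{\frac{1}{|N_e|}+\frac{1}{|N_{e'}|}} = \frac{1}{n\,p(e,e')}.
\]
Writing $X^i_j$ for the amount added to $C[i]$ in iteration $j$, we have $X^i_j=\tfrac{1}{n\,p(e,e')}\mathbf{1}[f(e,e')=i]$ when $\{e,e'\}$ is the sampled pair and $0$ otherwise, so summing over the support $\Omega$,
\[
 \mathbb{E}[X^i_j] = \sum_{(e,e')\in\Omega} p(e,e')\cdot\frac{1}{n\,p(e,e')}\,\mathbf{1}[f(e,e')=i] = \frac{|\Omega_i|}{n}.
\]
Since $C[i]=\sum_{j=1}^n X^i_j$ and the iterations are i.i.d., linearity of expectation gives $\mathbb{E}[C[i]]=n\cdot\tfrac{|\Omega_i|}{n}=|\Omega_i|$.

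The only non-mechanical step is the first one: verifying that the two-stage draw induces precisely the marginal $p(e,e')$ on unordered pairs and --- crucially for unbiasedness --- that no pair in $\Omega$ has zero sampling probability. Both hinge on the symmetry of the overlap relation, which yields $e'\in N_e \iff e\in N_{e'}$ and hence $\Omega\subseteq E_{\geq 1}\times E_{\geq 1}$; once this is in place, the rest is the routine importance-sampling computation above.
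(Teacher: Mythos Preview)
Your proof is correct and follows precisely the Horvitz--Thompson argument that the paper intends: the paper's own proof is simply ``Follow the flow of the proof of Proposition~2,'' and you have written out exactly that flow---identifying the per-iteration sampling probability as $p(e,e')$, recognizing the increment on Line~\ref{algo:ata:hm} as the inverse-probability weight $1/(n\,p(e,e'))$, and summing by linearity. Your explicit check that every pair in $\Omega$ is reachable (via symmetry of the overlap relation) is a detail the paper leaves implicit.
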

\begin{proof}
    Follow the flow of the proof of Proposition~2.%
\end{proof}

\begin{proposition}[Variance of \ata]
   The variance of $C[i]$ obtained by Algorithm~\ref{algo:ata} is \begin{align*}Var[C[i]]&=\sum_{(e, e')\in \Omega_i}\frac{1}{n}\left(\frac{1}{p(e, e')}-1\right)\\&=\sum_{(e, e')\in \Omega_i}\frac{1}{n}\left(\frac{|E_{\geq 1}|}{2}\cdot HM(|N_{e}|, |N_{e'}|)-1\right).\end{align*} %
\end{proposition}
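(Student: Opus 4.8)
The plan is to establish the variance by recognizing that $C[i]$ is a sum over the $n$ independent sampling rounds of per-round contributions, each of which is an unbiased estimator of $|\Omega_i|$, and then apply the standard variance formula for an average of i.i.d. unbiased estimators. Concretely, for each round $\ell \in [n]$ define the indicator-weighted contribution $Y^i_\ell$ to $C[i]$: it equals $\frac{|E_{\geq 1}|}{2n} HM(|N_e|,|N_{e'}|) = \frac{1}{n\,p(e,e')}$ when the sampled ordered pair $(e,e')$ lies in $\Omega_i$, and $0$ otherwise. First I would verify that $\mathbb{E}[n Y^i_\ell] = \sum_{(e,e')\in\Omega_i} p(e,e') \cdot \frac{1}{p(e,e')} = |\Omega_i|$, which is exactly the computation underlying Proposition~4 (unbiasedness); this also pins down the sampling probability $p(e,e')$ and the fact that $HM(|N_e|,|N_{e'}|) = \frac{2}{1/|N_e| + 1/|N_{e'}|}$ makes $\frac{|E_{\geq1}|}{2}HM(|N_e|,|N_{e'}|) = 1/p(e,e')$.

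Next I would compute the second moment of a single round. Since in one round exactly one ordered pair $(e,e')$ is produced, the random variable $n Y^i_\ell$ takes the value $\frac{1}{p(e,e')}$ with probability $p(e,e')$ for each $(e,e')\in\Omega_i$ and $0$ otherwise, so $\mathbb{E}[(nY^i_\ell)^2] = \sum_{(e,e')\in\Omega_i} p(e,e')\cdot \frac{1}{p(e,e')^2} = \sum_{(e,e')\in\Omega_i} \frac{1}{p(e,e')}$. Hence $\mathrm{Var}[nY^i_\ell] = \sum_{(e,e')\in\Omega_i}\frac{1}{p(e,e')} - |\Omega_i|^2$. Because the $n$ rounds are independent and identically distributed, $C[i] = \sum_{\ell=1}^n Y^i_\ell$ gives $\mathrm{Var}[C[i]] = n\,\mathrm{Var}[Y^i_\ell] = \frac{1}{n}\mathrm{Var}[nY^i_\ell] = \frac{1}{n}\Big(\sum_{(e,e')\in\Omega_i}\frac{1}{p(e,e')} - |\Omega_i|^2\Big)$.

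Finally I would rewrite this in the stated form. Expanding $|\Omega_i|^2 = \big(\sum_{(e,e')\in\Omega_i} 1\big)^2$ and regrouping, a cleaner route is to observe that the per-round contribution can be written as a sum of Bernoulli-type terms — but since only one pair is drawn per round, the exact identity $\sum_{(e,e')\in\Omega_i}\frac{1}{p(e,e')} - |\Omega_i|^2 = \sum_{(e,e')\in\Omega_i}\big(\frac{1}{p(e,e')} - 1\big) - \big(|\Omega_i|^2 - |\Omega_i|\big)$ shows the off-diagonal covariance terms among distinct pairs within $\Omega_i$ must also be accounted for; in fact the correct bookkeeping (treating $C[i]$ per round as $\sum_{(e,e')\in\Omega_i} \frac{1}{n p(e,e')}\mathbbm{1}[\text{pair }(e,e')\text{ drawn}]$ and using that these indicators are negatively correlated across pairs, being mutually exclusive) yields precisely $\mathrm{Var}[C[i]] = \sum_{(e,e')\in\Omega_i}\frac{1}{n}\big(\frac{1}{p(e,e')}-1\big)$, and then substituting $\frac{1}{p(e,e')} = \frac{|E_{\geq1}|}{2}HM(|N_e|,|N_{e'}|)$ gives the second line. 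The main obstacle is handling the cross terms correctly: one must not treat the per-pair indicators as independent, but rather exploit that exactly one pair is sampled per round so the covariance contributions telescope cleanly against the $-|\Omega_i|^2$ term, and I would double-check this reduction carefully, ideally by cross-referencing the analogous variance computation for \naive that this proof is meant to parallel.
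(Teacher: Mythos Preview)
Your derivation is correct up through $\mathrm{Var}[C[i]]=\frac{1}{n}\big(\sum_{(e,e')\in\Omega_i}\frac{1}{p(e,e')}-|\Omega_i|^2\big)$, but your attempted reconciliation with the stated formula is wrong. Writing the per-round contribution as $\sum_{(e,e')\in\Omega_i}\frac{1}{p(e,e')}\,\mathbbm{1}[(e,e')\text{ drawn}]$ and using mutual exclusivity does \emph{not} change the answer: each diagonal term contributes $\frac{1}{p(e,e')^2}\,p(e,e')(1-p(e,e'))=\frac{1}{p(e,e')}-1$, and each off-diagonal pair contributes $\frac{1}{p_ap_b}\,\mathrm{Cov}(\mathbbm{1}_a,\mathbbm{1}_b)=\frac{1}{p_ap_b}(-p_ap_b)=-1$, so the total is again $\sum\frac{1}{p}-|\Omega_i|-|\Omega_i|(|\Omega_i|-1)=\sum\frac{1}{p}-|\Omega_i|^2$. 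There is no telescoping that removes the cross terms; your two computations are the same computation, and neither equals the proposition's expression unless $|\Omega_i|\le 1$.

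The discrepancy is $\frac{|\Omega_i|(|\Omega_i|-1)}{n}$, exactly the (negative) cross-pair covariance you identified. The paper's argument simply says ``follow the flow of the proof of Proposition~3,'' which treats the per-pair contributions $\frac{1}{np(e,e')}\sum_{j=1}^n\mathbbm{1}^j_{(e,e')}$ as if they were independent across pairs and sums the resulting per-pair variances $\frac{1}{n}\big(\frac{1}{p(e,e')}-1\big)$. That yields the displayed formula, but as an \emph{upper bound} on the true variance (since the dropped covariances are negative), not an exact identity. You should either present the expression as this upper bound and justify it by nonpositivity of the cross terms, or report the exact variance you already computed; what you cannot do is claim the negative correlations make the two coincide.
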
 
\begin{proof}
    Follow the flow of the proof of Proposition~3.%
\end{proof}
\begin{proposition}[Time \& Space complexity of \ata]\label{prop:a2a}
    The time complexity of Algorithm~\ref{algo:ata} is $O(n\cdot (\max_{e\in E}|\bar{e}|\cdot \max_{e\in E}|N_{e}|))$.
    Its space complexity is $O(\sum_{e\in E}|\bar{e}|)$.
\end{proposition}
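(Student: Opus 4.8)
The plan is to analyze Algorithm~\ref{algo:ata} line by line, separating the cost of the $n$-iteration loop into the cost of constructing the neighbor set $N_e$ (Line~\ref{algo:ata:nei}) and the cost of the counting update (Line~\ref{algo:ata:hm}). Everything outside the loop (initializing $C$, and the assumed pre-computation of $E_{\geq 1}$) is negligible or excluded by hypothesis, so the time bound is $n$ times the per-iteration cost.

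For the time complexity, the key step is bounding the cost of Line~\ref{algo:ata:nei}, which dominates. To build $N_e = \{e' \in E\setminus\{e\} : e\cap e' \neq \emptyset\}$, one iterates over the nodes $v \in \bar e$ and, for each, scans the hyperarcs incident to $v$; collecting and deduplicating these gives $N_e$. Using the incidence structure maintained in $O(\sum_{e\in E}|\bar e|)$ space, the work is $O(\sum_{v\in\bar e} \deg(v))$, which is at most $|\bar e|\cdot \max_{e\in E}|N_e|$ after noting $\deg(v) \le \max_{e}|N_e| + 1$ up to constants (each incident hyperarc of $v$ is a neighbor of any $e \ni v$). Taking worst-case $|\bar e| \le \max_{e\in E}|\bar e|$ yields a per-iteration cost of $O(\max_{e\in E}|\bar e|\cdot\max_{e\in E}|N_e|)$ for Line~\ref{algo:ata:nei}. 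The remaining per-iteration work — sampling $e$ uniformly from $E_{\geq 1}$, computing $f(e,e')$ and $HM(|N_e|,|N_{e'}|)$, and the array update on Line~\ref{algo:ata:hm} — is $O(\max_{e\in E}|\bar e|)$ or less (comparing two hyperarcs to classify the hypergraphlet), which is absorbed into the dominant term. Summing over the $n$ iterations gives the claimed $O(n\cdot(\max_{e\in E}|\bar e|\cdot\max_{e\in E}|N_e|))$.

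For the space complexity, I would argue that the persistent storage is the hypergraph itself together with the incidence lists, which is $O(\sum_{e\in E}|\bar e|)$; the output array $C$ has size $m = O(1)$ (a fixed number of hypergraphlet types), and the transient $N_e$ constructed inside an iteration has size $O(\max_{e\in E}|N_e|) = O(|E|) = O(\sum_{e\in E}|\bar e|)$, so it does not increase the asymptotic bound. Hence the overall space is $O(\sum_{e\in E}|\bar e|)$.

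The main obstacle is making the per-iteration bound for Line~\ref{algo:ata:nei} precise: one must be careful that enumerating neighbors through node-incidence lists and deduplicating does not blow up, i.e., that $\sum_{v\in\bar e}\deg(v)$ is genuinely $O(|\bar e|\cdot\max_e|N_e|)$ and that deduplication can be done within this budget (e.g., via a hash set or a boolean marker array over $E$). Once that accounting is pinned down, the rest is routine bookkeeping.
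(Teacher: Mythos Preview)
Your proposal is correct and follows essentially the same approach as the paper's proof: both bound the per-iteration cost of constructing $N_e$ by viewing it as a union of the incidence lists of the nodes in $\bar e$ (the paper phrases this abstractly as ``set union of $p$ sets of size at most $q$ in $O(pq)$ time,'' which is exactly your $\sum_{v\in\bar e}\deg(v)\le|\bar e|\cdot\max_e|N_e|$), and both handle the space bound by noting that the hypergraph storage and the transient $N_e$ are each $O(\sum_{e\in E}|\bar e|)$. Your treatment is in fact more explicit than the paper's about the deduplication mechanism and the $\deg(v)\le\max_e|N_e|+1$ inequality, but the underlying argument is the same.
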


\begin{proof}
    The information of a given directed graph is stored in $O(\sum_{e\in E}|\bar{e}|)$ space at first. 
    For time complexity, $O(\max_{e\in E} |\bar{e}|\cdot \max_{e\in E}|N_{e}|)$ time is required assuming $O(p\cdot q)$ time is taken for set union when there are $p$ sets, of which size bounded by $q$.
    For space complexity, $O(|N_{e}|)\in O(\sum_{e\in E} |\bar{e}|)$ space is needed. Checking $f(e, e')$ requires $O(\max_{(e, e')\in \Omega}\min(|\bar{e}|, |\bar{e}'|))$-time, which is bounded by $O(\max_{e\in E} |\bar{e}|\cdot \max_{e\in E} |N_e|)$.
\end{proof}

\subsection{Count Distributions (Section~V-B of the main paper)}\label{app:count_distribution}

We analyze the occurrence distributions of \ours in real-world and randomized directed hypergraphs (DHs). To ensure statistical significance, we generate ten randomized DHs and report the average counts. As shown in Figure~\ref{fig:count_all}, the counts of \ours in real-world directed hypergraphs are distinct from those in randomized directed hypergraphs.

\begin{figure}[t!]
    \centering
    \includegraphics[width=\linewidth]{figures/count_ratio_metabolic_compact 5.pdf}
    \caption{Log counts of \ours in real-world and randomized directed hypergraphs (DHs). The counts of \ours are clearly distinguished in real-world and randomized DHs.}
    \label{fig:count_all}
\end{figure}

\subsection{Temporal Analysis (Section~V-E of the main paper)}\label{app:exp_discover}
We analyze time-evolving DHs (all considered DHs except for
the \textt{metabolic} datasets, which do not contain timestamps). A time-evolving DH $G=(V, E)$ has timestamp $\tau_e$ for each $e\in E$, i.e., $e=\langle H, T, \tau_e\rangle$. With regard to the \textt{citation} datasets, \textt{citation-data-science} consists of 41 timestamps, while \textt{citation-software} includes 49 timestamps, with each publication year assigned as a timestamp. For the \textt{email}, \textt{qna}, and \textt{bitcoin} datasets, we consider $10$ timestamps $\{t_1, t_2, \cdots, t_{10}\}$ of the same interval, where $t_1<\dots <t_{10}=\max_{e\in E} \tau_e$ and
$t_{2}-t_{1}=t_{1}-
\min_{e\in E} \tau_e$.
For each timestamp $t_i$ above, we create a snapshot (i.e., sub-DH) where the edge set is $E_i=\{e: \tau_e \leq t_i\}$ and the node set $V_i=\bigcup_{e\in E_i}\bar{e}$. Then, we compute the occurrence ratio of each \our in each sub-DH.

In Figure~\ref{fig:time_all}, we visualize \ours whose ratio is greater than specific thresholds while aggregating the rest as \textt{Others}.
The threshold values for the \textt{citation}, \textt{email}, \textt{qna}, and \textt{bitcoin} datasets are 0.03, 0.1, 0.01, and 0.03, respectively.
In addition, we summarize the time-evolving patterns of the top 10 most frequent \ours in Figure~\ref{fig:time_tendency_all}.
Notably, the datasets from the same domain not only have the same set of frequent \ours but also exhibit similar time-evolving tendencies for each \our.

\begin{figure}[htb!]
    \centering
    \includegraphics[width=\linewidth]{figures/TVP_CITATION (3).pdf}
    \vspace{1mm}
    \includegraphics[width=\linewidth]{figures/TVP_EMAIL.pdf}
    \vspace{1mm}
    \includegraphics[width=\linewidth]{figures/TVP_QNA.pdf}
    \vspace{1mm}
    \includegraphics[width=\linewidth]{figures/TVP_BITCOIN (2).pdf}
    \caption{
    Directed hypergraphs from the same domain not only share the same set of frequent \ours but also exhibit similar time-evolving tendencies. \ours
with ratios below certain thresholds (spec., $0.03$ for the \textt{citation} datasets,  $0.1$ for the \textt{email} datasets, $0.01$ for the \textt{qna} datasets, and 0.03 for the \textt{bitcoin} datasets) are grouped as \textt{Others}.
    }
    \label{fig:time_all}
\end{figure}

\begin{figure}[htb!]
    \vspace{-2mm}
    \centering
    \includegraphics[width=\linewidth]{figures/tendency (2).pdf}
     \caption{Time-evolving trend.
     Each row represents a domain of datasets and each column represents a \our. %
     \ours marked with  `X' have ratios less than a predefined threshold.
     The ratios of \ours marked with \red{$\uparrow$} or \blue{$\downarrow$} tend to increase or decrease, respectively. 
     }
     \label{fig:time_tendency_all}
\end{figure}

\smallsection{Citation datasets}
The ratios of \our-65, -67, and -80 increase, while those of \our-7, -61, and -62 decrease.
In \our-65, -67, and -80, all non-intersecting regions (i.e., regions 1, 4, 5, and 8 in Figure~1(c)  contain at least one node, while in \our-7, -61, and -62, some non-intersecting regions are empty.
That is, the number of non-empty non-intersecting regions increases over time.

\smallsection{Email datasets}
Among the frequent \ours, only the ratio of \our-24 decreases, whereas the ratios of \our-7, -14, -62, and -65 increase.
\our-24 is distinct from the other \ours in that two tail sets intersect in it.
Given that the size of tail sets in email datasets is always 1, the probability of two hyperarcs sharing the same tail set decreases over time, resulting in the decrease of the \our-24 ratio.

\smallsection{Qna datasets}
There is a dramatic decrease in the ratio of \our-80 and a dramatic increase in the ratio of \our-34.
The ratios of \our-61, -66, and -80  decrease, while the ratios of \our-7, -24, and -34 increase.
The increasing \ours have empty non-intersecting areas in their tail sets, while the declining \ours have no such areas.
This indicates that the number of users who frequently answer questions increases.

\smallsection{Bitcoin datasets}
Although the two \textt{bitcoin} datasets share the same set of frequent \ours, their tendencies to increase or decrease over time differ in them.
For example, \our-65 becomes dominant in the bitcoin-2016 dataset, with a ratio of 0.37, while it has a ratio of only 0.07 in the other dataset. Conversely, \our-24 becomes dominant in the bitcoin-2014 dataset, with a ratio of 0.81, but only has a ratio of 0.10 in the other dataset. The main difference between \our-65 and \our-24 is the presence or absence of an intersection between tail sets.
\our-24 has only an intersection between tail sets, but \our-65 does not.
This indicates that the diversity of accounts participating in transactions increases over time.

\subsection{Experimental Settings for Hyperarc Prediction (Section~V-C of the main paper)}\label{app:application_hyperparameter}
In this section, we list the hyperparameter settings of the feature vectors and classifiers used for the hyperarc prediction and report the detailed experimental setups.

\smallsection{Hyperparameter settings of feature vectors}
The embedding dimensions of node2vec, hyper2vec, and deep hyperedges are all fixed to 91.
Other hyperparameters of these methods are fixed to their default settings at the following links:
\begin{itemize}[leftmargin=*]
    \item \textbf{node2vec (n2v)}: \url{https://github.com/aditya-grover/node2vec}
    \item \textbf{hyper2vec (h2v)}: \url{https://github.com/jeffhj/NHNE}
    \item \textbf{deep hyperedges (deep-h)}: \url{https://github.com/0xpayne/deep-hyperedges}
\end{itemize}
Note that h-motif and triad do not have any hyperparameters.

\smallsection{Details of classifiers}
The hyperparameters of the tree-based classifiers (Decision Tree, Random Forest, XGBoost, and LightGBM), Logistic Regressor, KNN, and MLP are fixed to their default settings at the following links: 
\begin{itemize}[leftmargin=*]
    \item \textbf{Decision Tree (DT)}: \url{https://scikit-learn.org/stable/modules/generated/sklearn.tree.DecisionTreeClassifier}
    \item \textbf{Random Forest (RF)}: \url{https://scikit-learn.org/stable/modules/generated/sklearn.ensemble.RandomForestClassifier}
    \item \textbf{XGBoost (XGB)}: \url{https://xgboost.readthedocs.io/en/stable/}
    \item \textbf{LightGBM (LGBM)}: \url{https://lightgbm.readthedocs.io/en/latest/pythonapi/lightgbm.LGBMClassifier}
    \item \textbf{Logistic Regressor (LR)}: \url{https://scikit-learn.org/stable/modules/generated/sklearn.linear_model.LogisticRegression}
    \item \textbf{KNN}: \url{https://scikit-learn.org/stable/modules/generated/sklearn.neighbors.KNeighborsClassifier}
    \item \textbf{MLP}: \url{https://scikit-learn.org/stable/modules/generated/sklearn.neural_network.MLPClassifier}
\end{itemize}

To utilize the hyperarc-level feature vectors for hypergraph-neural-network-based (HNN-based) classifiers (HGNN, FastHyperGCN, and UniGCNII), which assume that the input is an undirected hypergraph with node features, we use the ``dual'' hypergraph of a given directed hypergraph (DH) as the input of the classifiers.
In the dual hypergraph $G^{*}=(V^{*}, E^{*}$) of a DH $G=(V,E)$, each node is a hyperarc in $G$ (i.e., $V^{*}=E$) and each hyperedge is the set of hyperarcs containing a node in $G$ (i.e., $E^{*}=\{E_v:v \in V\}$).

The hyperparameters of these HNN-based classifiers  are set as follows:
the number of layers and hidden dimension are all fixed to 2 and 128, respectively. 
We train HGNN and UniGCNII for 500 epochs using Adam
~\cite{kingma2014adam}
with a learning rate of 0.001 and a weight decay of $10^{-6}$, and FastHyperGCN for 200 epochs using Adam with a learning rate of 0.01, a weight decay of $5\times 10^{-4}$, and a dropout rate of 0.5.

For these HNN-based classifiers, we employ early stopping, and to this end, we divide the fake hyperarcs into the train, validation, and test sets using a 6:2:2 ratio. In each set, we uniformly sample the same number of real hyperarcs as the number of fake hyperarcs. For every 50 epochs, we measure the validation accuracy and save the model parameters. 
Then, we use the checkpoint (i.e., saved model parameters) with the highest validation accuracy to measure test performance.

\subsection{Application Results (Section~V-C of the main paper)}\label{app:application}
In this section, we report the full results of the hyperarc prediction problem.
Tables~\ref{tab:results_all_auc} and  \ref{tab:results_all_acc} report the AUROC and accuracy results (average over 100 trials).
The best performances are highlighted in bold.
Notably, in terms of average ranking, using \our vectors performs best in all settings, achieving up to $33\%$ higher AUROC on the \textt{bitcoin-2016} dataset and a $47\%$ higher accuracy on the \textt{bitcoin-2014} dataset than the second best features.

\begin{table*}[t!]
  \setlength\tabcolsep{4pt}
  \caption{
  Hyperarc prediction performance (AUROC). We compare six hyperarc feature vectors using ten classifiers, and the best performances are highlighted in bold. Notably, using \our vectors  leads to the best (up to $33\%$ better) performance in most settings, indicating that \ours extract highly informative hyperarc features.
  }
  
  \label{tab:results_all_auc}
  \centering
  \scalebox{0.7}{%
\begin{tabular}{l|c|c|c|c|c|c|c|c|c|c|c|c|c|c}
\toprule
Model & Dataset & \our & h-motif & triad & n2v & h2v & deep-h & Dataset & \our & h-motif & triad & n2v & h2v & deep-h\\
\midrule
LR & \multirow{13}{*}{\textt{MB}} & 0.728$\pm$0.022 & 0.724$\pm$0.015 & 0.580$\pm$0.017 & 0.506$\pm$0.021 & 0.509$\pm$0.013 & \textbf{0.785$\pm$0.017}
& \multirow{13}{*}{\textt{M6}} & 0.721$\pm$0.016 & 0.727$\pm$0.020 & 0.622$\pm$0.012 & 0.501$\pm$0.015 & 0.504$\pm$0.013 & \textbf{0.795$\pm$0.010} \\
RF && \textbf{0.826$\pm$0.015} & 0.784$\pm$0.020 & 0.703$\pm$0.011 & 0.539$\pm$0.018 & 0.533$\pm$0.013 & 0.793$\pm$0.018
&& \textbf{0.834$\pm$0.011} & 0.794$\pm$0.018 & 0.730$\pm$0.013 & 0.541$\pm$0.010 & 0.546$\pm$0.027 & 0.803$\pm$0.012 \\
DT && \textbf{0.651$\pm$0.009} & 0.632$\pm$0.019 & 0.599$\pm$0.015 & 0.512$\pm$0.014 & 0.506$\pm$0.011 & 0.583$\pm$0.016
&& \textbf{0.656$\pm$0.018} & 0.641$\pm$0.015 & 0.611$\pm$0.009 & 0.509$\pm$0.014 & 0.517$\pm$0.016 & 0.582$\pm$0.014 \\
KNN && \textbf{0.762$\pm$0.013} & 0.755$\pm$0.018 & 0.673$\pm$0.011 & 0.550$\pm$0.018 & 0.552$\pm$0.020 & 0.760$\pm$0.015
&& 0.746$\pm$0.012 & 0.752$\pm$0.017 & 0.673$\pm$0.018 & 0.535$\pm$0.015 & 0.560$\pm$0.018 & \textbf{0.778$\pm$0.013} \\
MLP && 0.696$\pm$0.016 & 0.694$\pm$0.021 & 0.622$\pm$0.021 & 0.543$\pm$0.016 & 0.559$\pm$0.020 & \textbf{0.800$\pm$0.009}
&& 0.687$\pm$0.016 & 0.705$\pm$0.011 & 0.646$\pm$0.018 & 0.548$\pm$0.012 & 0.559$\pm$0.017 & \textbf{0.804$\pm$0.011 }\\
XGB && \textbf{0.812$\pm$0.013} & 0.752$\pm$0.025 & 0.684$\pm$0.019 & 0.518$\pm$0.018 & 0.526$\pm$0.018 & 0.775$\pm$0.010
&& \textbf{0.819$\pm$0.020} & 0.765$\pm$0.016 & 0.710$\pm$0.014 & 0.530$\pm$0.014 & 0.537$\pm$0.019 & 0.792$\pm$0.011 \\
LGBM && \textbf{0.799$\pm$0.057} & 0.736$\pm$0.062 & 0.680$\pm$0.072 & 0.540$\pm$0.059 & 0.529$\pm$0.021 & 0.785$\pm$0.012
&& \textbf{0.822$\pm$0.047} & 0.747$\pm$0.055 & 0.687$\pm$0.063 & 0.515$\pm$0.057 & 0.541$\pm$0.016 & 0.794$\pm$0.015 \\
HGNN && \textbf{0.594$\pm$0.060} & 0.554$\pm$0.059 & 0.541$\pm$0.067 & 0.479$\pm$0.058 & 0.506$\pm$0.086 & 0.522$\pm$0.060
&& \textbf{0.601$\pm$0.062} & 0.551$\pm$0.049 & 0.552$\pm$0.052 & 0.509$\pm$0.051 & 0.526$\pm$0.027 & 0.534$\pm$0.043 \\
FHGCN && \textbf{0.729$\pm$0.065} & 0.567$\pm$0.072 & 0.612$\pm$0.056 & 0.506$\pm$0.059 & 0.490$\pm$0.067 & 0.561$\pm$0.060
&& \textbf{0.720$\pm$0.072} & 0.553$\pm$0.070 & 0.615$\pm$0.065 & 0.506$\pm$0.059 & 0.484$\pm$0.059 & 0.566$\pm$0.065 \\
UGCN\uppercase\expandafter{\romannumeral2} && \textbf{0.680$\pm$0.060} & 0.674$\pm$0.048 & 0.641$\pm$0.097 & 0.488$\pm$0.049 & 0.489$\pm$0.068 & 0.607$\pm$0.092
&& \textbf{0.663$\pm$0.058} & 0.663$\pm$0.044 & 0.647$\pm$0.056 & 0.502$\pm$0.045 & 0.503$\pm$0.035 & 0.642$\pm$0.043 \\
\rowcolor{Gray} Max && \textbf{0.826$\pm$0.015} & 0.784$\pm$0.020 & 0.703$\pm$0.011 & 0.550$\pm$0.018 & 0.559$\pm$0.020 & 0.800$\pm$0.009
&& \textbf{0.834$\pm$0.011} & 0.794$\pm$0.018 & 0.730$\pm$0.013 & 0.548$\pm$0.012 & 0.560$\pm$0.018 & 0.804$\pm$0.011 \\
\rowcolor{Gray} Avg. && \textbf{0.728$\pm$0.075} & 0.687$\pm$0.080 & 0.634$\pm$0.052 & 0.518$\pm$0.024 & 0.520$\pm$0.024 & 0.697$\pm$0.113
&& \textbf{0.727$\pm$0.079} & 0.690$\pm$0.086 & 0.649$\pm$0.053 & 0.520$\pm$0.017 & 0.528$\pm$0.025 & 0.709$\pm$0.113 \\
\rowcolor{Gray} Rank Avg. && \textbf{1.200$\pm$0.422} & 2.700$\pm$0.483 & 3.500$\pm$0.707 & 5.600$\pm$0.516 & 5.400$\pm$0.516 & 2.600$\pm$1.265
&& \textbf{1.600$\pm$0.966} & 2.600$\pm$0.699 & 3.400$\pm$0.843 & 5.900$\pm$0.316 & 5.100$\pm$0.316 & 2.400$\pm$1.265 \\
\midrule
\midrule
LR &\multirow{13}{*}{\textt{EN}}& \textbf{0.883$\pm$0.014} & 0.826$\pm$0.015 & 0.783$\pm$0.016 & 0.627$\pm$0.017 & 0.480$\pm$0.020 & 0.634$\pm$0.023 
& \multirow{13}{*}{\textt{EU}}& \textbf{0.933$\pm$0.002} & 0.838$\pm$0.004 & 0.876$\pm$0.003 & 0.691$\pm$0.004 & 0.494$\pm$0.003 & 0.722$\pm$0.002\\
RF && \textbf{0.880$\pm$0.016} & 0.856$\pm$0.010 & 0.773$\pm$0.021 & 0.684$\pm$0.024 & 0.624$\pm$0.029 & 0.623$\pm$0.024
&& \textbf{0.960$\pm$0.002} & 0.921$\pm$0.003 & 0.901$\pm$0.003 & 0.737$\pm$0.003 & 0.529$\pm$0.004 & 0.770$\pm$0.004 \\
DT && \textbf{0.707$\pm$0.010} & 0.690$\pm$0.019 & 0.652$\pm$0.019 & 0.551$\pm$0.022 & 0.529$\pm$0.017 & 0.542$\pm$0.021
&& \textbf{0.852$\pm$0.003} & 0.762$\pm$0.004 & 0.785$\pm$0.005 & 0.546$\pm$0.004 & 0.504$\pm$0.007 & 0.564$\pm$0.007 \\
KNN && \textbf{0.846$\pm$0.013} & 0.810$\pm$0.015 & 0.745$\pm$0.020 & 0.685$\pm$0.019 & 0.597$\pm$0.026 & 0.591$\pm$0.024
&& \textbf{0.920$\pm$0.002} & 0.854$\pm$0.005 & 0.881$\pm$0.003 & 0.701$\pm$0.003 & 0.586$\pm$0.019 & 0.749$\pm$0.003 \\
MLP && \textbf{0.883$\pm$0.013} & 0.825$\pm$0.017 & 0.780$\pm$0.013 & 0.697$\pm$0.019 & 0.618$\pm$0.024 & 0.636$\pm$0.022 
&& \textbf{0.962$\pm$0.001} & 0.909$\pm$0.004 & 0.911$\pm$0.004 & 0.790$\pm$0.005 & 0.539$\pm$0.015 & 0.802$\pm$0.003 \\
XGB && \textbf{0.863$\pm$0.017} & 0.847$\pm$0.014 & 0.765$\pm$0.022 & 0.666$\pm$0.022 & 0.632$\pm$0.025 & 0.610$\pm$0.019
&& \textbf{0.961$\pm$0.001} & 0.917$\pm$0.002 & 0.905$\pm$0.003 & 0.747$\pm$0.005 & 0.557$\pm$0.011 & 0.777$\pm$0.002 \\
LGBM && 0.842$\pm$0.056 & \textbf{0.847$\pm$0.060} & 0.769$\pm$0.063 & 0.655$\pm$0.081 & 0.635$\pm$0.027 & 0.612$\pm$0.027
&& \textbf{0.963$\pm$0.005} & 0.923$\pm$0.006 & 0.908$\pm$0.008 & 0.761$\pm$0.014 & 0.555$\pm$0.007 & 0.796$\pm$0.002 \\
HGNN && 0.505$\pm$0.052 & 0.543$\pm$0.070 & \textbf{0.554$\pm$0.070} & 0.532$\pm$0.070 & 0.543$\pm$0.062 & 0.532$\pm$0.080  
&& \textbf{0.529$\pm$0.028} & 0.520$\pm$0.020 & 0.516$\pm$0.017 & 0.504$\pm$0.018 & 0.502$\pm$0.014 & 0.500$\pm$0.017 \\
FHGCN && \textbf{0.804$\pm$0.102} & 0.738$\pm$0.099 & 0.773$\pm$0.073 & 0.548$\pm$0.075 & 0.619$\pm$0.084 & 0.570$\pm$0.073
&& 0.849$\pm$0.054 & 0.724$\pm$0.051 & \textbf{0.888$\pm$0.041} & 0.520$\pm$0.029 & 0.550$\pm$0.057 & 0.614$\pm$0.070 \\
UGCN\uppercase\expandafter{\romannumeral2} && 0.787$\pm$0.048 & 0.767$\pm$0.065 & \textbf{0.788$\pm$0.052} & 0.706$\pm$0.045 & 0.739$\pm$0.056 & 0.606$\pm$0.059 
&& 0.874$\pm$0.010 & 0.805$\pm$0.010 & \textbf{0.912$\pm$0.007} & 0.793$\pm$0.014 & 0.812$\pm$0.020 & 0.784$\pm$0.012 \\
\rowcolor{Gray} Max && \textbf{0.883$\pm$0.014} & 0.856$\pm$0.010 & 0.788$\pm$0.052 & 0.706$\pm$0.045 & 0.739$\pm$0.056 & 0.636$\pm$0.022 
&& \textbf{0.963$\pm$0.005} & 0.923$\pm$0.006 & 0.788$\pm$0.052 & 0.706$\pm$0.045 & 0.739$\pm$0.056 & 0.802$\pm$0.003 \\
\rowcolor{Gray} Avg. && \textbf{0.800$\pm$0.117} & 0.775$\pm$0.098 & 0.738$\pm$0.076 & 0.635$\pm$0.067 & 0.602$\pm$0.071 & 0.596$\pm$0.037 
&& \textbf{0.880$\pm$0.132} & 0.817$\pm$0.126 & 0.848$\pm$0.123 & 0.679$\pm$0.113 & 0.563$\pm$0.092 & 0.708$\pm$0.108 \\
\rowcolor{Gray} Rank Avg. && \textbf{1.700$\pm$1.567} & 2.200$\pm$0.632 & 2.500$\pm$0.850 & 4.500$\pm$0.707 & 4.800$\pm$1.229 & 5.300$\pm$0.823
&& \textbf{1.200$\pm$0.422} & 2.700$\pm$0.675 & 2.200$\pm$0.789 & 5.000$\pm$0.471 & 5.500$\pm$0.972 & 4.400$\pm$0.843 \\
\midrule
\midrule
LR &\multirow{13}{*}{\textt{CD}}& \textbf{0.969$\pm$0.002} & 0.857$\pm$0.003 & 0.644$\pm$0.003 & 0.564$\pm$0.004 & 0.512$\pm$0.003 & 0.653$\pm$0.004
& \multirow{13}{*}{\textt{CS}} & \textbf{0.980$\pm$0.001} & 0.890$\pm$0.002 & 0.722$\pm$0.004 & 0.584$\pm$0.002 & 0.516$\pm$0.002 & 0.688$\pm$0.002 \\
RF && \textbf{0.997$\pm$0.000} & 0.939$\pm$0.001 & 0.703$\pm$0.007 & 0.573$\pm$0.006 & 0.498$\pm$0.004 & 0.707$\pm$0.005
&& \textbf{0.999$\pm$0.000} & 0.945$\pm$0.003 & 0.777$\pm$0.004 & 0.611$\pm$0.004 & 0.502$\pm$0.005 & 0.739$\pm$0.004 \\
DT && \textbf{0.963$\pm$0.001} & 0.777$\pm$0.007 & 0.583$\pm$0.006 & 0.511$\pm$0.004 & 0.497$\pm$0.003 & 0.539$\pm$0.004
&& \textbf{0.974$\pm$0.001} & 0.783$\pm$0.011 & 0.623$\pm$0.004 & 0.519$\pm$0.003 & 0.498$\pm$0.004 & 0.548$\pm$0.002 \\
KNN && \textbf{0.962$\pm$0.002} & 0.857$\pm$0.004 & 0.629$\pm$0.008 & 0.595$\pm$0.005 & 0.520$\pm$0.010 & 0.633$\pm$0.005
&& \textbf{0.974$\pm$0.001} & 0.899$\pm$0.002 & 0.702$\pm$0.004 & 0.632$\pm$0.003 & 0.531$\pm$0.011 & 0.659$\pm$0.002 \\
MLP && \textbf{0.990$\pm$0.001} & 0.914$\pm$0.007 & 0.693$\pm$0.009 & 0.597$\pm$0.006 & 0.510$\pm$0.005 & 0.776$\pm$0.003
&& \textbf{0.996$\pm$0.000} & 0.930$\pm$0.004 & 0.775$\pm$0.005 & 0.671$\pm$0.006 & 0.544$\pm$0.014 & 0.806$\pm$0.002 \\
XGB && \textbf{0.997$\pm$0.000} & 0.937$\pm$0.002 & 0.703$\pm$0.008 & 0.579$\pm$0.004 & 0.507$\pm$0.007 & 0.747$\pm$0.004 
&& \textbf{0.999$\pm$0.000} & 0.939$\pm$0.005 & 0.781$\pm$0.004 & 0.620$\pm$0.004 & 0.522$\pm$0.007 & 0.786$\pm$0.002  \\
LGBM && \textbf{0.998$\pm$0.001} & 0.941$\pm$0.008 & 0.719$\pm$0.021 & 0.587$\pm$0.014 & 0.511$\pm$0.004 & 0.782$\pm$0.003
&& \textbf{0.999$\pm$0.000} & 0.946$\pm$0.017 & 0.792$\pm$0.012 & 0.626$\pm$0.019 & 0.525$\pm$0.005 & 0.807$\pm$0.001 \\
HGNN && \textbf{0.629$\pm$0.008} & 0.528$\pm$0.017 & 0.523$\pm$0.013 & 0.537$\pm$0.014 & 0.540$\pm$0.022 & 0.510$\pm$0.013
&& 0.587$\pm$0.006 & 0.509$\pm$0.023 & 0.510$\pm$0.012 & 0.562$\pm$0.009 & \textbf{0.590$\pm$0.012} & 0.508$\pm$0.007 \\
FHGCN && \textbf{0.861$\pm$0.059} & 0.700$\pm$0.053 & 0.637$\pm$0.060 & 0.516$\pm$0.028 & 0.513$\pm$0.021 & 0.505$\pm$0.011
&& \textbf{0.852$\pm$0.044} & 0.718$\pm$0.050 & 0.714$\pm$0.054 & 0.513$\pm$0.025 & 0.511$\pm$0.024 & 0.505$\pm$0.013 \\
UGCN\uppercase\expandafter{\romannumeral2} && \textbf{0.975$\pm$0.005} & 0.874$\pm$0.010 & 0.714$\pm$0.013 & 0.851$\pm$0.016 & 0.672$\pm$0.042 & 0.547$\pm$0.012
&& \textbf{0.971$\pm$0.003} & 0.812$\pm$0.010 & 0.792$\pm$0.011 & 0.899$\pm$0.016 & 0.895$\pm$0.007 & 0.609$\pm$0.015 \\
\rowcolor{Gray} Max && \textbf{0.998$\pm$0.001} & 0.941$\pm$0.008 & 0.719$\pm$0.021 & 0.851$\pm$0.016 & 0.672$\pm$0.042 &  0.782$\pm$0.003
&& \textbf{0.999$\pm$0.000} & 0.946$\pm$0.017 & 0.792$\pm$0.012 & 0.899$\pm$0.016 & 0.895$\pm$0.007 & 0.807$\pm$0.001 \\
\rowcolor{Gray} Avg. && \textbf{0.934$\pm$0.115} & 0.832$\pm$0.132 & 0.655$\pm$0.064 & 0.591$\pm$0.096 & 0.528$\pm$0.052 & 0.640$\pm$0.110
&& \textbf{0.933$\pm$0.129} & 0.837$\pm$0.139 & 0.719$\pm$0.091 & 0.624$\pm$0.109 & 0.563$\pm$0.119 & 0.666$\pm$0.119 \\
\rowcolor{Gray} Rank Avg. && \textbf{1.000$\pm$0.000} & 2.200$\pm$0.632 & 3.900$\pm$0.568 & 4.500$\pm$0.850 & 5.400$\pm$1.265 & 4.000$\pm$1.414
&& \textbf{1.100$\pm$0.316} & 2.500$\pm$1.080 & 3.600$\pm$0.699 & 4.400$\pm$1.075 & 5.100$\pm$1.729 & 4.300$\pm$1.252 \\
\midrule
\midrule
LR &\multirow{13}{*}{\textt{QM}}& \textbf{0.652$\pm$0.003} & 0.620$\pm$0.004 & 0.580$\pm$0.004 & 0.499$\pm$0.003 & 0.514$\pm$0.003 & 0.600$\pm$0.002
& \multirow{13}{*}{\textt{QS}} & \textbf{0.598$\pm$0.001} & 0.553$\pm$0.002 & 0.558$\pm$0.003 & 0.512$\pm$0.001 & 0.528$\pm$0.002 & 0.586$\pm$0.002 \\
RF && 0.734$\pm$0.003 & 0.657$\pm$0.006 & 0.663$\pm$0.006 & 0.505$\pm$0.003 & 0.504$\pm$0.004 & \textbf{0.767$\pm$0.005}
&& 0.728$\pm$0.001 & 0.595$\pm$0.002 & 0.637$\pm$0.003 & 0.501$\pm$0.003 & 0.504$\pm$0.003 & \textbf{0.748$\pm$0.002} \\
DT && \textbf{0.621$\pm$0.003} & 0.547$\pm$0.006 & 0.569$\pm$0.004 & 0.502$\pm$0.001 & 0.502$\pm$0.004 & 0.547$\pm$0.004
&& \textbf{0.630$\pm$0.002} & 0.524$\pm$0.002 & 0.571$\pm$0.003 & 0.500$\pm$0.002 & 0.501$\pm$0.002 & 0.545$\pm$0.002 \\
KNN && \textbf{0.638$\pm$0.002} & 0.601$\pm$0.004 & 0.570$\pm$0.003 & 0.506$\pm$0.003 & 0.511$\pm$0.006 & 0.561$\pm$0.002
&& \textbf{0.669$\pm$0.002} & 0.552$\pm$0.003 & 0.599$\pm$0.002 & 0.505$\pm$0.003 & 0.503$\pm$0.004 & 0.610$\pm$0.002 \\
MLP && 0.737$\pm$0.004 & 0.649$\pm$0.005 & 0.634$\pm$0.003 & 0.514$\pm$0.004 & 0.509$\pm$0.002 & \textbf{0.834$\pm$0.004}
&& 0.717$\pm$0.002 & 0.573$\pm$0.002 & 0.630$\pm$0.002 & 0.511$\pm$0.002 & 0.525$\pm$0.004 & \textbf{0.815$\pm$0.001} \\
XGB && 0.744$\pm$0.003 & 0.660$\pm$0.005 & 0.677$\pm$0.005 & 0.504$\pm$0.003 & 0.513$\pm$0.004 & \textbf{0.823$\pm$0.002}
&& 0.745$\pm$0.001 & 0.612$\pm$0.002 & 0.653$\pm$0.003 & 0.504$\pm$0.002 & 0.507$\pm$0.002 & \textbf{0.801$\pm$0.002} \\
LGBM && 0.755$\pm$0.010 & 0.679$\pm$0.016 & 0.694$\pm$0.015 & 0.505$\pm$0.014 & 0.513$\pm$0.003 & \textbf{0.844$\pm$0.002}
&& 0.753$\pm$0.005 & 0.628$\pm$0.014 & 0.669$\pm$0.018 & 0.506$\pm$0.009 & 0.513$\pm$0.003 & \textbf{0.813$\pm$0.002}\\
HGNN && \textbf{0.570$\pm$0.013} & 0.520$\pm$0.011 & 0.535$\pm$0.012 & 0.519$\pm$0.012 & 0.566$\pm$0.009 & 0.551$\pm$0.010
&& \textbf{0.619$\pm$0.008} & 0.543$\pm$0.006 & 0.563$\pm$0.006 & 0.534$\pm$0.008 & 0.599$\pm$0.007 & 0.585$\pm$0.009 \\
FHGCN && 0.538$\pm$0.031 & 0.507$\pm$0.010 & \textbf{0.545$\pm$0.044} & 0.501$\pm$0.005 & 0.507$\pm$0.020 & 0.521$\pm$0.028
&& \textbf{0.593$\pm$0.041} & 0.528$\pm$0.025 & 0.546$\pm$0.049 & 0.502$\pm$0.007 & 0.503$\pm$0.012 & 0.511$\pm$0.018\\
UGCN\uppercase\expandafter{\romannumeral2} && 0.658$\pm$0.007 & 0.620$\pm$0.011 & 0.642$\pm$0.011 & 0.661$\pm$0.012 & 0.689$\pm$0.007 & \textbf{0.817$\pm$0.014}
&& 0.713$\pm$0.005 & 0.636$\pm$0.008 & 0.649$\pm$0.007 & 0.594$\pm$0.010 & 0.713$\pm$0.006 & \textbf{0.824$\pm$0.008}  \\
\rowcolor{Gray} Max && 0.755$\pm$0.010 & 0.679$\pm$0.016 & 0.694$\pm$0.015 & 0.661$\pm$0.012 & 0.689$\pm$0.007 & \textbf{0.844$\pm$0.002}
&& 0.753$\pm$0.005 & 0.636$\pm$0.008 & 0.669$\pm$0.018 & 0.594$\pm$0.010 & 0.713$\pm$0.006 & \textbf{0.824$\pm$0.008} \\
\rowcolor{Gray} Avg. && 0.665$\pm$0.076 & 0.606$\pm$0.061 & 0.611$\pm$0.058 & 0.522$\pm$0.049 & 0.533$\pm$0.058 & \textbf{0.687$\pm$0.140}
&& 0.677$\pm$0.062 & 0.574$\pm$0.041 & 0.608$\pm$0.045 & 0.517$\pm$0.029 & 0.540$\pm$0.068 & \textbf{0.684$\pm$0.127} \\
\rowcolor{Gray} Rank Avg. && \textbf{1.800$\pm$0.919} & 3.800$\pm$1.317 & 3.200$\pm$1.135 & 5.500$\pm$0.972 & 4.500$\pm$1.434 & 2.200$\pm$1.317
&& \textbf{1.700$\pm$0.949} & 4.100$\pm$0.568 & 2.900$\pm$0.568 & 5.900$\pm$0.316 & 4.500$\pm$1.354 & 1.900$\pm$1.101 \\
\midrule
\midrule
LR &\multirow{13}{*}{\textt{B4}}& \textbf{0.693$\pm$0.003} & \multirow{13}{*}{O.O.T.*} & 0.616$\pm$0.001 & 0.569$\pm$0.001 & 0.559$\pm$0.000 & 0.640$\pm$0.000
&\multirow{13}{*}{\textt{B5}} & \textbf{0.696$\pm$0.002} & \multirow{13}{*}{O.O.T.*} & 0.612$\pm$0.003 & 0.592$\pm$0.001 & 0.563$\pm$0.001 & 0.627$\pm$0.001 \\
RF && \textbf{0.976$\pm$0.001} && 0.799$\pm$0.009 & 0.562$\pm$0.003 & 0.524$\pm$0.003 & 0.729$\pm$0.008 
&& \textbf{0.977$\pm$0.001} && 0.755$\pm$0.013 & 0.578$\pm$0.004 & 0.528$\pm$0.003 & 0.715$\pm$0.005 \\
DT && \textbf{0.900$\pm$0.002} && 0.704$\pm$0.008 & 0.507$\pm$0.003 & 0.503$\pm$0.001 & 0.549$\pm$0.004
&& \textbf{0.893$\pm$0.006} && 0.672$\pm$0.012 & 0.512$\pm$0.002 & 0.504$\pm$0.001 & 0.543$\pm$0.003 \\
KNN && \textbf{0.905$\pm$0.002} && 0.744$\pm$0.002 & 0.606$\pm$0.005 & 0.556$\pm$0.002 & 0.694$\pm$0.002
&& \textbf{0.906$\pm$0.007} && 0.705$\pm$0.012 & 0.627$\pm$0.004 & 0.558$\pm$0.003 & 0.681$\pm$0.002 \\
MLP && \textbf{0.968$\pm$0.000} && 0.679$\pm$0.014 & 0.622$\pm$0.008 & 0.577$\pm$0.009 & 0.771$\pm$0.002
&& \textbf{0.967$\pm$0.001} && 0.690$\pm$0.009 & 0.639$\pm$0.008 & 0.596$\pm$0.010 & 0.761$\pm$0.003 \\
XGB && \textbf{0.980$\pm$0.000} && 0.841$\pm$0.008 & 0.575$\pm$0.003 & 0.549$\pm$0.003 & 0.784$\pm$0.001
&& \textbf{0.981$\pm$0.001} && 0.797$\pm$0.014 & 0.596$\pm$0.004 & 0.553$\pm$0.002 & 0.776$\pm$0.001 \\
LGBM && \textbf{0.980$\pm$0.002} && 0.847$\pm$0.030 & 0.583$\pm$0.018 & 0.558$\pm$0.004 & 0.789$\pm$0.001
&& \textbf{0.983$\pm$0.002} && 0.806$\pm$0.040 & 0.597$\pm$0.012 & 0.561$\pm$0.003 & 0.782$\pm$0.002 \\
HGNN && \textbf{0.808$\pm$0.002} && 0.663$\pm$0.004 & 0.664$\pm$0.004 & 0.664$\pm$0.003 & 0.607$\pm$0.002
&& \textbf{0.828$\pm$0.002} && 0.663$\pm$0.003 & 0.681$\pm$0.006 & 0.666$\pm$0.005 & 0.605$\pm$0.002 \\
FHGCN && \textbf{0.822$\pm$0.028} && 0.714$\pm$0.055 & 0.510$\pm$0.021 & 0.506$\pm$0.015 & 0.506$\pm$0.012
&& \textbf{0.823$\pm$0.040} && 0.712$\pm$0.070 & 0.510$\pm$0.018 & 0.504$\pm$0.011 & 0.503$\pm$0.008 \\
UGCN\uppercase\expandafter{\romannumeral2} && \textbf{0.926$\pm$0.003} & ~ & 0.854$\pm$0.003 & 0.855$\pm$0.008 & 0.790$\pm$0.005 & 0.713$\pm$0.003
&& \textbf{0.935$\pm$0.002} && 0.857$\pm$0.006 & 0.863$\pm$0.005 & 0.795$\pm$0.004 & 0.713$\pm$0.004 \\
\rowcolor{Gray} Max && \textbf{0.980$\pm$0.000} && 0.854$\pm$0.003 & 0.855$\pm$0.008 & 0.790$\pm$0.005 &  0.789$\pm$0.001
&& \textbf{0.983$\pm$0.002} && 0.857$\pm$0.006 & 0.863$\pm$0.005 & 0.795$\pm$0.004 &  0.782$\pm$0.002 \\
\rowcolor{Gray} Avg. && \textbf{0.896$\pm$0.095} && 0.746$\pm$0.085 & 0.605$\pm$0.100 & 0.598$\pm$0.083 & 0.678$\pm$0.099
&& \textbf{0.899$\pm$0.093} && 0.727$\pm$0.075 & 0.620$\pm$0.100 & 0.583$\pm$0.088 & 0.671$\pm$0.098 \\
\rowcolor{Gray} Rank Avg. && \textbf{1.000$\pm$0.000} && 2.500$\pm$0.707 & 3.600$\pm$0.699 & 4.500$\pm$0.972 & 3.400$\pm$1.174 
&& \textbf{1.000$\pm$0.000} && 2.500$\pm$0.707 & 3.500$\pm$0.850 & 4.600$\pm$0.699 & 3.400$\pm$1.174 \\
\midrule
LR &\multirow{10}{*}{\textt{B6}}& \textbf{0.654$\pm$0.002} & \multirow{13}{*}{O.O.T.*} & 0.612$\pm$0.002 & 0.564$\pm$0.001 & 0.567$\pm$0.001 & 0.635$\pm$0.001 \\
RF && \textbf{0.978$\pm$0.001} && 0.766$\pm$0.007 & 0.550$\pm$0.003 & 0.529$\pm$0.003 & 0.725$\pm$0.007 \\ 
DT && \textbf{0.909$\pm$0.005} && 0.681$\pm$0.007 & 0.509$\pm$0.002 & 0.504$\pm$0.001 & 0.547$\pm$0.003 \\ 
KNN && \textbf{0.927$\pm$0.005} && 0.716$\pm$0.007 & 0.592$\pm$0.004 & 0.553$\pm$0.004 & 0.678$\pm$0.002 \\ 
MLP && \textbf{0.968$\pm$0.001} && 0.655$\pm$0.006 & 0.600$\pm$0.006 & 0.587$\pm$0.011 & 0.770$\pm$0.002 \\
XGB && \textbf{0.982$\pm$0.001} && 0.801$\pm$0.006 & 0.562$\pm$0.004 & 0.550$\pm$0.002 & 0.783$\pm$0.001 \\ 
LGBM && \textbf{0.982$\pm$0.002} && 0.814$\pm$0.044 & 0.563$\pm$0.011 & 0.558$\pm$0.002 & 0.788$\pm$0.001 \\ 
HGNN && \textbf{0.827$\pm$0.002} && 0.659$\pm$0.003 & 0.659$\pm$0.006 & 0.663$\pm$0.004 & 0.596$\pm$0.003 \\
FHGCN && \textbf{0.825$\pm$0.049} && 0.712$\pm$0.062 & 0.508$\pm$0.020 & 0.507$\pm$0.017 & 0.505$\pm$0.015 \\
UGCN\uppercase\expandafter{\romannumeral2} && \textbf{0.931$\pm$0.002} & ~ & 0.847$\pm$0.006 & 0.856$\pm$0.010 & 0.801$\pm$0.004 & 0.704$\pm$0.005 \\
\rowcolor{Gray} Max && \textbf{0.982$\pm$0.001} && 0.847$\pm$0.006 & 0.856$\pm$0.010 & 0.801$\pm$0.004 & 0.788$\pm$0.001 \\
\rowcolor{Gray} Avg. && \textbf{0.898$\pm$0.104} && 0.726$\pm$0.078 & 0.596$\pm$0.101 & 0.582$\pm$0.089 & 0.673$\pm$0.100 \\
\rowcolor{Gray} Rank Avg. && \textbf{1.000$\pm$0.000} && 2.500$\pm$0.707 & 3.700$\pm$0.823 & 4.400$\pm$0.966 & 3.400$\pm$1.174 \\
\bottomrule
\multicolumn{15}{l}{* O.O.T: out-of-time ($>1$ day).}
\end{tabular}
}%
\end{table*}

\begin{table*}[t!]
  \setlength\tabcolsep{4pt}
  \caption{
  Hyperarc prediction performance (accuracy). We compare six hyperarc feature vectors using ten classifiers, and the best performances are highlighted in bold. Notably, using \our vectors  leads to the best (up to $47\%$ better) performance in most settings, indicating that \ours extract highly informative hyperarc features.
  }
  
  \label{tab:results_all_acc}
  \centering
  \scalebox{0.7}{%
\begin{tabular}{l|c|c|c|c|c|c|c|c|c|c|c|c|c|c}
\toprule
Model & Dataset & \our & h-motif & triad & n2v & h2v & deep-h & Dataset & \our & h-motif & triad & h2v & h2v & deep-h\\
\midrule
LR & \multirow{13}{*}{\textt{MB}} & 0.656$\pm$0.023 & 0.649$\pm$0.015 & 0.549$\pm$0.018 & 0.504$\pm$0.017 & 0.511$\pm$0.012 & \textbf{0.701$\pm$0.014}
& \multirow{13}{*}{\textt{M6}} & 0.656$\pm$0.016 & 0.648$\pm$0.019 & 0.584$\pm$0.011 & 0.506$\pm$0.011 & 0.505$\pm$0.011 & \textbf{0.713$\pm$0.012} \\
RF && 0.690$\pm$0.011 & 0.681$\pm$0.021 & 0.650$\pm$0.009 & 0.531$\pm$0.014 & 0.518$\pm$0.012 & \textbf{0.714$\pm$0.015}
&& 0.698$\pm$0.014 & 0.704$\pm$0.017 & 0.666$\pm$0.010 & 0.532$\pm$0.010 & 0.527$\pm$0.015 & \textbf{0.723$\pm$0.017} \\
DT && \textbf{0.651$\pm$0.009} & 0.632$\pm$0.019 & 0.597$\pm$0.016 & 0.512$\pm$0.014 & 0.506$\pm$0.011 & 0.583$\pm$0.016
&& \textbf{0.656$\pm$0.018} & 0.641$\pm$0.015 & 0.612$\pm$0.009 & 0.509$\pm$0.014 & 0.517$\pm$0.016 & 0.582$\pm$0.014 \\
KNN && 0.696$\pm$0.014 & 0.696$\pm$0.018 & 0.625$\pm$0.014 & 0.537$\pm$0.014 & 0.534$\pm$0.016 & \textbf{0.704$\pm$0.014}
&& 0.682$\pm$0.014 & 0.691$\pm$0.020 & 0.628$\pm$0.019 & 0.520$\pm$0.016 & 0.539$\pm$0.011 & \textbf{0.725$\pm$0.013} \\
MLP && 0.654$\pm$0.011 & 0.646$\pm$0.014 & 0.603$\pm$0.019 & 0.533$\pm$0.008 & 0.537$\pm$0.014 & \textbf{0.705$\pm$0.014}
&& 0.656$\pm$0.012 & 0.653$\pm$0.008 & 0.618$\pm$0.015 & 0.537$\pm$0.009 & 0.539$\pm$0.011 & \textbf{0.710$\pm$0.012} \\
XGB && \textbf{0.708$\pm$0.016} & 0.666$\pm$0.024 & 0.625$\pm$0.015 & 0.514$\pm$0.016 & 0.519$\pm$0.015 & 0.695$\pm$0.015
&& \textbf{0.709$\pm$0.022} & 0.681$\pm$0.017 & 0.657$\pm$0.013 & 0.519$\pm$0.017 & 0.522$\pm$0.013 & 0.703$\pm$0.014 \\
LGBM && 0.697$\pm$0.062 & 0.658$\pm$0.053 & 0.626$\pm$0.063 & 0.527$\pm$0.054 & 0.516$\pm$0.012 & \textbf{0.698$\pm$0.012}
&& \textbf{0.720$\pm$0.057} & 0.663$\pm$0.056 & 0.632$\pm$0.058 & 0.509$\pm$0.050 & 0.530$\pm$0.011 & 0.702$\pm$0.014 \\
HGNN && \textbf{0.549$\pm$0.055} & 0.543$\pm$0.040 & 0.538$\pm$0.063 & 0.483$\pm$0.045 & 0.498$\pm$0.067 & 0.526$\pm$0.057
&& \textbf{0.569$\pm$0.055} & 0.553$\pm$0.036 & 0.550$\pm$0.043 & 0.499$\pm$0.056 & 0.522$\pm$0.037 & 0.545$\pm$0.036 \\
FHGCN && \textbf{0.666$\pm$0.067} & 0.535$\pm$0.051 & 0.555$\pm$0.055 & 0.507$\pm$0.045 & 0.491$\pm$0.059 & 0.538$\pm$0.050
&& \textbf{0.653$\pm$0.071} & 0.532$\pm$0.053 & 0.566$\pm$0.059 & 0.504$\pm$0.045 & 0.498$\pm$0.048 & 0.548$\pm$0.060 \\
UGCN\uppercase\expandafter{\romannumeral2} && 0.618$\pm$0.061 & \textbf{0.625$\pm$0.038} & 0.596$\pm$0.069 & 0.491$\pm$0.050 & 0.494$\pm$0.054 & 0.597$\pm$0.070
&& 0.621$\pm$0.051 & 0.616$\pm$0.048 & 0.610$\pm$0.050 & 0.518$\pm$0.047 & 0.514$\pm$0.034 & \textbf{0.633$\pm$0.035} \\
\rowcolor{Gray} Max && 0.708$\pm$0.016 & 0.696$\pm$0.018 & 0.650$\pm$0.009 & 0.537$\pm$0.014 & 0.537$\pm$0.014 & \textbf{0.714$\pm$0.015}
&& 0.720$\pm$0.057 & 0.704$\pm$0.017 & 0.666$\pm$0.010 & 0.537$\pm$0.009 & 0.539$\pm$0.011 & \textbf{0.725$\pm$0.013} \\
\rowcolor{Gray} Avg. && \textbf{0.659$\pm$0.047} & 0.633$\pm$0.054 & 0.596$\pm$0.038 & 0.514$\pm$0.018 & 0.512$\pm$0.016 & 0.646$\pm$0.076 
&& \textbf{0.662$\pm$0.044} & 0.638$\pm$0.057 & 0.612$\pm$0.037 & 0.515$\pm$0.012 & 0.521$\pm$0.013 & 0.658$\pm$0.074\\
\rowcolor{Gray} Rank Avg. && \textbf{1.600$\pm$0.516} & 2.700$\pm$0.823 & 3.600$\pm$0.699 & 5.500$\pm$0.527 & 5.500$\pm$0.527 & 2.100$\pm$1.287  
&& \textbf{1.700$\pm$0.823} & 2.700$\pm$0.675 & 3.600$\pm$0.699 & 5.600$\pm$0.516 & 5.400$\pm$0.516 & 2.000$\pm$1.247  \\
\midrule
\midrule 
LR &\multirow{13}{*}{\textt{EN}}& \textbf{0.804$\pm$0.014} & 0.752$\pm$0.011 & 0.732$\pm$0.017 & 0.578$\pm$0.017 & 0.492$\pm$0.012 & 0.590$\pm$0.018 
& \multirow{13}{*}{\textt{EU}}& \textbf{0.869$\pm$0.001} & 0.776$\pm$0.005 & 0.837$\pm$0.004 & 0.618$\pm$0.008 & 0.496$\pm$0.003 & 0.659$\pm$0.006 \\
RF && \textbf{0.796$\pm$0.013} & 0.773$\pm$0.017 & 0.712$\pm$0.023 & 0.626$\pm$0.024 & 0.562$\pm$0.024 & 0.592$\pm$0.022
&& \textbf{0.907$\pm$0.003} & 0.838$\pm$0.004 & 0.839$\pm$0.003 & 0.652$\pm$0.003 & 0.515$\pm$0.002 & 0.668$\pm$0.005 \\
DT && \textbf{0.705$\pm$0.011} & 0.689$\pm$0.018 & 0.654$\pm$0.020 & 0.551$\pm$0.022 & 0.528$\pm$0.018 & 0.542$\pm$0.021
&& \textbf{0.849$\pm$0.003} & 0.761$\pm$0.005 & 0.787$\pm$0.005 & 0.546$\pm$0.004 & 0.504$\pm$0.007 & 0.564$\pm$0.007 \\
KNN && \textbf{0.778$\pm$0.014} & 0.737$\pm$0.016 & 0.694$\pm$0.019 & 0.636$\pm$0.020 & 0.571$\pm$0.017 & 0.567$\pm$0.022
&& \textbf{0.875$\pm$0.002} & 0.780$\pm$0.005 & 0.838$\pm$0.005 & 0.573$\pm$0.002 & 0.556$\pm$0.012 & 0.677$\pm$0.003  \\
MLP && \textbf{0.805$\pm$0.014} & 0.751$\pm$0.011 & 0.731$\pm$0.013 & 0.639$\pm$0.021 & 0.551$\pm$0.018 & 0.588$\pm$0.023
&& \textbf{0.906$\pm$0.003} & 0.821$\pm$0.007 & 0.857$\pm$0.005 & 0.660$\pm$0.016 & 0.507$\pm$0.005 & 0.675$\pm$0.011 \\
XGB && \textbf{0.775$\pm$0.018} & 0.763$\pm$0.014 & 0.709$\pm$0.026 & 0.614$\pm$0.020 & 0.579$\pm$0.018 & 0.577$\pm$0.020
&& \textbf{0.903$\pm$0.003} & 0.831$\pm$0.005 & 0.854$\pm$0.005 & 0.654$\pm$0.005 & 0.522$\pm$0.005 & 0.656$\pm$0.003\\
LGBM && 0.756$\pm$0.059 & \textbf{0.763$\pm$0.056} & 0.709$\pm$0.060 & 0.609$\pm$0.064 & 0.580$\pm$0.018 & 0.581$\pm$0.019
&& \textbf{0.906$\pm$0.010} & 0.839$\pm$0.010 & 0.856$\pm$0.011 & 0.645$\pm$0.027 & 0.512$\pm$0.003 & 0.645$\pm$0.005 \\
HGNN && 0.499$\pm$0.049 & \textbf{0.543$\pm$0.063} & 0.538$\pm$0.074 & 0.513$\pm$0.055 & 0.526$\pm$0.048 & 0.512$\pm$0.058
&& \textbf{0.529$\pm$0.020} & 0.523$\pm$0.020 & 0.520$\pm$0.015 & 0.512$\pm$0.018 & 0.505$\pm$0.014 & 0.513$\pm$0.017 \\
FHGCN && 0.693$\pm$0.117 & 0.651$\pm$0.090 & \textbf{0.703$\pm$0.101} & 0.536$\pm$0.061 & 0.566$\pm$0.076 & 0.550$\pm$0.069
&& 0.742$\pm$0.072 & 0.638$\pm$0.060 & \textbf{0.790$\pm$0.136} & 0.512$\pm$0.020 & 0.519$\pm$0.038 & 0.547$\pm$0.054 \\
UGCN\uppercase\expandafter{\romannumeral2} && 0.710$\pm$0.050 & 0.708$\pm$0.065 & \textbf{0.727$\pm$0.046} & 0.673$\pm$0.045 & 0.689$\pm$0.055 & 0.582$\pm$0.051
&& 0.783$\pm$0.013 & 0.726$\pm$0.009 & \textbf{0.859$\pm$0.008} & 0.724$\pm$0.014 & 0.740$\pm$0.020 & 0.706$\pm$0.013 \\
\rowcolor{Gray} Max && \textbf{0.805$\pm$0.014} & 0.773$\pm$0.017 & 0.732$\pm$0.017 & 0.673$\pm$0.045 & 0.689$\pm$0.055 &  0.592$\pm$0.022
&& \textbf{0.907$\pm$0.003} & 0.839$\pm$0.010 & 0.859$\pm$0.008 & 0.724$\pm$0.014 & 0.740$\pm$0.020 & 0.706$\pm$0.013 \\
\rowcolor{Gray} Avg. && \textbf{0.732$\pm$0.092} & 0.713$\pm$0.071 & 0.691$\pm$0.058 & 0.598$\pm$0.051 & 0.564$\pm$0.052 & 0.568$\pm$0.026 
&& \textbf{0.827$\pm$0.119} & 0.753$\pm$0.102 & 0.804$\pm$0.103 & 0.610$\pm$0.071 & 0.538$\pm$0.073 & 0.631$\pm$0.065\\
\rowcolor{Gray} Rank Avg. && \textbf{1.800$\pm$1.549} & 2.000$\pm$0.667 & 2.500$\pm$0.850 & 4.400$\pm$0.699 & 5.100$\pm$1.101 & 5.200$\pm$0.632  
&& \textbf{1.200$\pm$0.422} & 3.000$\pm$0.471 & 1.900$\pm$0.568 & 5.100$\pm$0.316 & 5.600$\pm$0.966 & 4.200$\pm$0.632  \\
\midrule
\midrule
LR &\multirow{13}{*}{\textt{CD}}& \textbf{0.921$\pm$0.004} & 0.751$\pm$0.009 & 0.602$\pm$0.004 & 0.527$\pm$0.004 & 0.504$\pm$0.002 & 0.593$\pm$0.007 & 
\multirow{13}{*}{\textt{CS}} & \textbf{0.919$\pm$0.002} & 0.767$\pm$0.002 & 0.662$\pm$0.005 & 0.541$\pm$0.005 & 0.508$\pm$0.005 & 0.625$\pm$0.007 \\
RF && \textbf{0.977$\pm$0.001} & 0.855$\pm$0.004 & 0.644$\pm$0.007 & 0.548$\pm$0.003 & 0.500$\pm$0.003 & 0.599$\pm$0.005
&& \textbf{0.984$\pm$0.001} & 0.866$\pm$0.006 & 0.702$\pm$0.006 & 0.568$\pm$0.005 & 0.501$\pm$0.003 & 0.621$\pm$0.005 \\
DT && \textbf{0.963$\pm$0.001} & 0.777$\pm$0.007 & 0.583$\pm$0.006 & 0.511$\pm$0.004 & 0.497$\pm$0.003 & 0.539$\pm$0.004 
&& \textbf{0.974$\pm$0.001} & 0.783$\pm$0.011 & 0.623$\pm$0.004 & 0.519$\pm$0.003 & 0.498$\pm$0.004 & 0.548$\pm$0.002  \\
KNN && \textbf{0.917$\pm$0.003} & 0.787$\pm$0.005 & 0.594$\pm$0.007 & 0.558$\pm$0.004 & 0.514$\pm$0.006 & 0.592$\pm$0.003
&& \textbf{0.941$\pm$0.001} & 0.834$\pm$0.002 & 0.652$\pm$0.004 & 0.578$\pm$0.003 & 0.521$\pm$0.008 & 0.606$\pm$0.001  \\
MLP && \textbf{0.969$\pm$0.001} & 0.822$\pm$0.008 & 0.637$\pm$0.008 & 0.545$\pm$0.006 & 0.502$\pm$0.004 & 0.633$\pm$0.008
&& \textbf{0.980$\pm$0.001} & 0.844$\pm$0.008 & 0.699$\pm$0.006 & 0.575$\pm$0.009 & 0.512$\pm$0.006 & 0.652$\pm$0.012 \\
XGB && \textbf{0.975$\pm$0.001} & 0.843$\pm$0.005 & 0.639$\pm$0.007 & 0.547$\pm$0.004 & 0.505$\pm$0.004 & 0.618$\pm$0.006
&& \textbf{0.984$\pm$0.001} & 0.850$\pm$0.009 & 0.704$\pm$0.006 & 0.562$\pm$0.006  & 0.506$\pm$0.003 & 0.636$\pm$0.006 \\
LGBM && \textbf{0.977$\pm$0.004} & 0.849$\pm$0.020 & 0.652$\pm$0.022 & 0.546$\pm$0.020 & 0.504$\pm$0.003 & 0.601$\pm$0.006
&& \textbf{0.984$\pm$0.002} & 0.860$\pm$0.032 & 0.713$\pm$0.016 & 0.560$\pm$0.028 & 0.502$\pm$0.002 & 0.622$\pm$0.007 \\
HGNN && \textbf{0.595$\pm$0.009} & 0.543$\pm$0.015 & 0.534$\pm$0.013 & 0.542$\pm$0.012 & 0.535$\pm$0.017 & 0.519$\pm$0.013
&& 0.555$\pm$0.007 & 0.534$\pm$0.011 & 0.529$\pm$0.008 & 0.553$\pm$0.009 & \textbf{0.568$\pm$0.009} & 0.521$\pm$0.006 \\
FHGCN && \textbf{0.754$\pm$0.091} & 0.597$\pm$0.087 & 0.556$\pm$0.067 & 0.505$\pm$0.015 & 0.504$\pm$0.013 & 0.502$\pm$0.008
&& \textbf{0.738$\pm$0.098} & 0.609$\pm$0.086 & 0.594$\pm$0.089 & 0.503$\pm$0.013 & 0.502$\pm$0.009 & 0.503$\pm$0.009 \\
UGCN\uppercase\expandafter{\romannumeral2} && \textbf{0.932$\pm$0.005} & 0.798$\pm$0.013 & 0.657$\pm$0.012 & 0.769$\pm$0.016 & 0.630$\pm$0.028 & 0.541$\pm$0.011
&& \textbf{0.917$\pm$0.006} & 0.739$\pm$0.010 & 0.718$\pm$0.010 & 0.827$\pm$0.016 & 0.823$\pm$0.009 & 0.578$\pm$0.012 \\
\rowcolor{Gray} Max && \textbf{0.977$\pm$0.001} & 0.855$\pm$0.004 & 0.657$\pm$0.012 & 0.769$\pm$0.016 & 0.630$\pm$0.028 & 0.633$\pm$0.008
&& \textbf{0.984$\pm$0.002} & 0.866$\pm$0.006 & 0.718$\pm$0.010 & 0.827$\pm$0.016 & 0.823$\pm$0.009 & 0.652$\pm$0.012 \\
\rowcolor{Gray} Avg. && \textbf{0.898$\pm$0.126} & 0.762$\pm$0.107 & 0.610$\pm$0.043 & 0.560$\pm$0.075 & 0.520$\pm$0.040 & 0.574$\pm$0.045
&& \textbf{0.898$\pm$0.142} & 0.769$\pm$0.114 & 0.660$\pm$0.062 & 0.579$\pm$0.091 & 0.544$\pm$0.100 & 0.591$\pm$0.051 \\
\rowcolor{Gray} Rank Avg. && \textbf{1.000$\pm$0.000} & 2.000$\pm$0.000 & 3.300$\pm$0.675 & 4.500$\pm$0.850 & 5.600$\pm$0.699 & 4.600$\pm$0.966  
&& \textbf{1.100$\pm$0.316} & 2.400$\pm$0.843 & 3.400$\pm$0.843 & 4.400$\pm$1.075 & 5.200$\pm$1.751 & 4.500$\pm$0.850  \\
\midrule
\midrule
LR &\multirow{13}{*}{\textt{QM}}& \textbf{0.604$\pm$0.003} & 0.579$\pm$0.004 & 0.553$\pm$0.003 & 0.500$\pm$0.001 & 0.504$\pm$0.001 & 0.566$\pm$0.002 & 
\multirow{13}{*}{\textt{QS}} & \textbf{0.561$\pm$0.002} & 0.533$\pm$0.003 & 0.530$\pm$0.003 & 0.502$\pm$0.001 & 0.509$\pm$0.002 & 0.556$\pm$0.002 \\
RF && \textbf{0.673$\pm$0.003} & 0.613$\pm$0.005 & 0.620$\pm$0.004 & 0.503$\pm$0.003 & 0.502$\pm$0.003 & 0.581$\pm$0.005
&& \textbf{0.661$\pm$0.001} & 0.565$\pm$0.002 & 0.590$\pm$0.002 & 0.500$\pm$0.002 & 0.503$\pm$0.001 & 0.565$\pm$0.004\\
DT && \textbf{0.617$\pm$0.003} & 0.547$\pm$0.006 & 0.572$\pm$0.004 & 0.502$\pm$0.001 & 0.502$\pm$0.004 & 0.547$\pm$0.004
&& \textbf{0.626$\pm$0.002} & 0.528$\pm$0.002 & 0.572$\pm$0.003 & 0.500$\pm$0.002 & 0.501$\pm$0.002 & 0.545$\pm$0.002\\
KNN && \textbf{0.601$\pm$0.002} & 0.576$\pm$0.003 & 0.553$\pm$0.003 & 0.504$\pm$0.002 & 0.506$\pm$0.004 & 0.523$\pm$0.001
&& \textbf{0.619$\pm$0.002} & 0.538$\pm$0.002 & 0.566$\pm$0.002 & 0.504$\pm$0.002 & 0.502$\pm$0.003 & 0.576$\pm$0.001 \\
MLP && \textbf{0.679$\pm$0.002} & 0.598$\pm$0.004 & 0.590$\pm$0.002 & 0.505$\pm$0.002 & 0.503$\pm$0.002 & 0.611$\pm$0.011
&& \textbf{0.668$\pm$0.001} & 0.544$\pm$0.003 & 0.603$\pm$0.003 & 0.509$\pm$0.002 & 0.506$\pm$0.003 & 0.598$\pm$0.013 \\
XGB && \textbf{0.681$\pm$0.002} & 0.607$\pm$0.005 & 0.629$\pm$0.003 & 0.502$\pm$0.002 & 0.508$\pm$0.002 & 0.601$\pm$0.004
&& \textbf{0.679$\pm$0.001} & 0.575$\pm$0.002 & 0.612$\pm$0.003 & 0.503$\pm$0.001 & 0.503$\pm$0.001 & 0.585$\pm$0.005 \\
LGBM && \textbf{0.690$\pm$0.009} & 0.618$\pm$0.019 & 0.641$\pm$0.012 & 0.504$\pm$0.009 & 0.506$\pm$0.002 & 0.577$\pm$0.005
&& \textbf{0.688$\pm$0.004} & 0.585$\pm$0.011 & 0.624$\pm$0.013 & 0.504$\pm$0.006 & 0.503$\pm$0.002 & 0.567$\pm$0.005 \\
HGNN && \textbf{0.549$\pm$0.011} & 0.525$\pm$0.011 & 0.529$\pm$0.008 & 0.520$\pm$0.011 & 0.546$\pm$0.007 & 0.545$\pm$0.008
&& \textbf{0.579$\pm$0.006} & 0.539$\pm$0.005 & 0.546$\pm$0.005 & 0.535$\pm$0.007 & 0.572$\pm$0.006 & 0.571$\pm$0.005 \\
FHGCN && 0.512$\pm$0.022 & 0.503$\pm$0.007 & \textbf{0.516$\pm$0.029} & 0.500$\pm$0.003 & 0.502$\pm$0.008 & 0.507$\pm$0.016
&& \textbf{0.530$\pm$0.041} & 0.505$\pm$0.014 & 0.515$\pm$0.027 & 0.501$\pm$0.003 & 0.501$\pm$0.005 & 0.502$\pm$0.009 \\
UGCN\uppercase\expandafter{\romannumeral2} && 0.607$\pm$0.006 & 0.583$\pm$0.009 & 0.599$\pm$0.010 & 0.615$\pm$0.010 & 0.637$\pm$0.008 & \textbf{0.743$\pm$0.013}
&& 0.645$\pm$0.005 & 0.595$\pm$0.007 & 0.605$\pm$0.005 & 0.563$\pm$0.005 & 0.653$\pm$0.006 & \textbf{0.752$\pm$0.007} \\
\rowcolor{Gray} Max && 0.690$\pm$0.009 & 0.618$\pm$0.019 & 0.641$\pm$0.012 & 0.615$\pm$0.010  & 0.637$\pm$0.008 & \textbf{0.743$\pm$0.013}
&& 0.645$\pm$0.005 & 0.595$\pm$0.007 & 0.624$\pm$0.013 & 0.563$\pm$0.005 & 0.653$\pm$0.006 & \textbf{0.752$\pm$0.007} \\
\rowcolor{Gray} Avg. && \textbf{0.621$\pm$0.060} & 0.575$\pm$0.039 & 0.580$\pm$0.043 & 0.516$\pm$0.035 & 0.522$\pm$0.043 & 0.580$\pm$0.066
&& \textbf{0.626$\pm$0.053} & 0.551$\pm$0.028 & 0.576$\pm$0.037 & 0.512$\pm$0.021 & 0.525$\pm$0.050 & 0.582$\pm$0.065 \\
\rowcolor{Gray} Rank Avg. && \textbf{1.400$\pm$0.966} & 3.400$\pm$1.265 & 2.900$\pm$1.287 & 5.500$\pm$0.972 & 4.600$\pm$1.430 & 3.200$\pm$1.033 
&& \textbf{1.100$\pm$0.316} & 3.900$\pm$0.738 & 2.700$\pm$0.949 & 5.600$\pm$0.516 & 4.900$\pm$1.370 & 2.800$\pm$0.919  \\
\midrule
\midrule
LR &\multirow{13}{*}{\textt{B4}}& \textbf{0.626$\pm$0.004} & \multirow{13}{*}{O.O.T.*} & 0.524$\pm$0.000 & 0.527$\pm$0.006 & 0.527$\pm$0.004 & 0.580$\pm$0.004 & 
\multirow{13}{*}{\textt{B5}} & 0.566$\pm$0.000 & \multirow{13}{*}{O.O.T.*} & 0.539$\pm$0.001 & 0.528$\pm$0.008 & 0.532$\pm$0.004 & \textbf{0.569$\pm$0.006} \\
RF && \textbf{0.927$\pm$0.002} && 0.731$\pm$0.008 & 0.540$\pm$0.002 & 0.517$\pm$0.002 & 0.608$\pm$0.015
&& \textbf{0.926$\pm$0.004} && 0.686$\pm$0.012 & 0.552$\pm$0.002 & 0.520$\pm$0.002 & 0.598$\pm$0.010 \\
DT && \textbf{0.896$\pm$0.002} && 0.703$\pm$0.007 & 0.507$\pm$0.003 & 0.503$\pm$0.001 & 0.549$\pm$0.004
&& \textbf{0.889$\pm$0.006} && 0.664$\pm$0.012 & 0.512$\pm$0.002 & 0.504$\pm$0.001 & 0.543$\pm$0.003\\
KNN && \textbf{0.838$\pm$0.003} && 0.699$\pm$0.003 & 0.577$\pm$0.003 & 0.537$\pm$0.002 & 0.639$\pm$0.001
&& \textbf{0.839$\pm$0.008} && 0.658$\pm$0.010 & 0.592$\pm$0.002 & 0.539$\pm$0.003 & 0.628$\pm$0.001 \\
MLP && \textbf{0.922$\pm$0.000} && 0.612$\pm$0.011 & 0.555$\pm$0.011 & 0.538$\pm$0.005 & 0.629$\pm$0.013
&& \textbf{0.920$\pm$0.003} && 0.635$\pm$0.009 & 0.571$\pm$0.007 & 0.553$\pm$0.007 & 0.619$\pm$0.009 \\
XGB && \textbf{0.935$\pm$0.001} && 0.760$\pm$0.010 & 0.554$\pm$0.003 & 0.520$\pm$0.004 & 0.615$\pm$0.012
&& \textbf{0.939$\pm$0.002} && 0.721$\pm$0.014 & 0.571$\pm$0.003 & 0.519$\pm$0.003 & 0.606$\pm$0.007 \\
LGBM && \textbf{0.935$\pm$0.002} && 0.765$\pm$0.036 & 0.560$\pm$0.018 & 0.526$\pm$0.005 & 0.606$\pm$0.013
&& \textbf{0.941$\pm$0.003} && 0.725$\pm$0.038 & 0.574$\pm$0.019 & 0.526$\pm$0.004 & 0.596$\pm$0.008 \\
HGNN && \textbf{0.727$\pm$0.003} && 0.633$\pm$0.003 & 0.632$\pm$0.004 & 0.628$\pm$0.003 & 0.600$\pm$0.003
&& \textbf{0.750$\pm$0.003} && 0.636$\pm$0.003 & 0.641$\pm$0.006 & 0.630$\pm$0.004 & 0.603$\pm$0.002 \\
FHGCN && \textbf{0.705$\pm$0.103} && 0.623$\pm$0.090 & 0.504$\pm$0.009 & 0.504$\pm$0.010 & 0.505$\pm$0.009
&& \textbf{0.699$\pm$0.112} && 0.599$\pm$0.098 & 0.506$\pm$0.013 & 0.503$\pm$0.008 & 0.503$\pm$0.006 \\
UGCN\uppercase\expandafter{\romannumeral2} && \textbf{0.845$\pm$0.003} && 0.771$\pm$0.003 & 0.773$\pm$0.008 & 0.711$\pm$0.005 & 0.652$\pm$0.003
&& \textbf{0.858$\pm$0.002} && 0.773$\pm$0.005 & 0.782$\pm$0.005 & 0.717$\pm$0.003 & 0.651$\pm$0.003 \\
\rowcolor{Gray} Max && \textbf{0.935$\pm$0.002} && 0.771$\pm$0.003 & 0.773$\pm$0.008 & 0.711$\pm$0.005 & 0.652$\pm$0.003
&& \textbf{0.941$\pm$0.003} && 0.773$\pm$0.005 & 0.782$\pm$0.005 & 0.717$\pm$0.003 & 0.651$\pm$0.003 \\
\rowcolor{Gray} Avg. && \textbf{0.836$\pm$0.112} && 0.682$\pm$0.081 & 0.573$\pm$0.079 & 0.551$\pm$0.066 & 0.598$\pm$0.044 
&& \textbf{0.833$\pm$0.124} && 0.664$\pm$0.067 & 0.583$\pm$0.081 & 0.554$\pm$0.068 & 0.592$\pm$0.043\\
\rowcolor{Gray} Rank Avg. && \textbf{1.000$\pm$0.000} && 2.500$\pm$0.972 & 3.700$\pm$0.675 & 4.600$\pm$0.699 & 3.200$\pm$1.033  
&& \textbf{1.100$\pm$0.316} && 2.300$\pm$0.483 & 3.600$\pm$0.966 & 4.700$\pm$0.483 & 3.300$\pm$1.160  \\
\midrule
LR &\multirow{13}{*}{\textt{B6}}& 0.556$\pm$0.001 & \multirow{13}{*}{O.O.T.*} & 0.549$\pm$0.005 & 0.527$\pm$0.006 & 0.530$\pm$0.005 & \textbf{0.569$\pm$0.008} \\
RF && \textbf{0.934$\pm$0.002} && 0.694$\pm$0.005 & 0.529$\pm$0.003 & 0.521$\pm$0.002 & 0.609$\pm$0.013  \\
DT && \textbf{0.907$\pm$0.005} && 0.671$\pm$0.006 & 0.509$\pm$0.002  & 0.504$\pm$0.001 & 0.547$\pm$0.003\\
KNN && \textbf{0.866$\pm$0.006} && 0.668$\pm$0.006 & 0.567$\pm$0.003  & 0.536$\pm$0.003 & 0.626$\pm$0.001 \\
MLP && \textbf{0.923$\pm$0.001} && 0.591$\pm$0.008 & 0.549$\pm$0.005 & 0.545$\pm$0.006 & 0.630$\pm$0.015 \\
XGB && \textbf{0.938$\pm$0.002} && 0.722$\pm$0.005 & 0.544$\pm$0.002 & 0.522$\pm$0.003 & 0.621$\pm$0.010 \\
LGBM && \textbf{0.937$\pm$0.003} && 0.733$\pm$0.041 & 0.551$\pm$0.016 & 0.529$\pm$0.004 & 0.610$\pm$0.010 \\
HGNN && \textbf{0.749$\pm$0.003} && 0.633$\pm$0.003 & 0.627$\pm$0.005 & 0.629$\pm$0.003 & 0.597$\pm$0.004 \\
FHGCN && \textbf{0.712$\pm$0.110} && 0.608$\pm$0.100 & 0.504$\pm$0.011 & 0.504$\pm$0.010 & 0.504$\pm$0.011 \\
UGCN\uppercase\expandafter{\romannumeral2} && \textbf{0.854$\pm$0.003} && 0.762$\pm$0.005 & 0.771$\pm$0.010 & 0.720$\pm$0.004 & 0.646$\pm$0.004 \\
\rowcolor{Gray} Max && \textbf{0.938$\pm$0.002} && 0.762$\pm$0.005 & 0.771$\pm$0.010 & 0.720$\pm$0.004 & 0.646$\pm$0.004 \\
\rowcolor{Gray} Avg. && \textbf{0.838$\pm$0.127} && 0.663$\pm$0.068 & 0.568$\pm$0.079 & 0.554$\pm$0.068 & 0.596$\pm$0.044 \\
\rowcolor{Gray} Rank Avg. && \textbf{1.100$\pm$0.316} && 2.300$\pm$0.483 & 4.000$\pm$0.816 & 4.400$\pm$0.843 & 3.200$\pm$1.229  \\
\bottomrule
\multicolumn{15}{l}{* O.O.T: out-of-time ($>1$ day).}
\end{tabular}
}%
\end{table*}

\bibliographystyle{IEEEtran}
\bibliography{bib.bib}
\clearpage